\keywords{conservativity, coherence, strictification, dependent type theory, homotopy type theory, definitional equalities, strict equalities}
\newcommand{\renewtheorem}[1]{%
  \expandafter\let\csname #1\endcsname\relax
  \expandafter\let\csname c@#1\endcsname\relax
  \expandafter\let\csname end#1\endcsname\relax
  \newtheorem{#1}%
}
\theoremstyle{plain}
\theoremstyle{definition}
\theoremstyle{plain}\newtheorem*{thm*}{Theorem}
\theoremstyle{plain}\newtheorem*{conj*}{Conjecture}
\theoremstyle{definition}\newtheorem{con}[thm]{Construction}
\theoremstyle{definition}\newtheorem*{exas*}{Examples}
\definecolor{RedOrange}{HTML}{F26035}
\newcommand{\reptype}{{\mathsf{type}_{\rep}}}
\newcommand{\repty}{{\mathsf{ty}_{\rep}}}
\newcommand{\GenTy}{\mathsf{GenTy}}
\newcommand{\GenRepTy}{\mathsf{GenRepTy}}
\newcommand{\lexO}{\Sigma,\Eq}
\newcommand{\repO}{\Sigma,\Pi_{\rep},\Eq}
\newcommand{\lexpre}{\Sigma}
\newcommand{\reppre}{\Sigma,\Pi_{\rep}}
\newcommand{\lccpre}{\Sigma,\Pi}
\newcommand{\lexinfty}{\Sigma,{\Id}}
\newcommand{\repinfty}{\Sigma,\Pi_{\rep},{\Id}}
\newcommand{\lccinfty}{\Sigma,\Pi,{\Id}}
\newcommand{\inner}[1]{{\color{RedOrange}{\mathbf{#1}}}}
\newcommand{\innerSym}[1]{{\color{RedOrange}{\bm{#1}}}}
\newcommand{\iS}{\inner{S}}
\newcommand{\iTy}{\inner{Ty}}
\newcommand{\iTm}{\inner{Tm}}
\newcommand{\icong}{\mathrel{\innerSym{\cong}}}
\newcommand{\iid}{\inner{iid}}
\newcommand{\iId}{\inner{Id}}
\newcommand{\irefl}{\inner{refl}}
\newcommand{\iJ}{\inner{J}}
\newcommand{\iJb}{\inner{J_{\mathnormal{\beta}}}}
\newcommand{\iOb}{\inner{Ob}}
\newcommand{\iHom}{\inner{Hom}}
\newcommand{\iEqHom}{\inner{EqHom}}
\renewcommand{\iid}{\inner{id}}
\newcommand{\icomp}{\inner{comp}}
\newcommand{\icirc}{\mathbin{\innerSym{\circ}}}
\newcommand{\iotimes}{\mathbin{\innerSym{\otimes}}}
\newcommand{\iI}{\inner{I}}
\newcommand{\Th}{\CT}
\newcommand{\CMonCat}{\mathbf{MonCat}}
\newcommand{\CStrMonCat}{\mathbf{StrMonCat}}
\newcommand{\CTele}{\mathbf{Tele}}
\newcommand{\Free}{\mathsf{Free}}
\newcommand{\CSetoid}{\mathbf{Setoid}}
\newcommand{\fib}{\mathsf{fib}}
\newcommand{\CFam}{\mathbf{Fam}}
\newcommand{\idToHpty}{{\mathsf{id}\text{-}\mathsf{to}\text{-}\mathsf{hpty}}}
\newcommand{\Ehat}{{\widehat{E}}}
\newcommand{\ECat}{{E\text{-}\CCat}}
\newcommand{\hrefl}{\mathsf{hrefl}}
\newcommand{\El}{\mathsf{El}}
\author{Rafaël Bocquet}
\address{Department of Programming Languages and Compilers, Eötvös Loránd University, Budapest, Hungary}
\email{bocquet@inf.elte.hu}
\urladdr{\url{https://rafaelbocquet.gitlab.io/}}
\thanks{The author was supported by the European Union, co-financed by the European Social Fund (EFOP-3.6.3-VEKOP-16-2017-00002).}
\title{Towards coherence theorems for equational extensions of type theories}
\date{\today}
\begin{document}

\begin{abstract}
  We study the conservativity of extensions by additional strict equalities of dependent type theories (and more general second-order generalized algebraic theories).
  The conservativity of Extensional Type Theory over Intensional Type Theory was proven by Hofmann.
  Our goal is to generalize such results to type theories without the Uniqueness of Identity Proofs principles, such as variants of Homotopy Type Theory.

  For this purpose, we construct what is essentially the $\infty$-congruence on the base theory that is freely generated by the considered equations.
  This induces a factorization of any equational extension, whose two factors can be studied independently.
  We conjecture that the first factor is always an equivalence when the base theory is well-behaved.
  We prove that the second factor is an equivalence when the $\infty$-congruence is $0$-truncated.
\end{abstract}

\maketitle

\section{Introduction}

Equality and computation are central components of type theories.
The computational content of a type theory is presented by computation rules (often called $\beta$-rules, or $\iota$-rules for inductive types), and perhaps uniqueness rules (usually called $\eta$-rules) or more exotic rules (such as the $\nu$-rules considered in \cite{NewEqsNeutrals}).
This computational content is typically explained by the means of a normalization algorithm.
In presence of identity types, there is a distinction between two notions of equality or identification between terms of a type theory.
Internally, the identity types provide the notion of \emph{internal identification}, also called \emph{propositional equality} or \emph{typal equality}.
Externally, we can also compare terms up to \emph{strict equality}, which is the proof-irrelevant equality of our metatheory.
Strict equality is also often called \emph{definitional} or \emph{judgemental} equality.

When working internally to a type theory, it is desirable to have as many strict equalities as possible.
Indeed, strict equality can be implicitly and silently coerced over.
On the other hand, internal identifications require explicit transports and coercions, which quickly clutter the terms of the theory.
Conversely, the trade-off is that type theories with additional strict equalities have fewer models, and their semantics are therefore more complicated.

Hofmann proved in \cite{HofmannCons} a conservativity theorem, showing that all internal identifications in a type theory can conservatively be turned into strict equalities, when the base type theory satisfies some principles.
The most important of these principles is the Uniqueness of Identity Proofs (UIP) principle, which states that there is an identification between any two elements of identity types.
A more syntactic proof was later given by Oury~\cite{OuryConservativity} for the calculus of constructions.
Oury's proof had some issues, mainly due to a presentation of the syntax of type theory with too few annotations.
An improvement of Oury's proof and a presentation of this result as a constructive and effective syntactic translation has been given recently by Winterhalter\etal~\cite{ElimRefl}.

Since Hofmann's proof of conservativity, there has been a lot of interest going into the study of type theories with non-trivial higher dimensional content inconsistent with UIP~\cite{GroupoidModel}, and their semantics in homotopy theoretic~\cite{HomotopyTheoreticModels, SimplicialModel} and $\infty$-categorical structures~\cite{LangLexInftyCats}.
For type theories without UIP, strict equalities are even more important, because they are automatically coherent.
Thus having more strict equalities means that we escape not only ``transport hell'', but also ``higher-dimensional transport and coherence hell''.
Conversely, it is in practice much harder to justify strict equalities in many homotopy theoretic models.
Some authors have even considered weak variants of the basic computation rules of type theories.
For instance, weak identity types, whose computation rule only holds up to internal paths, have been introduced by \cite{PropIdTypes}, under the name of propositional identity types.
The path types of cubical type theories~\cite{CTT} also only satisfy weakly the computation rule of identity types.
Other type structures can be weakened similarly, and we can even consider type theories whose computation rules are all expressed by internal identifications instead of strict equalities.
At the level of types and universes, instead of assuming that each type former is strictly classified by some code in the universe, we can ask for them to be classified only up to type equivalence.
These weak Tarski universes have been introduced in \cite{WeakTarski}.
The fact that homotopy type theory with strict univalent universes, rather than weak universes, can be interpreted in every $(\infty,1)$-topos has only been established recently~\cite{InftyToposesUniverses}.

In this setting, we can wonder how type theories with varying amounts of strict equalities can be compared.
More precisely, we wish to know how to establish coherence and strictification theorems, that would allow us, when working internally to a model of a weak type theory, to pretend that it satisfies more strict equalities than it actually does, by replacing it by an equivalent stricter model.
The question of the conservativity of strict identity types over weak identity types has been asked at the TYPES 2017 conference~\cite{WeakJTypes}, motivated by the fact that the path types in cubical type theory only satisfy the elimination principle of weak identity types.
This was also the original motivation for the present paper.

We give some examples of weakenings and extensions of homotopy type theory that ought to be equivalent to standard homotopy type theory.
% We believe that the coherence theorems presented in this paper brings the proofs of these equivalences within reach.
\begin{exas}\label{exas:applications} \hfill
  \begin{itemize}
    \item Weakening the $\beta$ and $\eta$ computation rules of identity types, $\Sigma$-types, inductive types, etc, gives a weaker variant of HoTT.
    \item We can add strict equalities that make the addition on natural numbers into a strictly associative and commutative operation.
      That is, while the inductive definition of $(- + -) : \Nat \to \Nat \to \Nat$ only satisfies the strict equalities $0 + y = y$ and $(\mathsf{suc}\ x) + y = \mathsf{suc}\ (x + y)$, we would add the strict equalities $x + 0 = 0$, $x + (\mathsf{suc}\ y) = \mathsf{suc}\ (x + y)$, $x + y = y + x$, $(x + y) + z = x + (y + z)$, etc.
    \item Similarly, we could make the composition of paths into a strictly associative operation, optionally with strict inverses.
    \item We can extend the theory with a universe of strict proposition~\cite{SProp} that is equivalent to the universe of homotopy propositions.
    \item Similarly, we can extend the theory with universes of definitional categories, definitional rings, etc, that satisfy strictly the equations of the theories of categories, rings, etc.
    \item We can extend the theory with a universe of ``definitionally'' pointed types $\mathsf{DPtType}$, equivalent to the universe of pointed types $\mathsf{PtType} \triangleq (A : \UU) \times A$, with a smash product operation $(- \wedge -) : \mathsf{DPtType} \to \mathsf{DPtType} \to \mathsf{DPtType}$ with more strict equalities than the smash product of $\mathsf{PtType}$.
      This would provide an alternative interpretation of Brunerie's rewriting based method to prove that the smash product is a symmetric monoidal product on pointed types~\cite{BrunerieSmash}.
  \end{itemize}
\end{exas}

Some progress has been made by Isaev in \cite{IsaevMorita}.
In that paper, Isaev defines the notion of Morita equivalence between type theories, and gives some characterizations of that notion.
A first conservativity result in the absence of UIP is also proven, showing that type theories with weak or strict unit types are Morita equivalent.

The constructions by Isaev~\cite{IsaevMS} and Kapulkin and Lumsdaine~\cite{HoThTT,HoInvCwA}, of Quillen model or semi-model structures over the categories of models of type theories, are also extremely relevant for our work.
In particular, as remarked in \cite{HoThTT}, Hofmann's conservativity theorem proves exactly that the morphism $\Init_{\mathsf{ITT}} \to \Init_{\mathsf{ETT}}$ between the initial models of intensional type theory and extensional type theory is a trivial fibration of their semi-model structure.
The weak equivalences of the same semi-model structure correspond to a weaker notion of conservativity than trivial fibrations.
Isaev's definition of Morita equivalence relies on that notion of weak equivalence.

This paper builds on top of the aforementioned work.
While Isaev considers the notion of Morita equivalence for arbitrary morphisms between type theories, we restrict our attention to the equational extensions $\Th \to \Th_E$ of a weak type theory to a strict type theory $\Th_{E}$, by a family of equations $E$, which should hold weakly in $\Th$ and strictly in $\Th_E$.
We then establish sufficient conditions for the theories $\Th$ and $\Th_{E}$ to be Morita equivalent.

The situation can be compared to other well-known coherence theorems, such as Mac Lane's coherence theorem for monoidal categories~\cite{MacLaneCoh}.
They can often be stated in multiple different ways.
For example, here are two related ways to state the coherence theorem for monoidal categories.
\begin{enumerate}
  \item\label{itm:coh_moncat_1} Every (weak) monoidal category is monoidally equivalent to a strict monoidal category.
  \item\label{itm:coh_moncat_2} In a freely generated monoidal category, every diagram made up of associators and unitors commutes.
\end{enumerate}

The statement (\ref{itm:coh_moncat_1}) is generally the one that we want to use: it allows us to work with any weak monoidal category as if it was strict.
The statement (\ref{itm:coh_moncat_2}) is however perhaps easier to prove, because free monoidal categories can be seen as syntactic objects, that are relatively easy to describe explicitly and understand.
See \cite{joyal1991geometry} for a proof of the statement (\ref{itm:coh_moncat_1}) that relies on the statement (\ref{itm:coh_moncat_2}).
In the case of monoidal category, it is actually possible to prove the statement (\ref{itm:coh_moncat_1}) more directly using representation theorems similar to the Yoneda lemma.
This kind of approach does not seem suitable for the coherence theorems that we are interested in.

The main result of this paper is a coherence theorem for type theories that is analogous to the fact the statement (\ref{itm:coh_moncat_1}) can be deduced from the statement (\ref{itm:coh_moncat_2}).
It states that to establish the conservativity of the equational extension $\Th \to \Th_E$, it suffices to check that the $\infty$-congruence over $\Th$ freely generated by the equations of $E$ exists and is $0$-truncated.
This $0$-truncatedness condition encodes the same idea as the fact that every diagram made up of associators and unitors commutes in a freely generated monoidal category.

In general, we work at the level of \emph{second-order generalized algebraic theories} (SOGATs) and their classifying $(\reppre)$-CwFs.
SOGATs correspond to the class of theories classified by representable map categories, which were introduced by Uemura~\cite{GeneralFrameworkTT}.
First-order generalized algebraic theories are also SOGATs, so our results also apply to theories that are not type theories, such as monoidal categories, \etc
A SOGAT $\Th$ is specified by its classifying $(\reppre)$-CwF, also denoted by $\Th$, which can be seen as a syntactic model of some type theory.
A SOGAT has functorial semantics in $(\reppre)$-CwFs.
Even if one is only interested in a specific theory, working with SOGATs explicitly is advantageous.
Indeed, many semantic conditions can instead be stated more concisely and syntactically directly at the level of the $(\reppre)$-CwF $\Th$.

% The author has introduced in~\cite{ExternalUnivalence} the notion of \emph{external univalence} for SOGATs equipped with the data of homotopy relations for every sort.
% For category theory, the homotopy relations are isomorphisms between objects and equality between morphisms.
% For type theory, the homotopy relations are usually equivalences between types and identifications between terms.
% External univalence states that these homotopy relations are well-behaved, in particular they are preserved by all operations of the theory.
% It is closely related to the left semi-model structure of Kapulkin and Lumsdaine~\cite{HoThTT}.
% The data of homotopy relations induces classes of fibrations and weak equivalences on the category of models, and the author conjectured in~\cite{ExternalUnivalence} that external univalence is equivalent to the axioms of a left semi-model structure for these classes of maps.

The notion of $\infty$-congruence over a SOGAT $\Th$ is specified at the level of the classifying $(\reppre)$-CwF.
Informally, a $1$-congruence over $\Th$ is an extension of the set-valued $(\reppre)$-CwF $\Th$ to a setoid-valued $(\reppre)$-CwF; this equips the sets of contexts, substitutions, types and terms of $\Th$ with equivalence relations satisfying some conditions.
Instead, an $\infty$-congruence should be an extension of $\Th$ to a space-valued $(\reppre)$-CwF; this equips all components of $\Th$ with $\infty$-groupoid structures.

Defining $\infty$-congruences require choosing a model of $\infty$-groupoids among many.
Instead of working with \eg Kan simplicial sets, we rely on the fact that $\infty$-groupoids are closely related to models of type theory with identity types.
Instead of working with space valued $(\reppre)$-CwFs, with work with $(\reppre)$-CwFs with identity types, \ie $(\repinfty)$-CwFs.
In the original presentation of this work, this was inspired by Brunerie's type-theoretic definition of $\infty$-groupoid~\cite{BrunerieThesis}.
Another point of view is that models of type theory with identity types correspond to structured $\infty$-categories.
When working with $(\repinfty)$-CwFs, we morally work with representable map $\infty$-categories.
Representable map $\infty$-categories have recently been used by Nguyen and Uemura~\cite{InftyTypeTheories} to give functorial semantics to $\infty$-type theories, generalizing the functorial semantics of $1$-type theories in representable map categories.

We don't give any application of our coherence theorem in this paper, which is instead focused on proving general results that hold for any SOGAT.
Two elements are still missing for applications, discussed in more details in~\cref{sec:discussion}.
First, we need a proof of~\cref{conj:embedding_from_ext_univ}, which essentially states that our construction of the $\infty$-congruence generated by $E$ is meaningful when $\Th$ satisfies external univalence~\cite{ExternalUnivalence}.
Second, we need ways to prove the $0$-truncatedness conditions for concrete theories.
We believe that in most cases, this $0$-truncatedness can be obtained as a consequence of a homotopy normalization proof.
A strict normalization states that every term has a unique normal form; homotopy normalization states instead that every term has a contractible space of normal forms.
Uemura has recently given a general proof scheme for normalization of $\infty$-type theories~\cite{UemuraNormalization}, which can be used to prove such homotopy normalization results.
Because we work with $(\repinfty)$-CwFs rather than $\infty$-type theories, the proof does not directly carry over to our setting, but we believe that the results we need can be proven using similar methods.

\subsection*{Outline of the paper}

In~\cref{sec:background}, we introduce our notations and conventions, and review the notion of Category with Families (CwFs) and associated definitions.
We make use of the internal language of presheaf categories to describe our constructions.
In particular, instead of working with CwFs externally, we often prefer to work with families internally to presheaf categories.

In~\cref{sec:weak_id_types}, we define the structures of weak identity types, and show that these structures can be lifted from a family to its family of telescopes.
This means that these structures are well-behaved even in the absence of $\Sigma$-types.

In~\cref{sec:sogats}, we recall the definition of second-order generalized algebraic theories (SOGATs) and their functorial semantics in $(\reppre)$-CwFs.
We follow the presentation of the author's paper on external univalence for SOGATs~\cite{ExternalUnivalence}.
Examples include the theories of categories, monoidal categories, \etc, as well as type theory with various type structures.
We also recall the notion of external univalence for SOGATs equipped with homotopy relations.
We also define classes of trivial fibrations, fibrations and weak equivalences between models of SOGATs equipped with homotopy relations.

In~\cref{sec:equational_extensions}, we introduce the notion of equational extension $\Th \to \Th_E$ of a SOGAT, where $E$ is a collection of homotopies in $\Th$ that become strict equalities in $\Th_E$.
We discuss some categorical and type-theoretic examples.
We also specify the notion of conservativity an equational extension, by adapting Isaev's definition of Morita equivalence~\cite{IsaevMorita}.

In~\cref{sec:partial_saturation}, we introduce the notion of partial saturation: a $(\repinfty)$-CwF $\CC$ with a morphism $\Th \to \CC$ is partially saturated with respect to a collection of homotopies $E$ when the homotopies in $E$ can be lifted to the identity types of $\CC$.
This generalizes the notion of (full) saturation used in the specification of external univalence.
We then define $\Th_{\Ehat}^\infty$ as the initial object among $(\repinfty)$-CwFs under $\Th$ that are partially saturated with respect to $E$.
It can be viewed as the $\infty$-congruence on $\Th$ that is freely generated by the homotopies of $E$; the elements of its identity types are compositions and higher-dimensional compositions of the lifts of the homotopies in $E$.
Our next step is to consider the following diagram.
\[ \begin{tikzcd}
    \Th
    \ar[rr]
    \ar[dd]
    \ar[rd]
    && \Th_{E}
    \ar[dd, two heads, "{\sim}"]
    \\
    &
    \Th_{\Ehat}
    \ar[ld, two heads, "{\sim}"]
    \ar[ru]
    &
    \\
    \Th^\infty_{\Ehat}
    \ar[rr]
    && \Th^1_{E} \rlap{\ .}
  \end{tikzcd} \]
The model $\Th_{\Ehat}$ is the subCwF of $\Th_{\Ehat}^{\infty}$ spanned by the contexts and types that do not contain identity types.
The CwF $\Th_E^1$ is obtained by adding equality reflection to $\Th_\Ehat^\infty$, or equivalently by adding extensional equality types to $\Th_E$.

To prove that the equational extension $\Th \to \Th_E$ is a Morita equivalence, it then suffices to show that $\Th \to \Th_{\Ehat}$ and $\Th_{\Ehat} \to \Th_E$ are both equivalences.

We view proving that $\Th \to \Th_{\Ehat}$ is an equivalence (or equivalently $\Th \to \Th_{\Ehat}^\infty$, by $2$-out-of-$3$) as a proof that $\Th_{\Ehat}^{\infty}$ is really an $\infty$-congruence.
\begin{conj*}[Simplified statement of~\cref{conj:embedding_from_ext_univ}]
  If $\Th$ satisfies external univalence and is cofibrant, then $\Th \to \Th^\infty_{\Ehat}$ is an equivalence.
  \defiEnd{}
\end{conj*}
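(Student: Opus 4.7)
The plan is to exploit the external univalence structure on $\Th$ to build an auxiliary $(\repinfty)$-CwF under $\Th$ whose identity types are interpreted by the external homotopy relations of $\Th$, and then to factor $\Th \to \Th_{\Ehat}^{\infty}$ through it.

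Concretely, I would first construct a $(\repinfty)$-CwF $\widetilde{\Th}$ under $\Th$ in which, for a type $A$ with terms $a, b$, the identity type $\iId_A(a, b)$ is interpreted by the external homotopy relation at $A$, and $\irefl_a$ by the reflexivity homotopy. The content of external univalence is precisely that these relations assemble into data satisfying the laws of identity types, up to the coherent choices needed to supply $\J$, $\Jbeta$, transport, and higher iterated identity types. I expect cofibrancy of $\Th$ to be what makes these coherent choices available: combined with external univalence it should permit $\widetilde{\Th}$ to be exhibited as a well-defined fibrant extension of $\Th$ in a semi-model structure on $(\repinfty)$-CwFs under $\Th$, with the structural morphism $\Th \to \widetilde{\Th}$ a weak equivalence.

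Since every homotopy in $E$ lifts, by construction, to an inhabitant of the corresponding identity type of $\widetilde{\Th}$, the extension $\Th \to \widetilde{\Th}$ exhibits $\widetilde{\Th}$ as partially saturated with respect to $E$. The universal property of $\Th_{\Ehat}^{\infty}$ as the initial partially-saturated $(\repinfty)$-CwF under $\Th$ then yields a canonical morphism $\Th_{\Ehat}^{\infty} \to \widetilde{\Th}$ commuting with the inclusions from $\Th$. Combined with a $2$-out-of-$3$ argument, this reduces the conjecture to proving that $\Th_{\Ehat}^{\infty} \to \widetilde{\Th}$ is itself a weak equivalence.

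The main obstacle lies in exactly this last step. Heuristically, the map should be an equivalence because $\Th_{\Ehat}^{\infty}$ is generated from $\Th$ by freely adjoining identity-type inhabitants for the homotopies in $E$, and each such inhabitant is sent to a homotopy of $\Th$ that was already there by external univalence; but controlling all the higher identifications generated by iterated identity-type formation in $\Th_{\Ehat}^{\infty}$ and matching them coherently with the external homotopy tower of $\Th$ is subtle. My preferred strategy would be to work inside a semi-model structure on $(\repinfty)$-CwFs under $\Th$, along the lines of Isaev~\cite{IsaevMS} and Kapulkin--Lumsdaine~\cite{HoThTT,HoInvCwA}, in which cofibrancy of $\Th$ makes both $\Th_{\Ehat}^{\infty}$ and $\widetilde{\Th}$ into fibrant replacements of $\Th$, so that any morphism between them under $\Th$ is automatically a weak equivalence. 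A more syntactic alternative would be an induction on the generators of $\Th_{\Ehat}^{\infty}$, using external univalence to peel off each freely adjoined identification in favour of the underlying homotopy in $\Th$, but keeping track of coherence across higher dimensions in that approach appears to be precisely where the cofibrancy hypothesis becomes indispensable.
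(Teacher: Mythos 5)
This statement is \cref{conj:embedding_from_ext_univ}, which the paper explicitly leaves \emph{unproven}: its suggested route (in \cref{sec:discussion}) is a gluing/logical-relations argument analogous to the proof of \cref{prop:the_the1_embedding}, with the identified obstacle being the lack of access to the internal language of an $\infty$-topos. Your proposal takes a different route, and it has two concrete gaps. First, the object $\widetilde{\Th}$ you want to build cannot currently be constructed as a genuine $(\repinfty)$-CwF: external univalence only equips $\Th$ with \emph{weakly stable} identity types (a $(\reppre,\Id_{\mathsf{ws}})$-CwF structure), and the homotopy relations are given sort-by-sort on generating sorts and propagated to general types via joins/telescopes, so strict stability of $\Id$ under substitution fails. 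Strictifying weakly stable identity types into the strictly stable ones required of a $(\repinfty)$-CwF is itself an open coherence problem that the paper flags in \cref{ssec:coherence_problems} (and which also blocks \cref{conj:contractible_extension}). Without an honest $(\repinfty)$-CwF $\widetilde{\Th}$, you cannot invoke the universal property of $\Th_{\Ehat}^\infty$ to get the comparison map, so the whole factorization is unavailable. Saying ``cofibrancy should make the coherent choices available'' names the missing theorem rather than supplying it.

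Second, your closing step is either circular or rests on a false premise. $\Th_{\Ehat}^\infty$ is initial among $(\repinfty)$-CwFs under $\Th$ that are partially saturated \emph{with respect to $E$ only}; it is not fully saturated, hence not a fibrant replacement of $\Th$ in any (still conjectural) semi-model structure whose fibrant objects are the saturated ones. The paper gives an explicit witness: over the context $(\sigma:\partial\iS,\,x:\iS(\sigma),\,p : x \simeq_{\iS(\sigma)} x)$ the homotopy $\idToHpty_\iS(p) \sim \hrefl$ is not generally derivable in $\Th^\infty_{\Ehat}$, whereas it would be in a saturated model. So the ``map between two fibrant replacements under a cofibrant object'' argument does not apply. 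The alternative, deducing that $\Th_{\Ehat}^\infty \to \widetilde{\Th}$ is a weak equivalence by $2$-out-of-$3$ from $\Th \to \widetilde{\Th}$ and $\Th \to \Th_{\Ehat}^\infty$, presupposes exactly the conclusion you are trying to prove. What remains of your proposal after these corrections is essentially a reformulation of the conjecture, not a proof of it.
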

This conjecture will be discussed in~\cref{sec:discussion}.
For most examples, external univalence can be proven using the tools of~\cite{ExternalUnivalence}.

The main theorem of the paper involves the second morphism $\Th_{\Ehat} \to \Th_E$.
\begin{thm*}[Simplified statement of~\cref{thm:main_theorem}]
  If $\Th^\infty_{\Ehat}$ is merely $0$-truncated relatively to $\Th \to \Th^\infty_{\Ehat}$, then $\Th_{\Ehat} \to \Th_E$ is an equivalence.
\end{thm*}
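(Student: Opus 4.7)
The plan is to use the $2$-out-of-$3$ property of weak equivalences applied to the commutative square displayed in the introduction. Both vertical maps $\Th_\Ehat \to \Th^\infty_\Ehat$ and $\Th_E \to \Th^1_E$ are trivial fibrations: the first essentially by construction of $\Th_\Ehat$ as the subCwF of $\Th^\infty_\Ehat$ containing no identity types, and the second since $\Th^1_E$ is obtained from $\Th_E$ by freely adjoining extensional identity types in the presence of the strict equalities of $E$, which is a conservative operation. Commutativity of the square then reduces the theorem to showing that $\Th^\infty_\Ehat \to \Th^1_E$ is a weak equivalence: once this is known, the composite $\Th_\Ehat \to \Th^\infty_\Ehat \to \Th^1_E$ is a weak equivalence, so is the equal composite $\Th_\Ehat \to \Th_E \to \Th^1_E$, and $2$-out-of-$3$ combined with the trivial fibration $\Th_E \to \Th^1_E$ yields that $\Th_\Ehat \to \Th_E$ is itself a weak equivalence.

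The map $\Th^\infty_\Ehat \to \Th^1_E$ imposes equality reflection on the weak identity types of $\Th^\infty_\Ehat$, turning them into extensional identity types. The $0$-truncatedness hypothesis relative to $\Th \to \Th^\infty_\Ehat$ is precisely the SOGAT analogue of the Uniqueness of Identity Proofs principle: any two freely generated identifications between terms coming from $\Th$ are themselves canonically identified. Under this hypothesis the identity types of $\Th^\infty_\Ehat$ behave as a set-theoretic equality, and equality reflection merely identifies terms that already carried a unique internal identification — this is Hofmann's classical conservativity theorem, transposed to the language of $(\repinfty)$-CwFs. I would adapt his argument by producing a homotopy inverse to $\Th^\infty_\Ehat \to \Th^1_E$: each piece of $\Th^1_E$-data should lift uniquely up to internal identification to a piece of $\Th^\infty_\Ehat$-data, and the collection of such lifts assembles into a morphism whose composites with the projection are connected to the identities by the unique homotopies supplied by $0$-truncatedness.

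The principal obstacle is to carry out this argument uniformly at the SOGAT level of generality. In Hofmann's original setting one works with a fixed syntactic presentation of intensional type theory and manipulates derivations directly; here the analogous inductive construction must proceed through the universal properties of the freely generated objects and the internal language of presheaf categories, respecting the structure of $(\repinfty)$-CwFs in a coherent way. The relative $0$-truncation hypothesis is exactly what reduces the infinitely-many higher coherence obligations to existence statements about chosen witnesses, making a one-dimensional Hofmann-style induction feasible in the higher setting.
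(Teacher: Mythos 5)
There is a genuine gap, and it is precisely the one the paper warns about. Your reduction hinges on showing that $\Th^\infty_{\Ehat} \to \Th^1_E$ is a weak equivalence in $\CMod_\Th$, but this map is almost never a weak equivalence: the weak lifting conditions must hold over \emph{all} contexts of $\Th^\infty_{\Ehat}$, including those that mention the freely added identity types and hence do not come from $\Th$. Over a context such as $\Gamma = (\sigma : \partial\iS,\ x : \iS(\sigma),\ p : x \simeq_{\iS(\sigma)} x)$, the theory $\Th^1_E$ proves (by equality reflection) statements like $\idToHpty_\iS(p) \sim \hrefl$ that have no lift, even up to homotopy, in $\Th^\infty_{\Ehat}$. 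The relative $0$-truncatedness hypothesis does not rescue this, because it only controls identifications over contexts in the image of $\Th \to \Th^\infty_{\Ehat}$, whereas the offending contexts lie outside that image. This is exactly why the paper introduces $\Th_{\Ehat}$ as the $\{I^\tm,E^\tm\}$-cofibration/fibration factorization of $\Th \to \Th^\infty_{\Ehat}$ (morally, the sub-CwF whose contexts contain no identity types) and then argues about the map $\Th_{\Ehat} \to \Th_E$ directly, rather than via two-out-of-three through $\Th^1_E$.

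The paper's actual argument is a Hofmann-style retract-of-quotient construction carried out on $\Th_{\Ehat}$: one defines, by induction on the structure of contexts and types, restricted classes of equivalences $\alpha_p$, $\beta_q$ in $\Th^\infty_{\Ehat}$ (built from transport over identifications of boundaries of basic types and closed under $\Unit$, $\Sigma$, $\Pi_\rep$), uses the mere relative $0$-truncatedness to show that any two such parallel equivalences are homotopic (so the induced relations are genuine equivalence relations and are respected by all $(\reppre)$-CwF operations), verifies that the resulting congruence is \emph{fibrant} so that its quotient inclusion is a non-split trivial fibration, and finally exhibits $F : \Th_{\Ehat} \to \Th_E$ as a retract of that quotient inclusion via the universal properties of the quotient and of $\Th_E$. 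Your proposal gestures at ``adapting Hofmann's argument,'' but directs it at the wrong map and omits the congruence/quotient machinery that constitutes the technical core; without replacing the broken reduction, the argument does not go through.
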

The $(\repinfty)$-CwF $\Th^\infty_{\Ehat}$ being $0$-truncated means that all of its types are h-sets.
Being merely $0$-truncated means that we only know the mere existence of the terms witnessing $0$-truncatedness.
Being $0$-truncated relatively to $\Th \to \Th^\infty_{\Ehat}$ means that this is only valid over contexts in the image of $\Th \to \Th^\infty_{\Ehat}$.

In~\cref{sec:fibrant_congruences} we discuss congruences and quotients for models of first-order generalized algebraic theories.
The proof of our main theorem requires the construction of quotients of $(\reppre)$-CwFs by a congruence; a congruence is an extension of a set-valued model to a setoid-valued model.
The categories of models of GATs are cocomplete, so quotients always exists, but they are ill-behaved in general: the quotient inclusions are not always trivial fibrations (surjective on every sort).
We show that this can be fixed by restricting our attention to \emph{fibrant} congruences; a fibrant congruence is an extension of a set-valued model of a setoid-valued model satisfying some fibration condition.
The quotient inclusion of a fibrant congruence is always a trivial fibration, and conversely every trivial fibration is the quotient inclusion of a fibrant congruence.

In~\cref{sec:retracts_quotients} we prove~\cref{thm:main_theorem} by showing that under its assumptions, the map $\Th_{\Ehat} \to \Th_E$ is a retract of some quotient inclusion.
We rely on the theory of fibrant congruences developed in the previous section.

In~\cref{sec:discussion} we discuss some conjectures and the future work that would be required for applications.

%%% Local Variables:
%%% mode: latex
%%% TeX-master: "main"
%%% End:

\section{Background}\label{sec:background}

We recall in this section the semantics of type theories in categories with families (CwFs), and introduce the tools and notations that we will use in this paper.
We will make use in particular of the internal type-theoretic language of the presheaf category $\CPsh(\CC)$ as a tool to define and work with type-theoretic structures over a CwF $\CC$.

\subsection{Metatheory and basic notations}

We work in a constructive metatheory, say extensional type theory with enough universes and all QIITs.
In some cases we will need distinguish split and non-split surjective functions.
We adopt terminology from the HoTT book; we use the adverb \emph{merely} to indicate that we have an element of the propositional truncation of some set.
For example, a function $f : A \to B$ is a split surjection when for every $b : B$ there is some element $a : A$ such that $f(a) = b$.
A function $f : A \to B$ is a non-split surjection when f or every $b : B$ there merely exists some $a : A$ such that $f(a) = b$.

The sets of dependent function are written $(a : A) \to B(a)$, and dependent functions are introduced by $(a : A) \mapsto b(a)$.
We sometimes omit function arguments when they can be inferred from the context.
For instance, given a function $f : (a : A) \to B(a) \to C(a)$, and elements $a : A$ and $b : B(a)$, we may just write $f(b)$ or $f_{a}(b)$ instead of $f(a,b)$.

The sets of dependent pairs are written $(a : A) \times B(a)$, and dependent pairs are introduced by $(a,b)$.

We make use of the interpretation of extensional type theory into presheaf toposes.
Because our metatheory is constructive, any of our external constructions can also be interpreted in presheaf toposes.

\subsection{Families}

\begin{defi}
  A \defemph{family} $\MC$ consists of a set $\MC.\Ty : \SSet$ of \emph{types} and a dependent set $\MC.\Tm : \MC.\Ty \to \SSet$ of \emph{terms} or \emph{elements}.
  We often let $\MC$ stand for $\MC.\Ty$, and use $\MC.\Tm$ as an implicit coercion from types to sets.
  \defiEnd{}
\end{defi}

There is a category $\CFam$ of families and family morphisms. It is equivalent to the arrow category $\CSet^{\to}$.

\begin{defi}
  A \defemph{restriction} of a family $\MC$ is a map $\iota : \Ty' \to \MC.\Ty$.
  It induces a \defemph{restricted family} $\MC'$, with $\MC'.\Tm(A) = \MC.\Tm(\iota(A))$.
  There is an induced family morphism $\MC' \to \MC$.
  We say that a family morphism is \defemph{cartesian} if it induced by a family restriction, or equivalently if its action on terms is bijective.
  \defiEnd{}
\end{defi}

\subsection{Type-structures}

\begin{defi}
  A \defemph{$\Unit$-type structure} on a family $\MC$ consists of a type $\Unit : \MC$ and an isomorphism $\MC.\Tm(\Unit) \cong \{\star\}$.
  \defiEnd{}
\end{defi}

\begin{defi}
  A \defemph{$\Sigma$-type structure} on a family $\MC$ consists, for every $A : \MC$ and $B : A \to \MC$, of a type $\Sigma(A,B) : \MC$ and an isomorphism $\MC.\Tm(\Sigma(A,B)) \cong (a:A)\times B(a)$.
  \defiEnd{}
\end{defi}

\begin{defi}
  A \defemph{$\Pi$-type structure} on a family $\MC$ consists, for every $A : \MC$ and $B : A \to \MC$, of a type $\Pi(A,B) : \MC$ and an isomorphism $\MC.\Tm(\Pi(A,B)) \cong ((a:A)\to B(a))$.
  \defiEnd{}
\end{defi}

It is often the case that a family is not closed under arbitrary $\Pi$-types, but only under $\Pi$-types whose domains lie in another family.
\begin{defi}
  Let $\MC$ and $\MD$ be two families.
  A \defemph{$\Pi$-type structure} on $\MD$ with \defemph{arities} in $\MC$ consist, for every $A : \MC$ and $B : A \to \MD$, of a type $\Pi(A,B) : \MD$ and an isomorphism $\MD.\Tm(\Pi(A,B)) \cong ((a:A)\to B(a))$.
  \defiEnd{}
\end{defi}

\begin{defi}
  Let $\MC$ be a family, along with a restriction $\MC_\rep \to \MC$.
  A \defemph{$\Pi_\rep$-type structure} on $\MC$ consists of a $\Pi$-type structure on $\MC$ with arities in $\MC_\rep$.
  \defiEnd{}
\end{defi}

The types of $\MC_\rep$ are called representable types.
The restriction $\MC_\rep \to \MC$ is not necessarily injective, but we can often pretend that it is injective and omit it.

\begin{defi}
  A $(\reppre)$-family is a family $\MC$, along with a restriction $\MC_\rep \to \MC$, such that both $\MC$ and $\MC_\rep$ have $\Unit$- and $\Sigma$- type structures, and such that $\MC$ has a $\Pi_\rep$-type structure.
  \defiEnd{}
\end{defi}

Similarly, we may talk of $\lexpre$-families or $(\lccpre)$-families.

When working with $(\reppre)$-families, we may sometimes omit to mention $\MC_\rep$ in some constructions.
It should then be understood that the same construction is applied to $\MC_\rep$.
For example, the $\Unit$- and $\Sigma$- types of $\MC$ and $\MC_\rep$ are distinct, but in some constructions we may only refer to the $\Unit$- and $\Sigma$- types of $\MC$.

\subsection{Categories with families}

We review the notion of category with families (CwF), which is used to model type theory.

We first recall the definition of locally representable presheaf, which is used to encode the context extensions in models of type theory.
This definition is a reformulation of the notion of representable natural transformation, which has been used by Awodey~\cite{NaturalModels} to define natural models, which are equivalent to categories with families.
\begin{defi}
  Let $Y$ be a dependent presheaf over a presheaf $X$.
  We say that $Y$ is \defemph{locally representable} if for every element $x : X(\Gamma)$, the presheaf $Y_{\mid x} : (\CC / \Gamma)^\op \to \CSet$ is representable, where
  \begin{alignat*}{1}
    & Y_{\mid x}(\rho : \Delta \to \Gamma) \triangleq Y(x[\rho]).
  \end{alignat*}

  In that case, its representing object consists of an \defemph{extended context} $\Gamma.Y[x]$ and a \defemph{projection map} $\bm{p}_x : \Gamma.Y[x] \to \Gamma$.
  There is a \defemph{generic element} $\bm{q}_x : Y(x[\bm{p}_x])$.
  These satisfy the following universal property: for every other object $\Delta$, map $\gamma : \Delta \to \Gamma$ and element $a : Y(x[\gamma])$, there is a unique map $\angles{\gamma,a} : \Delta \to \Gamma.Y[x]$ such that $\bm{p}_x \circ \angles{\gamma,a} = \gamma$ and $\bm{q}_x[\angles{\gamma,a}] = a$.
  \defiEnd{}
\end{defi}

\begin{defi}
  A \defemph{category with families} (CwF) consists of a category $\CC$, with a terminal object, along with a global family $\MC$ in $\CPsh(\CC)$ such that $\MC.\Tm$ is locally representable.
  \defiEnd{}
\end{defi}

A $(\reppre)$-CwF is a CwF $\CC$ such that the family $\MC$ is a $(\reppre)$-family in $\CPsh(\CC)$.
Similarly, we have $\lexpre$-CwFs, $(\lccpre)$-CwFs, $(\lexO)$-CwFs, $(\repinfty)$-CwFs, \etc (Identity types and equality types will be specified in~\cref{sec:weak_id_types}).
The $1$-category of $(\reppre)$-CwFs is denoted by $\CCwf_{\reppre}$, and other $1$-categories of structured CwFs are written similarly.

Given a CwF $\CC$, we will write $(\Gamma \vdash A\ \type) \in \CC$ to indicate that $A$ is a type is context $\Gamma$, and $(\Gamma \vdash a : A) \in \CC$ to indicate that $a$ is a term of type $A$ in context $\Gamma$.
Occasionally we may also write $(\gamma:\Gamma \vdash A(\gamma)\ \type) \in \CC$ and $(\gamma:\Gamma \vdash a(\gamma) : A(\gamma)) \in \CC$.
We write $(\Gamma \vdash A\ \reptype) \in \CC$ if $A$ is a representable type over $\Gamma$, when $\CC$ is a $(\reppre)$-CwF.
We may write $(A\ \type) \in \CC$ or $(a : A) \in \CC$ when working over the empty context $1_\CC$.

We have freely generated $(\reppre)$-CwFs
\begin{mathpar}
  \Free_{\reppre}(\bm{\Gamma} \vdash),
  
  \Free_{\reppre}(\bm{\Gamma} \vdash \bm{A}\ \type),
  
  \Free_{\reppre}(\bm{\Gamma} \vdash \bm{A}\ \reptype),
  \\
  \Free_{\reppre}(\bm{\Gamma} \vdash \bm{a} : \bm{A}),
\end{mathpar}
where we use bold letters to indicate the generators.

We have maps
\begin{alignat*}{1}
  & I^\ty : \Free_{\reppre}(\bm{\Gamma} \vdash) \to \Free_{\reppre}(\bm{\Gamma} \vdash \bm{A}\ \type), \\
  & I^{\repty} : \Free_{\reppre}(\bm{\Gamma} \vdash) \to \Free_{\reppre}(\bm{\Gamma} \vdash \bm{A}\ \reptype), \\
  & I^\tm : \Free_{\reppre}(\bm{\Gamma} \vdash \bm{A}\ \type) \to \Free_{\reppre}(\bm{\Gamma} \vdash \bm{a} : \bm{A}), 
\end{alignat*}
which are generic extensions of a $(\reppre)$-CwF by a new type, representable type or term.

We also have maps
\begin{alignat*}{1}
  & E^\ty : \Free_{\reppre}(\bm{\Gamma} \vdash \bm{A},\bm{B}\ \type) \to \Free_{\reppre}(\bm{\Gamma} \vdash \bm{A} = \bm{B}), \\
  & E^{\repty} : \Free_{\reppre}(\bm{\Gamma} \vdash \bm{A},\bm{B}\ \reptype) \to \Free_{\reppre}(\bm{\Gamma} \vdash \bm{A} = \bm{B}), \\
  & E^\tm : \Free_{\reppre}(\bm{\Gamma} \vdash \bm{x},\bm{y} : \bm{A}) \to \Free_{\reppre}(\bm{\Gamma} \vdash \bm{x} = \bm{y}), 
\end{alignat*}
which are generic extensions of a $(\reppre)$-CwF by a new equation between two types, representable types or terms.
These maps are the codiagonals of $\{I^\ty,I^{\repty},I^\tm\}$.

We will consider various weak factorization systems and orthogonal factorization systems generated by subsets of $\{I^\ty,I^{\repty},I^\tm,E^\ty,E^{\repty},E^\tm\}$.
We may also similarly consider factorization systems on other categories of CwFs, \eg $\CCwf_{\lexpre}$ or just $\CCwf$.

In~\cref{sec:sogats} we will also consider similar factorization systems on the categories of models of arbitrary SOGATs.

We start with the (cofibrations, trivial fibrations) weak factorization systems.
\begin{defi}\label{def:tfib_cwf}
  A map $F : \CC \to \CD$ in $(\reppre)$-CwF is a (split) \defemph{trivial fibration} if its actions on types, representable types and terms are split surjective.
  \defiEnd{}
\end{defi}

\begin{defi}
  A map $F : \CC \to \CD$ in $(\reppre)$-CwF is a (non-split) \defemph{trivial fibration} if its actions on types, representable types and terms are merely surjective.
  \defiEnd{}
\end{defi}

The split trivial fibrations are the right class of the weak factorization system generated by $\{I^\ty,I^{\repty},I^\tm\}$.
Maps in the left class are called \emph{cofibrations}.

The conservativity of Extensional Type Theory over Intensional Type Theory can be stated using the notion of trivial fibration:
the unique morphism $\Init_{\mathsf{ITT}} \to \Init_{\mathsf{ETT}}$ between the initial models of the two type theories is a non-split trivial fibration in $\CCwf$, \ie its actions on types and terms are surjective.

The global picture of Hofmann's proof of conservativity~\cite{HofmannCons} is the following.
We construct a quotient $Q$ of the syntactic model $\Init_{\mathsf{ITT}}$.
By the universal properties of the quotient $Q$ and of the model $\Init_{\mathsf{ETT}}$, we obtain a section-retraction pair $(\Init_{\mathsf{ETT}} \to Q, Q \to \Init_{\mathsf{ETT}})$.
This exhibits the map $\Init_{\mathsf{ITT}} \to \Init_{\mathsf{ETT}}$ as a retract of the quotient inclusion $\Init_{\mathsf{ITT}} \to Q$.
Since the quotient inclusion is a non-split trivial fibration and non-split trivial fibration are closed under retracts, we obtain the desired conservativity result.
Most of the complexity of the proof lies in the construction of the quotient $Q$.

It is not possible to prove constructively that $\Init_{\mathsf{ITT}} \to \Init_{\mathsf{ETT}}$ is a split trivial fibration.
Indeed, if it were possible, it would be possible to decide equality of terms of $\Init_{\mathsf{ETT}}$ by deciding the equality of their lifts in $\mathsf{ITT}$.

\subsection{Contextual CwFs}\label{ssec:contextual_models}

An important class of CwFs are the \emph{contextual} CwFs, whose objects and morphisms are really given by lists of types and terms.
Indeed, from some point of view, in the language of type theory, we never explicitly talk about the objects and morphisms of a model, but only about types and terms that live in the same contextual slice of a given model.
From this perspective, only the contextual models matter.
However, a direct definition of contextual models is complicated (their generalized algebraic presentation is infinite, as it quantifies over all possible context lengths), and many intermediate constructions go through non-contextual models.
It is thus more convenient to define contextuality as a property of general models.
Fortunately, they can be described by the means of an orthogonal factorization system on $\CCwf$\footnote{The author learnt of this definition of contextuality from Christian Sattler.}.

\begin{defi}
  A CwF morphism $F : \CC \to \CD$ is said to be a \defemph{contextual isomorphism} if its actions on types and terms are bijective.
  \defiEnd{}
\end{defi}

Contextual isomorphisms are the right class of maps of the orthogonal factorization system generated by $\{I^\ty,I^\tm\}$, or equivalently of the weak factorization system generated by $\{I^\ty,E^\ty,I^\tm,E^\tm\}$.
The maps in the left class are called \defemph{left contextual maps}.

Any CwF morphism $F : \CC \to \CD$ admits an unique (up to isomorphism) factorization $\CC \to \cxlim(F) \to \CD$ where $\CC \to \cxlim(F)$ is left contextual and $\cxlim(F) \to \CD$ is a contextual isomorphism.
The CwF $\cxlim(F)$ is called the \defemph{contextual image} of $F$.

In particular, given any $\CC : \CCwf$, the unique morphism $\Init \to \CC$ admits such a factorization.
Its contextual image is called the \defemph{contextual core} of $\CC$, and is denoted by $\cxl(\CC)$.
The map $\cxl(\CC) \to \CC$ is a contextual isomorphism by definition.
When the map $\cxl(\CC) \to \CC$ is also an isomorphism of CwFs, we say that $\CC$ is \defemph{contextual}.

This definition of contextuality is equivalent to the usual definition, as found for instance in~\cite{CwFUSD}.
\begin{prop}
  A CwF $\CC$ is contextual if and only there exists a length function $l : \Ob_\CC \to \Nat$ such than for any $\Gamma \in \CC$, if $l(\Gamma) = 0$ then $\Gamma = 1_\CC$ and if $l(\Gamma) = n+1$, then there are unique $\Gamma' \in \CC$ and $(\Gamma' \vdash A\ \type) \in \CC$ such that $\Gamma = \Gamma'.A$.
  \qed
\end{prop}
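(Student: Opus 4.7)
The plan is to prove both implications by giving an explicit description of the contextual core $\cxl(\CC)$ as the subCwF of $\CC$ whose objects are precisely those obtained by iterated context extension from $1_\CC$. Among the generators $\{I^\ty, E^\ty, I^\tm, E^\tm\}$ of the left-contextual maps, only $I^\ty$ produces new objects; the remaining three generators respectively identify types, adjoin terms, and identify terms, all without enlarging the object class. This makes it plausible that the objects of $\cxl(\CC)$ can be presented as finite sequences $(A_1, \ldots, A_n)$ of iterated type extensions out of $1_\CC$, equipped with a canonical length function $l$ determined by $l(1_\CC) = 0$ and $l(\Gamma'.A) = l(\Gamma') + 1$.

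For the forward direction, assume that $\cxl(\CC) \to \CC$ is an isomorphism of CwFs. Transporting the above description across the isomorphism equips $\CC$ itself with the required length function. The properties that $l(\Gamma) = 0$ forces $\Gamma = 1_\CC$ and that $l(\Gamma) = n+1$ yields a unique decomposition $\Gamma = \Gamma'.A$ then follow from the freeness of the presentation of $\cxl(\CC)$: no generator identifies two distinct objects, and a CwF isomorphism preserves strict equalities of objects.

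For the backward direction, suppose a length function $l$ with the stated uniqueness properties is given. Since $\cxl(\CC) \to \CC$ is a contextual isomorphism by construction, it is already bijective on types and terms; it remains to check bijectivity on objects and morphisms. On objects, induct on $l(\Gamma)$: the base case identifies $\Gamma$ with $1_\CC \in \cxl(\CC)$, and the step case writes $\Gamma = \Gamma'.A$ with $\Gamma' \in \cxl(\CC)$ by induction, hence $\Gamma \in \cxl(\CC)$ as well. On morphisms, use the universal property of context extension: any $\gamma : \Delta \to \Gamma'.A$ decomposes uniquely as $\angles{\bm{p}_A \circ \gamma, \bm{q}_A[\gamma]}$, so by induction on $l(\Gamma)$ every morphism of $\CC$ is assembled from the unique morphism into $1_\CC$ together with such pairings, and therefore already lives in $\cxl(\CC)$.

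The main subtlety I expect is in the forward direction, where we must get strict equality $\Gamma = \Gamma'.A$ in $\CC$ rather than a mere isomorphism, together with on-the-nose uniqueness of $(\Gamma', A)$. This requires a sufficiently rigid construction of $\cxl(\CC)$, for instance as an $\omega$-colimit along the tower obtained by iteratively freely adjoining context extensions to $\Init$, so that objects are manifestly in bijection with finite lists of type-forming data up to the identifications imposed by $E^\ty$ and $E^\tm$ (which do not touch objects). Once this rigid presentation is in hand, both directions reduce to the single observation that factorization-theoretic contextuality is precisely the assertion that the objects of $\CC$ are freely generated by iterated extension from $1_\CC$.
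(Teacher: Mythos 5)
The paper states this proposition without proof (it is asserted with only a tombstone, deferring to the standard definition of contextuality in the cited reference), so there is no official argument to compare against. Your sketch is essentially the standard one and its overall strategy is sound: present $\cxl(\CC)$ explicitly as the CwF of finite telescopes $(A_1,\dots,A_n)$ of types of $\CC$, observe that the comparison functor to $\CC$ is bijective on types and terms by construction, and reduce contextuality to bijectivity of this functor on objects, with bijectivity on morphisms then following from the universal property of context extension as you describe.

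Two points deserve more care. First, $\cxl(\CC)$ is \emph{not} in general a subCwF of $\CC$: the comparison functor sending a telescope to its iterated extension can fail to be injective on objects (two distinct telescopes may have equal total extensions in $\CC$), and this failure is precisely one of the ways $\CC$ fails to be contextual. Describing the core as ``the subCwF on the objects reachable by iterated extension'' silently builds in that injectivity. Concretely, in the backward direction your induction on $l(\Gamma)$ only establishes \emph{surjectivity} on objects; you still owe a second induction showing that the uniqueness clause of the length function forces two telescopes with the same total extension in $\CC$ to be equal. This is easy, but it is the place where the uniqueness hypothesis actually gets used, so it should not be elided. Second, for your inductions on $l(\Gamma)$ to terminate you need to know that the unique $\Gamma'$ with $\Gamma = \Gamma'.A$ satisfies $l(\Gamma') = n$; the proposition as written does not literally say this (it is implicit in the usual definition being paraphrased), and without it the stated condition admits degenerate length functions (e.g.\ an object with $\Gamma = \Gamma.A$ and $l(\Gamma) = 1$ satisfies both clauses vacuously while not being an iterated extension of $1_\CC$). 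With those two repairs the argument goes through.
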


We write $\CCwf^\cxl$ for the full subcategory of $\CCwf$ spanned by the contextual CwFs.
The contextual core functor $\cxl : \CCwf \to \CCwf^\cxl$ is a right adjoint of $\CCwf^\cxl$.
This exhibits $\CCwf^\cxl$ as a coreflective subcategory of $\CCwf$.

This definition of contextuality extends to $\CCwf_{\reppre}$ by considering the orthogonal factorization system generated by $\{I^\ty,E^\ty,I^{\repty},E^{\repty},I^\tm,E^\tm\}$ instead, \ie contextual isomorphisms in $\CCwf_{\reppre}$ also need to have bijective actions on representable types.

\subsection{Join of families and telescopes}

We do not always assume the presence of $\Sigma$-types.
To circumvent their absence in some constructions, we need to work with families of telescopes, whose types and terms are finite sequences of types and terms of the base family.
It is convenient to present them as the coproduct of length $n$ telescopes for all $n : \Nat$, and to generalize the notion of length $n$ telescope to a more heterogeneous notion, using the notion of join of families\footnote{The author learnt of this presentation from Christian Sattler.}.

\begin{defi}
  Let $\MC$ and $\MD$ be two families.
  Their \defemph{join} $\MC \ast \MD$ is the family defined by:
  \begin{alignat*}{1}
    &(\MC \ast \MD).\Ty \triangleq
      (A : \MC) \times (B : A \to \MD) \\
    & (\MC \ast \MD).\Tm(A, B) \triangleq
      (a : A) \times (b : B(a))
      \tag*{\defiEnd}
  \end{alignat*}
\end{defi}

In other words, the join of $\MC$ and $\MD$ is the family of length $2$ telescopes, whose first and second components come respectively from $\MC$ and $\MD$.

The join operation is a monoidal product on $\CFam$.

\begin{defi}
  If $\MC$ is a family and $n : \Nat$, the family $\MC^{\ast n}$ of \defemph{length $n$ telescopes} is the $n$-fold iterated join of $\MC$.
  \defiEnd
\end{defi}

\begin{defi}
  If $\MC$ is a family, the family of \defemph{telescopes} of $\MC$ is the coproduct $\MC^{\star} \triangleq \coprod_{n : \Nat}\MC^{\ast n}$.
  \defiEnd
\end{defi}

If $\CC$ is a contextual CwF, we may identify its objects with the closed telescopes of types (i.e. the global elements of the presheaf $\MC^\star.\Ty$) and its morphisms from $\Gamma$ to $\Delta$ with the natural transformations from $\MC^\star.\Tm(\Gamma)$ to $\MC^\star.\Tm(\Delta)$.
When $\CC$ is an arbitrary CwF, this is an explicit description of the objects and morphisms of the contextual core of $\CC$.

%%% Local Variables:
%%% mode: latex
%%% TeX-master: "main"
%%% End:

\section{Weak identity types}\label{sec:weak_id_types}

In this section we define the weak and strict variants of identity types.
Generally, the computation rules of a weak type structure are expressed by internal identifications, whereas the computation rules of strict type structures are expressed by strict equalities.
These notions should not be confused with weakly stable and strictly stable type-structures.
Since type theories with weak computation rules are the main focus of this paper, we reserve the short naming for this notion.
In ambiguous cases, we could talk of weakly computational and strictly computational type-structures.
In the literature, strict identity types sometimes refer to $(-1)$-truncated identity types, which satisfy the Uniqueness of Identity Proofs principle.

\subsection{Definition}

In presence of (strict) $\Sigma$-type and $\Pi$-types, there are several equivalent ways to define the eliminator for identity types.
The Martin-Löf eliminator is given by the following rule.
\begin{mathpar}
  \inferrule[Martin-Löf eliminator]{A\ \type \\ x : A, y : A, p : \Id_A(x,y) \vdash D(x,y,p)\ \type \\ x : A \vdash d(x) : D(x,x,\refl) \\ x : A \\ y : A \\ p : \Id_A(x,y)}{\J(D,d,y,p) : D(x,y,p)}
\end{mathpar}
In absence of $\Pi$-types, the Martin-Löf eliminator does not seem to be strong enough to even define transport.

% TODO: check the countermodel.
% \footnote{For a countermodel, take the CwF with $\Id$ and $\refl$ freely generated by a type $A$ and a type family $B$ over $A$.
%   The only terms of that model are the variables their iterated reflexivity paths.
%   Since there is no term of type $B(a_{2})$ over the context $(a_{1}, a_{2} : A, b : B(a_{1}), p : \Id_A(a_1,a_2))$, that model does not satisfy transport.
%   But it can still be equipped with the Martin-Löf eliminator.
% 
%   This countermodel is wrong. (the Martin-Löf eliminator can be used to derive symmetry). A countermodel would probably need to have symmetries but not compositions.
% }

In \cite{IdTypeFS}, a variant of the Martin-Löf eliminator, now called the Frobenius variant of the Martin-Löf eliminator, is introduced.
The idea is to circumvent the absence of $\Pi$-types by allowing the target type family $D$ of the elimination to depend on any telescope $\Delta$ of parameters.
\begin{mathpar}
  \inferrule[Frobenius Martin-Löf eliminator]{A\ \type \\ x : A, y : A, p : \Id_A(x,y) \vdash \Delta(x,y,p)\ \type^\star \\\\
    x : A, y : A, p : \Id_A(x,y), \delta : \Delta(x,y,p) \vdash D(x,y,p,\delta)\ \type \\
    x : A, \delta : \Delta(x,x,\refl) \vdash d(x,\delta) : D(x,x,\refl,\delta) \\\\
    x : A \\ y : A \\ p : \Id_A(x,y) \\ \delta : \Delta(x,y,p)}{\J(D,d,x,y,p,\delta) : D(x,y,p,\delta)}
\end{mathpar}
Another alternative is the Paulin-Mohring eliminator, also known as based path induction, or one-sided eliminator.
\begin{mathpar}
  \inferrule[Paulin-Mohring eliminator]{A\ \type \\ x : A \\ y : A, p : \Id_A(x,y) \vdash D(y,p)\ \type \\\\ d(x) : D(x,\refl) \\ y : A \\ p : \Id_A(x,y)}{\J(D,d,y,p) : D(y,p)}
\end{mathpar}
North~\cite{IdWFSCauchy} and Isaev~\cite{IsaevIxTT} have independently given proofs of the fact that the Paulin-Mohring eliminator is equivalent to the Frobenius variant of the Martin-Löf eliminator in the presence of $\Sigma$-types.

We use the weak variant of the Paulin-Mohring eliminator, with the computation rule weakened to a weak equality.
We will show that the other eliminators can also be derived, even in the absence of $\Sigma$-types.
In fact, we will prove that weak identity types can be lifted from a family $\MC$ to its telescope family $\MC^{\star}$, i.e. the family whose types are list of types of $\MC$; the derivation of the Frobenius eliminator can be seen as a consequence of this fact.
The main step of this derivation is originally due to András Kovács.
The proof has been simplified using the notion of join of families by Christian Sattler.

\begin{defi}\label{def:id_intro}
  Let $\MC$ be a family.
  An \defemph{identity type introduction structure} over $\MC$ consists of the following operations:
  \begin{alignat*}{1}
    & \Id : (A : \MC) (x : A) (y : A) \to \MC, \\
    & \refl : (A : \MC) (x : A) \to \Id_A(x,x).
      \tag*{\defiEnd}
  \end{alignat*}
\end{defi}

Given an identity type introduction structure and $A : \MC$, we write $\mathsf{IdFrom}(x)$ for the set of identifications with $x : A$ as the fixed left endpoint.
\begin{alignat*}{1}
  & \mathsf{IdFrom}(x) \triangleq (y : A) \times \Id_A(x,y).
\end{alignat*}

\begin{defi}\label{def:weak_id_elim}
  Let $\MC$ and $\MD$ be two families with identity type introduction structures.
  A \defemph{weak identity type elimination structure} from $\MC$ to $\MD$ consists of the following operations:
  \begin{alignat*}{3}
    & \J && :{ } && (A : \MC) (x : A) (D : \mathsf{IdFrom}(x) \to \MD) \\
    &&&&& (d : D(x,\refl)) (p : \mathsf{IdFrom}(x)) \\
    &&&&& \to D(p) \\
    & \J_{\beta} && :{ } && (A : \MC) (x : A) (D : \mathsf{IdFrom}(x) \to \MD) \\
    &&&&& (d : D(x,\refl)) \\
    &&&&& \to \Id(\J(D,d,(x,\refl)),d)
          \tag*{\defiEnd}
  \end{alignat*}
\end{defi}

A \defemph{weak identity type structure} over a family $\MC$ consists of an identity type introduction structure over $\MC$ along with a weak identity type elimination structure from $\MC$ to $\MC$.

\begin{defi}\label{def:strong_id}
  A weak identity type structure is said to be \defemph{strict} if the following equations hold, for all relevant arguments:
  \begin{alignat*}{1}
    & \J(D,d,(x,\refl)) = d, \\
    & \J_{\beta}(D,d) = \refl.
      \tag*{\defiEnd{}}
  \end{alignat*}
\end{defi}

Given an internal family $\MC$ equipped with weak identity types, we can derive the transport operation:
\[ \mathsf{transport} : (D : A \to \MC) (p : \Id(x,y)) \to D(x) \to D(y). \]
We may write $p^{\star}(d)$ instead of $\mathsf{transport}(D,p,d)$, leaving the family $D$ implicit.

The standard notions of homotopy type theory, such as contractible types, propositional types, equivalences, etc, can be defined.
However, since we may not have $\Sigma$-types or $\Pi$-types, they are not always encoded by types of the theory.

\begin{defi}
  An \defemph{equality type structure} over a family $\MC$ is a weak identity type structure satisfying the following two equations
  \begin{alignat*}{1}
    & \Id(A,x,y) \to (x = y), \\
    & (p : \Id(A,x,y)) \to (p = \refl).
      \tag*{\defiEnd{}}
  \end{alignat*}
\end{defi}
We will use $\Eq$ instead of $\Id$ for equality type structures.

\subsection{Lifting type structures to telescopes}\label{ssec:lift_tele}

In this subsection, we show that $\Id$-type structures on a family $\MC$ can be lifted to the telescope family $\MC^{\star}$.
Similar results have been proven and used before in the literature~\cite{2DTT, HoInvCwA}.
We generalize them to weak type structures and families without locally representable elements.
The constructions have been formalized in Agda.%
\footnote{The Agda files are available at~\url{https://rafaelbocquet.gitlab.io/Agda/20230420_WeakIdTypes/}.}
We refer the reader to the formalization for the detailed constructions.

\begin{con}\label{con:lift_wid_intro}
  Let $\MC$ and $\MD$ be families with identity type introduction structures, along with a weak identity type elimination structure from $\MC$ to $\MD$.

  Then the family $\MC \ast \MD$ is equipped with the following identity type introduction structure:
  \begin{alignat*}{1}
    & \Id_{(A,B)}((x_{c}, x_{d}), (y_{c}, y_{d})) \triangleq (\Id_A(x_c,y_c), \lambda p_{c} \mapsto \Id_{B(y_c)}(p_{c}^{\star}(y_{c}), y_{d}) \\
    & \refl_{(A,B)}(x_{c}, x_{d}) \triangleq (\refl, \mathsf{tr}_\beta),
  \end{alignat*}
  where $\mathsf{tr}_\beta$ is some term of type $\Id(\refl^{\star}(x_{c}), x_{c})$, definable using $\J_{\beta}$.
  \defiEnd
\end{con}

\begin{con}\label{con:lift_wid_join}
  Let $\MC$, $\MD$ and $\ME$ be families with identity type introduction structures, along with identity type elimination structures from $\MC$ to $\MD$ and $\ME$, from $\MD$ to $\MD$ and $\ME$ and from $\ME$ to $\ME$.

  Then there exist weak identity type elimination structures from $(\MC \ast \MD)$ to $\ME$ and from $\MC$ to $(\MD \ast \ME)$.
  \defiEnd
\end{con}

\begin{con}
  Let $\MC$ be a family with a weak identity type structure.

  Then for every $n : \Nat$, the family $\MC^{\ast n}$ of length $n$ telescopes has a canonical identity type introduction structure, and for every $n, m : \Nat$ there is a weak identity type elimination structure from $\MC^{\ast n}$ to $\MC^{\ast m}$.
\end{con}
\begin{proof}
  By iterating \cref{con:lift_wid_intro} and \cref{con:lift_wid_join}.
\end{proof}

In all of these constructions, if all of the given identity type are strict, then the constructed type structures are also strict.

\subsection{Parametrized elimination structures}\label{ssec:param_id_elim}

\begin{defi}
  Let $\MC$, $\MD$ and $\ME$ be families, together with identity type introduction structures over $\MC$ and $\ME$. \\
  A parametrized identity type elimination structure from $\MC$ to $\ME$ with parameters in $\MD$ consists of operations
  \begin{alignat*}{3}
    & \J && :{ } && (A : \MC) (x : A) (Q : \mathsf{IdFrom}(x) \to \MD) \\
    &&&&& (D : (p : \mathsf{IdFrom}(x)) (q : Q(p)) \to \ME) \\
    &&&&& (d : (q : Q(x,\refl)) \to D((x,\refl),q)) \\
    &&&&& (p : \mathsf{IdFrom}(x)) (q : Q(y)) \\
    &&&&& \to D(p,q) \\
    & \J_{\beta} && :{ } && (A : \MC) (x : A) (Q : \mathsf{IdFrom}(x) \to \MD) \\
    &&&&& (D : (p : \mathsf{IdFrom}(x)) (q : Q(y)) \to \ME) \\
    &&&&& (d : (q : Q(x,\refl)) \to D((x,\refl),q)) \\
    &&&&& \to \Id(\J(Q,D,d,(x,\refl)), d(q))
          \tag*{\defiEnd}
  \end{alignat*}
\end{defi}

\begin{con}
  Assume that $\MC$, $\MD$ and $\ME$ are families equipped with identity type introduction structures and identity type elimination structures from $\MC$ to $\MC$, $\MD$ and $\ME$, from $\MD$ to $\MD$ and $\ME$ and from $\ME$ to $\ME$.

  Then we can construct a parametrized identity type elimination structure from $\MC$ to $\ME$ with parameters in $\MD$.
  \defiEnd
\end{con}

The Frobenius variant of the Paulin-Mohring identity type eliminator is exactly a parametrized identity type elimination structure with parameters in the family of telescopes.

\subsection{CwFs with identity types}

A CwF $\CC$ is equipped with identity types if its family $\MC$ is equipped with a global weak identity type structure in $\CPsh(\CC)$.

We will write $\CCwf_\Id$ for the category of CwFs equipped with weak identity types.

In presence of both $\Pi$-types and $\Id$-types, we always assume that function extensionality is satisfied.
We write $\CCwf_{\repinfty}$ for the category of $(\reppre)$-CwFs equipped with weak identity types satisfying function extensionality.
We write $\CCwf_{\repO}$ for the category of $(\reppre)$-CwF equipped with extensional equality types.

%%% Local Variables:
%%% mode: latex
%%% TeX-master: "main"
%%% End:

\section{Second-order generalized algebraic theories}\label{sec:sogats}

In this section we recall the theory of second-order generalized algebraic theories.
Generalized algebraic theories (GATs) are theories specified by a collection of dependent sorts, operations and equations between them.
Second-order generalized algebraic theories (SOGATs) generalize GATs by allowing second-order operations, meaning that the arguments of an operation can bind some variables.
Most type theories can be encoded as SOGATs.
Any (first-order) GAT can also be seen as a SOGAT.
Among other examples, we will consider the SOGAT $\Th_\Id$ of weak identity types, the GAT $\Th_\CMonCat$ of monoidal categories, and their extensions to strict identity types and strict monoidal categories.

It is possible to develop the functorial semantics SOGATs in $\CCwf_{\reppre}$.
Given a SOGAT $\Th$, there is a notion of model of $\Th$ in any $(\reppre)$-CwF $\CC$, corresponding to an interpretation of the sorts and operations of $\Th$ as types and terms of $\CC$.
The SOGAT $\Th$ can be identified with its classifying $(\reppre)$-CwF, also written $\Th$.
A model of $\Th$ in $\CC$ is then identified with a $(\reppre)$-CwF morphism $\Th \to \CC$.

By working with SOGATs we know that our results apply to a large class of type theories.
More importantly, many semantic conditions can be expressed more naturally at the level of the classifying $(\reppre)$-CwF $\Th$.
Instead of considering arbitrary models of $\Th$, it is often sufficient to consider $\Th$ instead.
The reason combines two ideas.
First, any object of $\Th$ can be seen as a signature presenting a finitely generated model; this should induce a fully faithful functor $\Init_\Th[-] : \Th^\op \to \CMod_\Th$ whose essential image consists of the finitely generated models of $\Th$, see~\cref{clm:ff_sign}.
Second, any model can be approximated by finitely generated models; more precisely any model admits a cofibrant replacement that is freely generated, and any freely generated model is a filtered colimit of finitely generated models.

For this reason, we focus on the classifying $(\reppre)$-CwF of a SOGAT.
Our main results are stated at this level and do not depend on the semantics of SOGATs.
We still develop some of the semantics in order to provide some additional intuition for our definitions.
However, semantic considerations that would require rather technical proofs are omitted or left to future work.

We also review the notion of external univalence for a SOGATs equipped with homotopy relations.
This notion was introduced by the author in~\cite{ExternalUnivalence}.
Given a choice of a homotopy relation for every sort of a SOGAT, external univalence expresses the well-behavedness of the homotopy relations; in particular the fact that they are preserved by all operations.
External univalence is also stated at the level of the classifying $(\reppre)$-CwF $\Th$, as the existence of some identity type structure (unfortunately both weakly computational and weakly stable) on $\Th$.
These identity types have to satisfy a univalence condition, also called saturation condition.
An analogous partial saturation condition will be introduced in~\cref{sec:partial_saturation}.

More semantically, external univalence is closely related to left semi-model structures on the category of models, as considered by Kapulkin and Lumsdaine~\cite{HoThTT} and Isaev~\cite{IsaevMS}.
External univalence is also a way to morally view a SOGAT as an $\infty$-type theory (in the sense of Nguyen and Uemura~\cite{InftyTypeTheories}), without using quasicategories.

\subsection{Definitions and examples}

\begin{defi}\label{def:sogat}
  A \defemph{second-order generalized algebraic theory} is an $\{I^{\ty},I^{\repty},I^{\tm},E^{\tm}\}$-cellular $(\reppre)$-CwF $\Th$, where we recall that the maps $\{I^{\ty},I^{\repty},I^{\tm},E^{\tm}\}$ are the generic extensions of $(\reppre)$-CwFs by a type, representable type, term or term equality:
  \begin{alignat*}{3}
    & I^{\ty} && :{ }
    && \Free_{\reppre}(\bm{\Gamma} \vdash) \to \Free_{\reppre}(\bm{\Gamma} \vdash \bm{A}\ \type),
    \\
    & I^{\repty} && :{ }
    && \Free_{\reppre}(\bm{\Gamma} \vdash) \to \Free_{\reppre}(\bm{\Gamma} \vdash \bm{A}\ \reptype),
    \\
    & I^{\tm} && :{ }
    && \Free_{\reppre}(\bm{\Gamma} \vdash \bm{A}\ \type) \to \Free_{\reppre}(\bm{\Gamma} \vdash \bm{a} : \bm{A}),
    \\
    & E^{\tm} && :{ }
    && \Free_{\reppre}(\bm{\Gamma} \vdash \bm{x},\bm{y} : \bm{A}) \to \Free_{\reppre}(\bm{\Gamma} \vdash \bm{x} = \bm{y}).
       \tag*{\defiEnd{}}
  \end{alignat*}
\end{defi}
A morphism between SOGATs is a morphism of $(\reppre)$-CwFs.

The cellular presentation of a SOGAT consists of sorts, representable sorts, operations and equations between operations, corresponding to the four maps $I^{\ty},I^{\repty},I^{\tm},E^{\tm}$. 

A SOGAT without equations is a SOGAT that is actually $\{I^\ty,I^{\repty},I^{\tm}\}$-cellular in $\CCwf_{\reppre}$.
SOGATs without equations are cofibrant objects in $\CCwf_{\reppre}$, and are therefore better behaved than general SOGATs in some situations.
When considering equational extensions of SOGATs, the source SOGAT will usually be a SOGAT without equations.

We will write $\MT$, $\MT_1$, $\MT_2^3$, \etc, for the internal $(\reppre)$-families corresponding to SOGATs $\Th$, $\Th_1$, $\Th_2^3$, \etc in the presheaf categories $\CPsh(\Th)$, $\CPsh(\Th_1)$, \etc.

We use a {\color{RedOrange}\bf bold red} font for the sorts and operations of a SOGAT, mainly to distinguish them from the type-theoretic structure of the $(\reppre)$-CwF.

\begin{exa}\label{exa:gat_cat}
  The GAT of categories $\Th_{\CCat}$ is given by the following signature:
  \begingroup{}\allowdisplaybreaks{}
  \begin{alignat*}{1}
    & \iOb : \MT_{\CCat} \\
    & \iHom : \iOb \to \iOb \to \MT_{\CCat} \\
    & \iEqHom : \forall x\ y \to \iHom(x,y) \to \iHom(x,y) \to \MT_{\CCat} \\
    & \iid :{ } \forall x \to \iHom(x,x) \\
    & \icomp :{ } \forall x\ y\ z \to \iHom(x,y) \to \iHom(y,z) \to \iHom(x,z) \\
    & f \icirc g \triangleq{ } \icomp(g,f) \\
    & \irefl :{ } \forall x\ y\ \to (f : \iHom(x,y)) \to \iEqHom(f,f) \\
    & \iid \icirc f = f \\
    & f \icirc \iid = f \\
    & (f \icirc g) \icirc h = f \icirc (g \icirc h) \\
    & \forall x\ y\ f\ g \to (p,q : \iEqHom(f,g)) \to p = q \\
    & \iEqHom(f,g) \to f = g
      \tag*{\defiEnd{}}
  \end{alignat*}\endgroup{}
\end{exa}

Some of the equations of $\Th_{\CCat}$ can be replaced by operations targeting the sort $\iEqHom(f,g)$.
However, some equations cannot be avoided, namely the propositionality of $\iEqHom$ and the reflection rule $\iEqHom(f,g) \to f = g$.

An alternative to the theory of categories is the theory of $E$-categories~\cite{AczelGalois,PalmgrenCategories}, which does not include equations.
An $E$-category is essentially a category enriched over setoids, \ie with a setoid of morphisms between any two objects, except that the categorical laws are expressed using the relations of the setoids.

\begin{exa}
  The GAT $\Th_{\ECat} $ of $E$-categories is presented by the following signature:
  \begingroup{}\allowdisplaybreaks{}
  \begin{alignat*}{1}
    & \iOb : \MT_{\ECat} \\
    & \iHom : \iOb \to \iOb \to \MT_{\ECat} \\
    & \iEqHom : \forall x\ y \to \iHom(x,y) \to \iHom(x,y) \to \MT_{\ECat} \\
    & \iid : \iHom(x,x) \\
    & \icomp : \iHom(x,y) \to \iHom(y,z) \to \iHom(x,z) \\
    & f \icirc g \triangleq \icomp(g,f) \\
    & \irefl : (f : \iHom(x,y)) \to \iEqHom(f,f) \\
    & \inner{sym} : \iEqHom(f,g) \to \iEqHom(g,f) \\
    & \inner{trans} : \iEqHom(f,g) \to \iEqHom(g,h) \to \iEqHom(f,h) \\
    & \inner{assoc} : \iEqHom(f \icirc (g \icirc h), (f \icirc g) \icirc h) \\
    & \inner{idl} : \iEqHom(\iid \icirc f, f) \\
    & \inner{idr} : \iEqHom(f \icirc \iid, f) \\
    & \inner{cong\text{-}\icirc} : \iEqHom(f, h) \to \iEqHom(g, i) \to \iEqHom(f \icirc g, h \icirc i) 
  \end{alignat*}\endgroup{}

  The GAT $\Th_{\CCat}$ is isomorphic to the extension of $\Th_{\ECat}$ by the equations
  \begin{alignat*}{1}
    & \iEqHom(f,g) \to f = g, \\
    & (p : \iEqHom(f,g)) \to p = \irefl.
      \tag*{\defiEnd{}}
  \end{alignat*}
\end{exa}

\begin{exa}\label{exa:gat_moncat}
  The GAT $\Th_{\CMonCat}$ of (weak) monoidal categories is the extension of $\Th_{\CCat}$ by the following operations:
  \begingroup{}\allowdisplaybreaks{}
  \begin{alignat*}{1}
    & \iI :{ } \iOb, \\
    & \_ \iotimes \_ :{ } \iOb \to \iOb \to \iOb, \\
    & \_ \iotimes \_ :{ } \iHom(x,x') \to \iHom(y,y') \to \iHom(x \iotimes y, x' \iotimes y'), \\
    & \inner{\alpha} :{ } x \iotimes (y \iotimes z) \icong (x \iotimes y) \iotimes z, \\
    & \inner{\lambda} :{ } x \iotimes I \icong x, \\
    & \inner{\rho} :{ } I \iotimes x \icong x,
  \end{alignat*}\endgroup{}
  and equations (or operations returning in $\iEqHom$) expressing the functoriality of $\iotimes$, the naturality of $\alpha$, $\lambda$ and $\rho$, and the pentagon and triangle coherence laws.
  \defiEnd{}
\end{exa}

\begin{exa}\label{exa:gat_strmoncat}
  The GAT $\Th_{\CStrMonCat}$ of strict monoidal categories is the extension of $\Th_{\CMonCat}$ by the following equations:
  \begin{mathpar}
    x \iotimes (y \iotimes z) = (x \iotimes y) \iotimes z,

    \inner{\alpha}_{x,y,z} = \id,
    \\
    x \otimes I = x,

    \inner{\lambda}_x = \id,
    
    I \otimes x = x,

    \inner{\rho}_x = \id.
  \end{mathpar}
\end{exa}

\begin{exa}
  The SOGAT $\Th_\Id$ of weak identity types is specified by the following signature:
  \begin{alignat*}{1}
    & \iTy : \MT_{\Id}, \\
    & \iTm : \iTy \to {(\MT_{\Id})}_{\rep}, \\
    & \iId : (A : \iTy) (x : \iTm(A)) (y : \iTm(A)) \to \iTy, \\
    & \irefl : (A : \iTy) (x : \iTm(A)) \to \iTm(\iId(A,x,x)), \\
    & \iJ : (A : \iTy) (x : \iTm(A)) (P : (y : \iTm(A)) (p : \iTm(\iId(A,x,y))) \to \iTy) \\
    & \quad (d : \iTm(P(x,\irefl))) (y : \iTm(A)) (p : \iTm(\iId(A,x,y))) \to \iTm(P(y,p)), \\
    & \iJb : (A : \iTy) (x : \iTm(A)) (P : (y : \iTm(A)) (p : \iTm(\iId(A,x,y))) \to \iTy) \\
    & \quad (d : \iTm(P(x,\irefl))) \to \iTm(\iId(P(x,\refl),\iJ(A,x,P,d,x,\irefl),d)).
  \end{alignat*}

  Note that $\Th_\Id$ is a SOGAT without equations.
  \defiEnd{}
\end{exa}

\begin{exa}
  The SOGAT $\Th_{\Id_s}$ of strict identity types is the extension of $\Th_\Id$ by equations
  \begin{mathpar}
    \iJ(A,x,P,d,x,\irefl) = \irefl, 
    
    \iJb = \irefl.
  \end{mathpar}
  \defiEnd{}
\end{exa}

\subsection{Structure of the types of a SOGAT}\label{ssec:sogat_structure_types}

Because SOGATs are $(\reppre)$-CwFs that are presented without equations between types or representable types, their types admit simple normal forms: they are the free closure of the generating types by the operations of a $(\reppre)$-CwF.

There is a set $\GenTy_\Th$ of generating types, which can be read from the cellular presentation of $\Th$.
For example, the generating types of $\Th_\CCat$ are $(\vdash \iOb\ \type)$, $(x,y:\iOb \vdash \iHom(x,y)\ \type)$ and $(x,y:\iOb,f,g:\iHom(x,y) \vdash \iEqHom(f,g)\ \type)$.

Similarly, there is a set $\GenRepTy_\Th$ of generating representable types, with an inclusion $\GenRepTy_\Th \hookrightarrow \GenTy_\Th$.

A \defemph{basic type} of $\Th$ is a type of the form $(\Gamma \vdash \iS(\sigma)\ \type)$, where $\sigma : \Gamma \to \partial \iS$ for a generating type $(\partial \iS \vdash \iS) \in \GenTy_\Th$.
The \defemph{basic representable types} are defined similarly.
A basic non-representable type is a basic type whose generating type is in $(\GenTy_\Th \backslash \GenRepTy_\Th)$.

\begin{prop}
  Any type or representable type in $\Th$ can be uniquely obtained from the basic non-representable types and basic representable types, and from the operations $\Unit$-, $\Sigma$-, $\Pi_\rep$ and the inclusion from representable types into types.
  \qed
\end{prop}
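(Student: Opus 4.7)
The plan is to treat the statement as a normal-form theorem and prove it by exhibiting an explicit free construction whose type sort is described by the claimed grammar, and then matching that construction against the cellular presentation of $\Th$.

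First, working over an arbitrary presheaf category, I would inductively define a pair of dependent presheaves $\MT^{\mathrm{nf}}, \MT^{\mathrm{nf}}_{\rep}$, whose elements are formal normal forms: each element is either a basic instance $\iS(\sigma)$ for some generating (representable) type $\iS$ together with a substitution $\sigma$ into its boundary, or is built by one of the constructors $\Unit$, $\Sigma$, $\Pi_{\rep}$, or the inclusion $\MT^{\mathrm{nf}}_\rep \hookrightarrow \MT^{\mathrm{nf}}$. A direct verification shows this pair carries a canonical $(\reppre)$-family structure, and is initial among such structures equipped with an interpretation of $\GenTy_\Th$ and $\GenRepTy_\Th$. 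This gives a canonical comparison map into the actual type presheaf of $\Th$.

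Next, I would prove by induction on the cellular decomposition of $\Th$ that this comparison is a bijection on types and on representable types. The base case is the initial $(\reppre)$-CwF, where the sets of generating basics are empty, and the claim reduces to the fact that every type in the free $(\reppre)$-CwF on no generators is built from $\Unit$, $\Sigma$, and $\Pi_\rep$. For the inductive step I treat the four kinds of cellular generators separately: each $I^\ty$- or $I^\repty$-pushout extends both sides by a single fresh basic and is therefore preserved; each $I^\tm$- and $E^\tm$-pushout only modifies the term sort (resp.\ the equations between terms) and leaves the type and representable-type presheaves untouched. Transfinite composition preserves the bijection because the inductive definition of $\MT^{\mathrm{nf}}$ is finitary and commutes with the required colimits.

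The hard part will be the $E^\tm$ step: one must verify that adjoining a term equation does not secretly impose any identification between types. This is the only place where it is essential that a SOGAT is cellular specifically for $\{I^\ty, I^\repty, I^\tm, E^\tm\}$, with no $E^\ty$ or $E^\repty$ generators in sight. A clean way to settle this is to describe the pushout along the generic codiagonal $E^\tm$ explicitly: over any $(\reppre)$-CwF, pushing out along $E^\tm$ amounts to quotienting the term sort by the smallest congruence containing the pair of terms being identified, while the type and representable-type sorts (and the $\Unit$-, $\Sigma$-, $\Pi_\rep$-structure on them) remain unchanged. Once this stability is established, the bijection propagates through the $E^\tm$-steps, and the proposition follows.
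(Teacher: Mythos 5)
Your proposal is a correct elaboration of an argument the paper does not actually spell out: the paper's ``proof'' is the single remark preceding the proposition, namely that a SOGAT is $\{I^\ty,I^\repty,I^\tm,E^\tm\}$-cellular and hence has no equations between types, so its types are freely generated. Your normal-form presheaf $\MT^{\mathrm{nf}}$, its initiality, and the induction over the cellular decomposition are exactly the right way to make that remark precise, and you have correctly isolated the $E^\tm$ step as the only place where anything could go wrong.

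One step is stated too strongly, though, and as written it is false: an $E^\tm$-pushout does \emph{not} leave the type and representable-type presheaves ``untouched.'' Identifying two terms identifies substitutions (via the pairing $\angles{\gamma,a}$), hence identifies objects of the base category (via context extension $\Gamma.A$ once types over identified contexts coincide) and identifies instances of types (via the functorial action $A[\gamma]$). So the type presheaf genuinely changes; what survives is not the presheaf itself but the normal-form description of it. The statement you actually need, and which your own setup makes provable, is this: after the $E^\tm$-pushout, the comparison map from $\MT^{\mathrm{nf}}$ \emph{recomputed over the new base category} to the new type presheaf is still a bijection — equivalently, the only identifications imposed on types are those generated by identifying the boundary substitutions $\sigma$ in the basic leaves $\iS(\sigma)$ and propagating them through the (free, hence injective) constructors $\Unit$, $\Sigma$, $\Pi_\rep$ and the inclusion of representable types. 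This is exactly what the uniqueness clause of the proposition tolerates, since a basic type is the datum of a generating sort together with a substitution into its boundary, and that substitution lives in the quotiented category. With the $E^\tm$ case restated in this form, your explicit description of the pushout (quotient the term sort by the generated congruence, and let everything else follow structurally) goes through and the proof is complete.
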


\subsection{Functorial semantics}

\begin{defi}
  An \defemph{internal model} of $\Th$ in a $(\reppre)$-CwF $\CC$ is a $(\reppre)$-CwF morphism
  \[ \MC : \Th \to \CC.
    \tag*{\defiEnd{}}
  \]
\end{defi}
We often write $X_{\CC}$, $A_{\CC}$, $a_{\CC}$, \etc{}, or even just $X$, $A$, $a$, \etc{} instead of $\MC(X)$, $\MC(A)$, $\MC(a)$, \etc{} for the application of the $(\reppre)$-CwF morphism $\MC$ on objects, morphisms, types and terms.

By the universal property of $\Th$, an internal model in $\CC$ is uniquely determined by the image of the generators of $\Th$, that is by an interpretation of the signature $\Th$ in $\CC$.

\begin{defi}
  A \defemph{model} of $\Th$ consists of a category $\CC$ with a terminal object, along with an internal model of $\Th$ in the $(\reppre)$-CwF $\widehat{\CC}$, that is a $(\reppre)$-CwF morphism $\MC : \Th \to \widehat{\CC}$.
  \defiEnd{}
\end{defi}

\begin{exa}
  A model of the SOGAT $\Th_\Id$ of weak identity types is exactly a CwF equipped with weak identity types.

  However, the models of the SOGAT $\Th_{\CCat}$ are not categories, but pairs $(\CC,\CD)$ where $\CC$ is a base category and $\CD$ is an internal category in $\CPsh(\CC)$.
  In~\cref{ssec:sogat_contextual} we will define contextual models.
  In the case of $\Th_\CCat$ we recover categories: the contextual models of $\Th_\CCat$ are the models $(\CC,\CD)$ such that $\CC$ is the terminal category $1_\CCat$.
  $\CD$ is then an internal category in $\CPsh(1_\CCat)$, \ie an external category.
  \defiEnd{}
\end{exa}

Given a model $\CC$ of $\Th$ and a closed sort $X \in \Th$, we will write $(\Gamma \vdash x : X) \in \CC$ to indicate that $x : X_\CC(\Gamma)$.
Given a dependent sort $(X \vdash Y\ \type) \in \Th$, we will write $(\Gamma \vdash y : Y(x)) \in \CC$ to indicate that $y : Y_\CC(\Gamma,x)$.
We may also write $(\gamma:\Gamma \vdash x(\gamma) : X) \in \CC$ and $(\gamma:\Gamma \vdash y(\gamma) : Y(x(\gamma))) \in \CC$.

\begin{defi}
  A \defemph{weak morphism} $F$ of models of $\Th$ consists of a functor $F : \CC \to \CD$ such that:
  \begin{itemize}
    \item The functor $F$ weakly preserves terminal objects.
    \item For every object $X \in \Th$, we have a transformation
          \[ F_X : (\Gamma \in \CC) \to X_\CC(\Gamma) \to X_\CD(F(\Gamma)) \]
          contravariantly natural in $\Gamma$.
    \item For every morphism $\alpha : X \to Y$, the following square commutes
          \[ \begin{tikzcd}
              X_{\CC}(\Gamma)
              \ar[d, "\alpha_{\CC}(\Gamma)"]
              \ar[r, "F_X"] &
              X_\CD(F(\Gamma))
              \ar[d, "\alpha_\CD(F(\Gamma))"] \\
              Y_\CC(\Gamma)
              \ar[r, "F_Y"] &
              Y_\CD(F(\Gamma))
          \end{tikzcd} \]
    \item Remark that we obtain, for every object $X \in \Th$ and type $(X \vdash A\ \type) \in \Th$, a transformation
          \begin{alignat*}{1}
            & F_A :  (\Gamma \in \CC) \to (x : X_{\CC}(\Gamma)) \to A_{\CC}(x) \to A_{\CD}(F_X(x))
          \end{alignat*}
          contravariantly natural in $\Gamma$, uniquely specified by the equation $F_{X.A}(x,a) = (F_{X}(x),F_{A}(x,a))$.
    \item Context extensions are weakly preserved: for every object $X \in \Th$, representable type $(X \vdash A\ \type_\rep) \in \Th$, object $\Gamma \in \CC$ and element $x : X_\CC(\Gamma)$, the comparison map
          \[ \angles{F(\bm{p}),F_A(\bm{q})} : F(\Gamma.A_{\CC}[x]) \to F(\Gamma).A_{\CD}[F_{X}(x)] \]
          is an isomorphism.
  \end{itemize}

  A morphism is \defemph{strict} if the terminal object and context extensions are strictly preserved.
  \defiEnd{}
\end{defi}

We have a $(2,1)$-category of models, weak morphisms and $2$-cells, and a $1$-category $\CMod_{\Th}$ of models and strict morphisms.
We will only work with the $1$-category $\CMod_{\Th}$.
The category $\CMod_{\Th}$ is locally finitely presentable; in particular we have an initial model $\Init_{\Th}$ and more general freely generated models.

\subsection{Contextuality and trivial fibrations}\label{ssec:sogat_contextual}

Similarly to our definitions of trivial fibrations and contextuality for CwFs, we can define these notions for models of arbitrary SOGATs using factorization systems on the category of models.

We define a set $I = \{ I^\iS \mid \iS \in \GenTy_\Th \}$ of maps in $\CMod_{\Th}$, where
\[ I^\iS : \Free_\Th(\bm{\Gamma} \vdash \bm{\sigma} : \partial \iS) \to \Free_\Th(\bm{\Gamma} \vdash \bm{x} : \iS(\bm{\sigma})) \]
is the generic extension of a model by an element of the sort $\iS$.

We write $E = \{ E^\iS \mid \iS \in \GenTy_\Th \}$ for the set of codiagonals of maps in $I$.
\[ E^\iS : \Free_\Th(\bm{\Gamma} \vdash \bm{x},\bm{y} : \iS(\bm{\sigma})) \to \Free_\Th(\bm{\Gamma} \vdash \bm{x} = \bm{y}) \]
is the generic extension of a model by a new equation between elements of the sort $\iS$.

\begin{defi}
  A morphism $F : \CC \to \CD$ in $\CMod_\Th$ is a \defemph{split trivial fibration} if for every generating sort $\iS \in \GenTy_\Th$ the actions of $F$ on the elements of $\iS$ are split surjections: for every object $\Gamma \in \CC$, boundary $(\gamma:\Gamma \vdash \sigma(\gamma) : \partial \iS) \in \CC$ and element $(\gamma:F(\Gamma) \vdash x(\gamma) : \iS(F(\sigma)(\gamma))) \in \CD$, there is a lifted element $(\gamma:\Gamma \vdash x_0(\gamma) : \iS(\sigma(\gamma))) \in \CC$ such that $F(x_0) = x$.
  \defiEnd{}
\end{defi}

The trivial fibrations are the right class of maps in the weak factorization system generated by $I$.
The maps in the left class are called \defemph{cofibrations}.

There is also a notion of non-split trivial fibration.
\begin{defi}
  A morphism $F : \CC \to \CD$ in $\CMod_\Th$ is a \defemph{non-split trivial fibration} if for every generating sort $\iS \in \GenTy_\Th$ the actions of $F$ on the elements of $\iS$ are non-split surjections.
  \defiEnd{}
\end{defi}

Finally, contextuality is specified using the notion of contextual isomorphism.
\begin{defi}
  A morphism $F : \CC \to \CD$ in $\CMod_\Th$ is a \defemph{contextual isomorphism} if for every generating sort $\iS \in \GenTy_\Th$, the actions of $F$ on the elements of $\iS$ are bijections.
  \defiEnd{}
\end{defi}
The contextual isomorphisms are the right class of maps of the orthogonal factorization system generated by $I$, or equivalently in the weak factorization system generated by $I \cup E$.

Any morphism $F : \CC \to \CD$ admits a unique (up to isomorphism) factorization $\CC \to \cxlim(F) \to \CD$ where $\CC \to \cxlim(F)$ is a $(I \cup E)$-cofibration and $\cxlim(F) \to \CD$ is a contextual isomorphism.
The model $\cxlim(F)$ is called the \defemph{contextual image} of $F$.

The \defemph{contextual core} $\cxl(\CC)$ of a model $\CC$ is the contextual image of the unique morphism $\Init_\Th \to \CC$.
The map $\cxl(\CC) \to \CC$ is a contextual isomorphism by definition.
When the map $\cxl(\CC) \to \CC$ is also an isomorphism in $\CMod_\Th$, we say that $\CC$ is \defemph{contextual}.

\begin{conj}\label{clm:ff_sign}
  There is a functor
  \[ \Init_\Th[-] : \Th^\op \to \CMod_\Th^\cxl. \]
  It is fully faithful, and its essential image consists of the finite $I$-cellular models of $\Th$.
  \defiEnd{}
\end{conj}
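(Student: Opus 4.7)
The plan is to define $\Init_\Th[\Gamma]$ by a universal property that makes functoriality and contravariance automatic, then reduce fully faithfulness and the description of the essential image to a syntactic analysis of contexts in $\Th$. For each $\Gamma \in \Th$, let $U_\Gamma : \CMod_\Th \to \CSet$ be the forgetful functor sending a model $\MC : \Th \to \widehat{\CC}$ to the set of $\Gamma$-shaped elements of $\MC$ over the terminal context of $\CC$. Since $\CMod_\Th$ is locally finitely presentable, $U_\Gamma$ admits a left adjoint; define $\Init_\Th[\Gamma]$ as the contextual core of the value of this left adjoint at the singleton. Any substitution $\sigma : \Delta \to \Gamma$ in $\Th$ induces a natural transformation $U_\Delta \Rightarrow U_\Gamma$ by post-composition on shape-elements, hence a morphism $\Init_\Th[\Gamma] \to \Init_\Th[\Delta]$ in $\CMod_\Th^\cxl$, contravariantly and functorially in $\sigma$ by uniqueness.

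For fully faithfulness, the universal property unwinds to a bijection
\[ \Hom_{\CMod_\Th^\cxl}(\Init_\Th[\Gamma], \Init_\Th[\Delta]) \cong U_\Gamma(\Init_\Th[\Delta]), \]
so it suffices to establish a syntax-semantics identification $\Th(\Delta, \Gamma) \cong U_\Gamma(\Init_\Th[\Delta])$. The natural map from left to right sends $\sigma$ to the image under $\sigma$ of the universal $\Delta$-shaped element in $\Init_\Th[\Delta]$. To invert it, I would give an explicit syntactic description of $\Init_\Th[\Delta]$: its contexts are iterated extensions of $\Delta$ by basic representable types in $\Th$, its types and terms are inherited from $\Th$, and its $(\reppre)$-CwF structure is pulled back from that of $\Th$. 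Under this description the universal $\Gamma$-shaped element over the terminal context of $\Init_\Th[\Delta]$ is literally a substitution $\Delta \to \Gamma$ in $\Th$.

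For the essential image, any finite $I$-cellular contextual model is built by a finite sequence of attachments of basic sort elements, which assembles into a context $\Gamma \in \Th$; the universal property then identifies the model with $\Init_\Th[\Gamma]$. Conversely, by the normal form analysis of~\cref{ssec:sogat_structure_types}, every context of $\Th$ is a finite telescope of basic representable types, and this telescope translates directly into a finite $I$-cellular presentation of $\Init_\Th[\Gamma]$.

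The hard part is the syntactic description of $\Init_\Th[\Delta]$ underlying the displayed bijection: this is essentially an initiality theorem identifying the QIIT-defined free model with an internally presented sub-CwF of $\Th$. Care is required because adjoining the universal $\Delta$-element may force non-trivial identifications in the contextual core, so the identification must be formulated at the level of categories with families rather than as a naive coslice of $\Th$. Carrying this out uniformly for arbitrary SOGATs, including type-theoretic ones with non-trivial sort dependencies, is the technical content left open by the conjectural status of the statement.
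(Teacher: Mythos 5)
The paper does not actually prove this statement: it is labelled a conjecture, and the only indication of an intended argument is the remark in the discussion section that one should reduce to first-order GATs (the underlying $\lexpre$-CwF of $\Th$ classifies the contextual models of $\Th$) and then invoke the classical Gabriel--Ulmer-style correspondence between the syntactic category of an essentially algebraic theory and its finitely presented models. Your route is genuinely different: a direct construction of $\Init_\Th[\Gamma]$ as a free model, with fully faithfulness reduced to a term-model description of that free model. The skeleton is sound --- the left adjoint exists by local finite presentability, the variance works out, and the essential-image argument is the right one provided you spell out the currying of the $\Pi_\rep$-components of a context $\Gamma$ into $I$-cells attached over extended contexts (an element of $\Pi_\rep(A,B)$ over the terminal object is an element of $B$ over the context extended by $A$, which eventually bottoms out in elements of generating sorts). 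Note that this same currying shows the free model on a generic $\Gamma$-element is already finite $I$-cellular, hence contextual, so your passage to the contextual core is vacuous; this is worth making explicit, since otherwise you would owe an argument that $\Hom(\cxl(\CF),\CN)\cong\Hom(\CF,\CN)$ for contextual $\CN$, which does not follow formally from the coreflection (whose adjunction runs in the other direction).

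The genuine gap is exactly where you locate it: the identification $U_\Gamma(\Init_\Th[\Delta])\cong\Th(\Delta,\Gamma)$. Saying that $\Init_\Th[\Delta]$ has ``contexts given by iterated extensions of $\Delta$ by basic representable types'' with ``types and terms inherited from $\Th$'' is not a proof but a restatement of what must be shown: that the QIIT-presented free model on a generic $\Delta$-element has no more terms, and no more identifications between them, than the syntactic category supplies --- i.e., that the evident candidate (essentially the contextual slice $\Th\sslash\Delta$ that the paper uses elsewhere) genuinely satisfies the universal property of the free model, including soundness and completeness for the $E^\tm$-equations of $\Th$ and correct handling of the second-order arguments. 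That initiality/term-model theorem is the entire content of the conjecture, and your proposal leaves it open, as you acknowledge. The paper's suggested reduction to first-order GATs is attractive precisely because it converts this kernel into a citation of the well-understood essentially-algebraic case rather than a fresh induction over a second-order syntax; if you pursue the direct route, that induction is the work you still have to do.
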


\subsection{Homotopy relations}

The data of homotopy relations on a SOGAT equips every sort of the SOGAT with a notion of identification or sameness that is weaker than definitional equality.
It does not presuppose that these relations are well-behaved, for example they could fail to even be symmetric and transitive.
The well-behavedness of homotopy relations is encoded in the notion of external univalence.

\begin{defi}
  The data of \defemph{homotopy relations} on a SOGAT $\Th$ consists, for every generating sort $\iS$, of a reflexive type-valued relation on its elements:
  \begin{alignat*}{1}
    & \_ \sim_{\iS(\_)} \_ : (\sigma : \partial \iS) (x,y : \iS(\sigma)) \to \MT, \\
    & \hrefl_{\iS(\_)} : (\sigma : \partial \iS) (x : \iS(\sigma)) \to x \sim_{\iS(\sigma)} x.
      \tag*{\defiEnd{}}
  \end{alignat*}
\end{defi}
We use $\hrefl$ for the constant homotopy, to distinguish it from reflexivity witnesses of other relations, such as $\irefl$ in $\Th_\Th$, or $\refl$ in a $(\repinfty)$-CwF.

\begin{exa}\label{exa:homotopy_relations_cat}
  The theory $\Th_{\ECat}$ of $E$-categories is equipped with the following homotopy relations:
  \begin{alignat*}{1}
    & (x \sim_{\iOb} y) \triangleq (x \icong y), \\
    & \hrefl_\iOb(x) \triangleq \iid, \\
    & (f \sim_{\iHom(x,y)} g) \triangleq \iEqHom(f,g), \\
    & \hrefl_{\iHom(x,y)}(f) \triangleq \irefl(f), \\
    & (p \sim_{\iEqHom(f,g)} q) \triangleq \Unit \\
    & \hrefl_{\iEqHom(f,g)}(p) \triangleq \star,
  \end{alignat*}
  where $(x \icong y)$ is the sort of isomorphisms between $x$ and $y$.
  \begin{alignat*}{2}
    & (x \icong y)
    &&{ }\triangleq (f : \iHom(x,y)) \times (g : \iHom(y,x)) \\
    &&& \quad \times \iEqHom(f \icirc g, \iid) \times \iEqHom(g \icirc f, \iid).
  \end{alignat*}

  The theories $\Th_{\CCat}$, $\Th_{\CMonCat}$ and $\Th_{\CStrMonCat}$, which can be seen as extensions of $\Th_{\ECat}$, are equipped with the same homotopy relations.
  \defiEnd{}
\end{exa}

\begin{exa}\label{exa:homotopy_relations_id}
  The theory $\Th_{\Id}$ of weak identity types is equipped with the following homotopy relations:
  \begin{alignat*}{1}
    & (A \sim_{\iTy} B) \triangleq (A \simeq B), \\
    & \hrefl_\iTy(A) \triangleq \iId_A, \\
    & (x \sim_{\iTm(A)} y) \triangleq \iTm(\iId_A(x,y)), \\
    & \hrefl_{\iTm(A)}(x) \triangleq \irefl(x).
  \end{alignat*}
  where $(x \simeq y)$ is the sort of equivalences between $A$ and $B$.
  Equivalences are defined to be type-valued relations that are functional in both directions:
  \begin{alignat*}{2}
    & (A \simeq B)
    &&{ }\triangleq (R : A \to B \to \iTy) \\
    &&& \quad \times ((a : A) \to \isContr((b:B) \times R(a,b)) \\
    &&& \quad \times ((b : B) \to \isContr((a:A) \times R(a,b)).
  \end{alignat*}
  Note that $(b:B) \times R(a,b)$ and $(a:A) \times R(a,b)$ are not types (elements of $\iTy$) in the absence of $\Sigma$-types, but rather telescopes of length $2$.
  Contractibility can be defined for telescopes using the results of~\cref{sec:weak_id_types}.

  In that setting the relation specified by the identity type $\iId_A$ is the identity equivalence.
  \defiEnd{}
\end{exa}

The data of homotopy relations is sufficient to equip $\CMod_\Th$ with a notion of weak equivalences, trivial cofibrations and fibrations.
Together with the previously defined (cofibrations, trivial fibrations) weak factorization system, the category $\CMod_\Th$ is equipped with all the data of a model category, but may still fail to satisfy its axioms.

\begin{defi}
  A morphism $F : \CC \to \CD$ of models of $\Th$ is a \defemph{weak equivalence} if it satisfies the following weak lifting property for every generating sort $\iS \in \GenTy_\Th$:
  \begin{description}
  \item[Weak lifting] For every object $\Gamma \in \CC$, boundary $(\gamma:\Gamma \vdash \sigma(\gamma) : \partial \iS) \in \CC$ and term $(\gamma:F(\Gamma) \vdash x(\gamma) : \iS(F(\sigma)(\gamma))) \in \CD$, there is a lifted term
    \[ (\gamma:\Gamma \vdash x_0(\gamma) : \iS(\sigma(\gamma))) \in \CC \]
    and a homotopy
    \[ (\gamma:F(\Gamma) \vdash p(\gamma) : F(x_0)(\gamma) \sim_{\iS(F(\sigma)(\gamma))} x(\gamma)) \in \CD. 
      \tag*{\defiEnd{}}
      \]
  \end{description}
\end{defi}

As with trivial fibrations, we actually have two notions of weak equivalences: split weak equivalences, which come with the data of weak lifts for every lifting problem, and non-split weak equivalences, for which only the mere existence of lift is assumed.

\begin{exa}
  In categories seen as contextual models of $\Th_\CCat$, the weak equivalences are the functors that are essentially surjective on objects, morphisms and equalities between morphisms, \ie functors that are essentially surjective on objects and fully faithful.
  The weak equivalences in $\Th_\CMonCat$ and $\Th_\CStrMonCat$ are the same.

  Weak equivalences in $\Th_\Id$ are CwF morphisms $F : \CC \to \CD$ that are essentially surjective on types and terms, \ie that satisfy the following two weak lifting conditions:
  \begin{description}
  \item[weak type lifting] For every $\Gamma \in \CC$ and $(F(\Gamma) \vdash A\ \type) \in \CD$, there is some $(\Gamma \vdash A_0\ \type) \in \CC$ and an equivalence $\alpha : F(A_0) \simeq A$ in $\CD$.
  \item[weak term lifting] For every $(\Gamma \vdash A\ \type) \in \CC$ and $(F(\Gamma) \vdash a : F(A)) \in \CD$, there is some $(\Gamma \vdash a_0 : A) \in \CC$ and an identification $p : F(a_0) \simeq_{F(A)} a$ in $\CD$.
  \end{description}
  \defiEnd{}
\end{exa}

\begin{defi}
  A morphism $F : \CC \to \CD$ of models of $\Th$ is a \defemph{fibration} if it satisfies the following path lifting property for every generating sort $\iS \in \GenTy_\Th$:
  \begin{description}
  \item[Path lifting] For every object $\Gamma \in \CC$, boundary $(\gamma:\Gamma \vdash \sigma(\gamma) : \partial \iS) \in \CC$, terms $(\gamma:F(\Gamma) \vdash x(\gamma),y(\gamma) : \iS(F(\sigma)(\gamma))) \in \CD$ and homotopy $(\gamma : F(\Gamma) \vdash p(\gamma) : F(x)(\gamma) \sim_{\iS(F(\sigma)(\gamma))} y(\gamma)) \in \CD$, there is a lifted homotopy
    \[ (\gamma:\Gamma \vdash p_0(\gamma) : x(\gamma) \sim_{\iS(\sigma(\gamma))} y_0(\gamma)) \in \CC \]
    such that $F(y_0) = y$ and $F(p_0) = p$.
  \end{description}
\end{defi}
These fibrations could be called right fibrations, since their lifting property only involves homotopies with a fixed left endpoint.
We could also consider a notion of left fibration.

The fibrations are the right class of maps of a weak factorization system generated by $J = \{J^\iS \mid \iS \in \GenTy_\Th \}$, where
\[ J^\iS : \Free_\Th(\bm{\Gamma} \vdash \bm{x} : \iS(\bm{\sigma})) \to \Free_\Th(\bm{\Gamma} \vdash \bm{p} : \bm{x} \sim_{\iS(\bm{\sigma})} \bm{y}). \]
Maps in the left class are called \defemph{trivial cofibrations}.

\begin{exa}
  Fibrations in categories seen as contextual models of $\Th_\CCat$ are isofibrations: functors $F : \CC \to \CD$ such that for every isomorphism $g : F(x) \cong y$ in $\CD$, there is an isomorphism $g_0 : x \cong y_0$ in $\CC$ such that $F(y_0) = y$ and $F(g_0) = g$.

  Fibrations in models of $\Th_\Id$ are CwF morphisms that satisfy lifting properties for equivalences and identifications.
  \defiEnd{}
\end{exa}

Without further assumptions, these classes of maps may fail to satisfy the axioms of a model structure (see~\cref{clm:ext_univ_model_structure}).
We won't need all axioms in this paper, but we need some directions of the $2$-out-of-$3$ property for weak equivalences.

\begin{defi}
  We say that the homotopy relations on $\Th$ are \defemph{transitive} if for every generating sort $\iS \in \GenTy_\Th$, we can derive
  \[ \forall \sigma\ x\ y\ z \to (x \sim_{\iS(\sigma)} y) \to (y \sim_{\iS(\sigma)} z) \to (x \sim_{\iS(\sigma)} z) \]
  in $\Th$.
  \defiEnd{}
\end{defi}

\begin{lem}\label{lem:modt_two_out_of_three_2}
  Let $\CC \xrightarrow{F} \CD \xrightarrow{G} \CE$ be a composable pair of morphisms.
  If both $G$ and $G \circ F$ are weak equivalences, then $F$ is also a weak equivalence.
\end{lem}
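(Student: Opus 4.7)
My plan is to verify the weak lifting property of $F$ directly at each generating sort $\iS \in \GenTy_\Th$. Suppose we are given an object $\Gamma \in \CC$, a boundary $(\gamma:\Gamma \vdash \sigma(\gamma) : \partial \iS) \in \CC$, and a term $(\gamma:F\Gamma \vdash x(\gamma) : \iS(F\sigma(\gamma))) \in \CD$. We must produce a lift $x_0$ in $\CC$ and a homotopy $F(x_0) \sim x$ in $\CD$.

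The first step is to obtain the lift $x_0$. Apply the weak lifting property of $G \circ F$ to the term $G(x) \in \CE$, whose boundary is exactly $(GF)\sigma$. This yields a term $(\gamma:\Gamma \vdash x_0(\gamma) : \iS(\sigma(\gamma)))$ in $\CC$ together with a homotopy $\alpha : GF(x_0) \sim_{\iS(GF\sigma)} G(x)$ in $\CE$. What remains is to reflect the homotopy $\alpha$ through $G$, so as to produce a homotopy $\alpha_0 : F(x_0) \sim x$ in $\CD$ with the prescribed endpoints $F(x_0)$ and $x$; this will witness the weak lifting property of $F$ at $\iS$.

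The reflection step is the main obstacle. The homotopy sort $\sim_{\iS}$ is a sort of $\Th$, in general built from several generating sorts via the structure of the homotopy relations data, and the weak lifting property of $G$ is stated only for generating sorts. When $\sim_{\iS}$ is itself a generating sort (as for $\iEqHom$ in $\Th_\CCat$ or $\iTm(\iId_A(-,-))$ in $\Th_\Id$), the reflection is a direct application of the weak lifting property of $G$ to $\alpha$, regarded as a term in $\CE$ with boundary $(F\sigma, F(x_0), x)$ in $\CD$. In general, the lifting property of $G$ must be extended inductively along the construction of $\sim_{\iS}$ from generating sorts: one lifts each constituent in turn, and uses compatibility laws of the homotopy relations (congruence under operations, and transitivity for equation-like sorts such as $\iEqHom$) to patch the boundaries of the partial lifts. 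Once this extended lifting property is available, applying it to $\alpha$ produces the desired $\alpha_0 : F(x_0) \sim x$ in $\CD$, and thus exhibits $F$ as a weak equivalence.
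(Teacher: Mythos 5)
Your proof takes exactly the paper's route: lift the given term along $G \circ F$ to obtain $x_0$ together with a homotopy $GF(x_0) \sim_{\iS} G(x)$ in $\CE$, then lift that homotopy along $G$ to a homotopy $F(x_0) \sim_{\iS} x$ in $\CD$ with the prescribed endpoints. The only difference is one of care: the paper simply asserts that ``since $G$ is a weak equivalence, we can find a lift of the homotopy,'' whereas you correctly observe that the sort $\sim_{\iS}$ is in general a composite type rather than a generating sort, so this step implicitly requires extending the weak lifting property of $G$ beyond $\GenTy_\Th$ --- a point the paper elides.
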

\begin{proof}
  Let $\iS$ be a generating sort of $\Th$ and take $(\gamma:\Gamma \vdash \sigma(\gamma) : \partial \iS) \in \CC$ and $(\gamma:F(\Gamma) \vdash a(\gamma) : \iS(F(\sigma)(\gamma))) \in \CD$.
  Since $G \circ F$ is a weak equivalence, we can find a lift $(\gamma:\Gamma \vdash a_0(\gamma) : \iS(\sigma(\gamma))) \in \CC$ and a homotopy $(\gamma:G(F(\Gamma)) \vdash p(\gamma) : G(F(a_0))(\gamma) \sim_\iS G(a)(\gamma)) \in \CE$.
  Since $G$ is a weak equivalence, we can find a lift $(\gamma:F(\Gamma) \vdash p_0(\gamma) : F(a_0)(\gamma) \sim_\iS a(\gamma))$ of the homotopy $p$.
  Then $p_0$ witnesses the fact that $a_0$ is a weak lift of $a$.
\end{proof}

\begin{lem}\label{lem:modt_two_out_of_three_1}
  Let $\CC \xrightarrow{F} \CD \xrightarrow{G} \CE$ be a composable pair of morphisms.
  Assume that the homotopy relations of $\Th$ are transitive.
  If both $G$ and $F$ are weak equivalences, then $G \circ F$ is also a weak equivalence.
\end{lem}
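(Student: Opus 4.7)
The plan is to mirror the proof of the previous~\cref{lem:modt_two_out_of_three_2}, but now performing the weak lifts in the opposite order and then composing the two resulting homotopies using transitivity. Concretely, fix a generating sort $\iS \in \GenTy_\Th$, an object $\Gamma \in \CC$, a boundary $(\gamma:\Gamma \vdash \sigma(\gamma) : \partial \iS) \in \CC$ and an element $(\gamma:G(F(\Gamma)) \vdash a(\gamma) : \iS(G(F(\sigma))(\gamma))) \in \CE$. We need to produce a lift in $\CC$ and a homotopy in $\CE$ exhibiting it as a weak lift of $a$ through $G \circ F$.

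First, apply the weak equivalence property of $G$ to the element $a$ over the boundary $F(\sigma)$ in $\CD$. This yields an element $(\gamma:F(\Gamma) \vdash a_1(\gamma) : \iS(F(\sigma)(\gamma))) \in \CD$ together with a homotopy $(\gamma:G(F(\Gamma)) \vdash q(\gamma) : G(a_1)(\gamma) \sim_{\iS} a(\gamma)) \in \CE$. Next, apply the weak equivalence property of $F$ to the element $a_1$ over the boundary $\sigma$ in $\CC$. This yields an element $(\gamma:\Gamma \vdash a_0(\gamma) : \iS(\sigma(\gamma))) \in \CC$ together with a homotopy $(\gamma:F(\Gamma) \vdash p(\gamma) : F(a_0)(\gamma) \sim_\iS a_1(\gamma)) \in \CD$.

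Now, since the homotopy relations are derived from the sorts and operations of $\Th$, the strict morphism $G$ preserves them, so $G(p)$ gives a homotopy $(\gamma:G(F(\Gamma)) \vdash G(p)(\gamma) : G(F(a_0))(\gamma) \sim_\iS G(a_1)(\gamma)) \in \CE$. By the assumed transitivity of the homotopy relation on $\iS$, we can combine $G(p)$ and $q$ to obtain a homotopy from $G(F(a_0))$ to $a$ in $\CE$. This witnesses $a_0$ as a weak lift of $a$ along $G \circ F$, and so $G \circ F$ is a weak equivalence.

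The only subtle point is the preservation of the homotopy relation by $G$, but this follows automatically from the fact that in every example of interest the relation $\sim_\iS$ and its reflexivity $\hrefl$ are definable from the generating sorts and operations of $\Th$, hence transported by any morphism in $\CMod_\Th$; no extra hypothesis on $G$ is needed beyond it being a strict morphism of models.
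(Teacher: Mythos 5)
Your proof is correct and follows essentially the same route as the paper's: lift $a$ through $G$ first, then lift the result through $F$, push the second homotopy forward along $G$ (which acts on all sorts of $\Th$, including the one encoding $\sim_\iS$), and compose the two homotopies in $\CE$ using the assumed transitivity. The only difference is cosmetic (variable names and the explicit remark that $G$ preserves the homotopy relation, which the paper uses implicitly when writing $G(p_0)$).
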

\begin{proof}
  Let $\iS$ be a generating sort of $\Th$ and take $(\gamma:\Gamma \vdash \sigma(\gamma) : \partial \iS) \in \CC$ and $(\gamma:G(F(\Gamma)) \vdash a(\gamma) : \iS(G(F(\sigma))(\gamma))) \in \CE$.
  Since $G$ is a weak equivalence, we can find a lift $(\gamma:F(\Gamma) \vdash a_0(\gamma) : \iS(F(\sigma)(\gamma))) \in \CD$ and a homotopy $(\gamma:G(F(\Gamma)) \vdash p_0(\gamma) : G(a_0)(\gamma) \sim_\iS a(\gamma)) \in \CE$.
  Since $F$ is a weak equivalence, we can find a lift $(\gamma:\Gamma \vdash a_1(\gamma) : \iS(\sigma(\gamma))) \in \CC$ and a homotopy $(\gamma:F(\Gamma) \vdash p_1(\gamma) : F(a_0)(\gamma) \sim_\iS a_0(\gamma)) \in \CD$.
  By transitivity, we obtain a homotopy $(\gamma:G(F(\Gamma)) \vdash G(p_0)(\gamma) \cdot p_1(\gamma) : G(F(a_1))(\gamma) \sim_\iS a(\gamma)) \in \CE$.
  This proves that $a_1$ is a weak lift of $a$ along $G \circ F$.
\end{proof}

\subsection{External univalence}

The notion of external univalence then captures the well-behavedness of homotopy relations on a SOGAT.

Here a $(\reppre,\Id_{\mathsf{ws}})$-CwF is a $(\reppre)$-CwF equipped with weakly stable identity types satisfying function extensionality.
We have not defined weakly stable identity types, but they are only needed for this definition, which is only used to define the notion of external univalence.
\begin{defi}\label{def:univalent_internal_model}
  Let $\CC$ be a $(\reppre,\Id_{\mathsf{ws}})$-CwF equipped with an internal model of $\Th$, \ie a $(\reppre)$-CwF morphism $\Th \to \CC$.
  We say that $\CC$ is \defemph{saturated} (or that the internal model is \defemph{univalent}) if for every generating sort $\iS$, the dependent type
  \[ (y : \iS(\sigma)) \times (p : x \sim_{\iS(\sigma)} y) \]
  is contractible over the context $(\sigma : \partial \iS, x : \iS(\sigma))$.
  \defiEnd{}
\end{defi}
Note that the above definition also makes sense for $(\repinfty)$-CwFs.

\begin{defi}
  We say that $\Th$ satisfies \defemph{external univalence} if $\Th$ can be equipped with weakly stable identity types that satisfy function extensionality and saturation.
  \defiEnd{}
\end{defi}

The following conjecture was made in~\cite{ExternalUnivalence}.
\begin{conj}\label{clm:ext_univ_model_structure}
  A SOGAT $\Th$ satisfies external univalence if and only if the category $\CMod_\Th^\cxl$, equipped with the previously defined classes of cofibrations, weak equivalences and fibrations, is a left-semi model category.
  \defiEnd{}
\end{conj}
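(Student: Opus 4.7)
The plan is to establish both directions of the equivalence by transferring structure between the classifying $(\reppre)$-CwF $\Th$ and the category $\CMod_\Th^\cxl$ of contextual models.

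For the forward direction, assume $\Th$ is equipped with weakly stable identity types satisfying function extensionality and saturation. The first step is to transfer these identity types to every contextual model $\CC$: an internal model is a $(\reppre)$-CwF morphism $\Th \to \widehat{\CC}$, and weak stability is exactly what makes these identity types pull back and push forward through such morphisms, giving a path object functor on $\CMod_\Th^\cxl$. From the path objects one obtains that every weak equivalence between cofibrant objects is a homotopy equivalence; combined with saturation, which pins down the meaning of the homotopy relations $\sim_{\iS(\sigma)}$ as the identity types restricted to a generating sort, this gives the 2-out-of-3 property for weak equivalences (generalizing \cref{lem:modt_two_out_of_three_1,lem:modt_two_out_of_three_2} to the full 2-out-of-3 statement). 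The cofibration--trivial fibration factorization is already provided by the generating set $I$ via the small object argument. For the trivial cofibration--fibration factorization, one applies the small object argument to the generating set $J = \{J^\iS\}$; one must then show that every $J$-cellular map is a weak equivalence, which reduces to showing $J^\iS$ is a weak equivalence for each generating sort $\iS$. This is where saturation is essential: $J^\iS$ adjoins a homotopy with a given endpoint, and saturation says exactly that the type of such pairs is contractible.

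For the reverse direction, assume $\CMod_\Th^\cxl$ carries a left semi-model structure with the specified classes. The goal is to recover weakly stable identity types on $\Th$. The strategy is to view $\Th$ itself as a cofibrant object (the cellular presentation of a SOGAT makes it $\{I^\ty,I^\repty,I^\tm,E^\tm\}$-cellular, and in particular cofibrant up to the treatment of $E^\tm$) and apply the semi-model-theoretic factorization of the diagonal $\Th \to \Th \times \Th$ to obtain a path object $\Th^{I}$. The induced relations on sorts provide the candidate identity types; weak stability amounts to checking that this construction is compatible with substitution, which follows from the naturality of the factorization in $\CMod_\Th^\cxl$. Function extensionality is inherited from the fact that $\Pi_\rep$-types preserve weak equivalences in a left semi-model structure. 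Saturation is then extracted from the characterization of the homotopy relations $\sim_{\iS(\sigma)}$ as path objects on the free models $\Free_\Th(\bm{\Gamma} \vdash \bm{x} : \iS(\bm{\sigma}))$, which is essentially dual to the role played by $J^\iS$ in the forward direction.

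The main obstacle is the forward direction, specifically the verification that $J^\iS$ is a trivial cofibration and that the small object argument on $J$ produces weak equivalences. In a genuine model category this would follow from standard arguments, but here we must handle the fact that our weak equivalences are defined via the possibly ill-behaved homotopy relations, so closure under transfinite composition and pushout must be re-proven from scratch using only the internal identity types on $\Th$. A secondary obstacle is the handling of equations: the presence of $E^\tm$ in the cellular presentation of a SOGAT means $\Th$ is not quite cofibrant as a model of itself in the strict sense required by the reverse direction, so one needs a cofibrant replacement $\widetilde{\Th} \to \Th$ and must verify that the path object construction descends. Because of these technical difficulties, and since~\cite{ExternalUnivalence,HoThTT,IsaevMS} only establish special cases, we leave the full statement as a conjecture.
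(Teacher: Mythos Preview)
The statement is a conjecture in the paper, not a theorem, and the paper gives no proof. It only offers a one-sentence sketch in \cref{sec:discussion}: the argument should proceed by ``translating between the weakly stable identity types of $\Th$ and properties of the weak factorization systems of $\CMod_\Th^\cxl$ with regard to the fully faithful functor $\Init_\Th[-] : \Th^\op \to \CMod_\Th^\cxl$,'' with finitariness of $\Th$ ensuring that finitely generated models suffice. Your proposal likewise ends by leaving the statement as a conjecture, so at that level the two agree.

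Your outlined strategy differs from the paper's in an important way. The paper's intended bridge between the syntactic side ($\Th$) and the semantic side ($\CMod_\Th^\cxl$) is the functor $\Init_\Th[-]$ of \cref{clm:ff_sign}, which identifies contexts of $\Th$ with finitely generated contextual models. Weakly stable identity types over a context $\Gamma \in \Th$ should correspond, under this functor, to path objects for the model $\Init_\Th[\Gamma]$, and the generating (trivial) cofibrations $I^\iS$, $J^\iS$ are images of boundary and filler inclusions in $\Th$. You instead try to work with $\Th$ directly as an object of $\CMod_\Th^\cxl$ and factor its diagonal $\Th \to \Th \times \Th$. This is a category error: $\Th$ is the classifying $(\reppre)$-CwF, not a model of itself in the sense of the paper (a model is a category $\CC$ together with a morphism $\Th \to \widehat{\CC}$). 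Even if you replace $\Th$ by the initial model $\Init_\Th$, a single path object on $\Init_\Th$ does not yield identity types over arbitrary contexts of $\Th$; you need the whole family $\Init_\Th[-]$ indexed by $\Th^\op$.

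Your forward-direction sketch is broadly reasonable, and you correctly identify that the crux is showing each $J^\iS$ is a weak equivalence (this is where saturation enters). But note that the paper's approach would phrase this dually: saturation says the inclusion of $(\sigma,x)$ into $(\sigma,x,y,p)$ is a section of a contractible type in $\Th$, and under $\Init_\Th[-]$ this becomes the statement that $J^\iS$ has a deformation retraction in $\CMod_\Th^\cxl$.
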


\begin{exa}
  The theories $\Th_\CCat$, $\Th_\CMonCat$ and $\Th_\CStrMonCat$ of categories, monoidal categories and strict monoidal categories all satisfy external univalence with respect to the homotopy relations defined in~\cref{exa:homotopy_relations_cat}.

  Indeed, the fact that $\Th_\CCat$ satisfies external univalence is proven in~\cite{ExternalUnivalence}.
  Using the methods presented in that paper, in order to prove that $\Th_\CMonCat$ and $\Th_\CStrMonCat$ satisfy external univalence, it suffices to additionally show that the additional operations of $\Th_\CMonCat$ and $\CStrMonCat$ preserve isomorphisms in a way that preserves identities.
  This only needs to be checked for $\iI$ and $\iotimes$; it is immediate for $\iI$ and follows from the bifunctoriality for $\iotimes$.
  \defiEnd{}
\end{exa}

\begin{exa}
  The theory $\Th_\Id$ of weak identity types satisfies external univalence for the homotopy relations of~\cref{exa:homotopy_relations_id}.
  \defiEnd{}
\end{exa}

%%% Local Variables:
%%% mode: latex
%%% TeX-master: "main"
%%% End:

\section{Equational extensions and conservativity}\label{sec:equational_extensions}

\subsection{Equational extensions}

We now define the notion of equational extension of a SOGAT.
Let $\Th$ be a SOGAT equipped with homotopy relations.

An equational extension of $\Th$ is an extension $\Th \to \Th_E$ of SOGATs, where $E$ is a family of homotopies in $\Th$.
In $\Th_E$, all the homotopies in $E$ become definitional equalities.
Here, a homotopy in $\Th$ is a tuple $(\iS,\Gamma,\sigma,x,y,p)$, where $\iS \in \GenTy_\Th$ is a generating sort and $(\gamma:\Gamma \vdash p(\gamma) : x(\gamma) \sim_{\iS(\sigma(\gamma))} y(\gamma)) \in \Th$ is a homotopy between $x$ and $y$ over $\Gamma$.

\begin{defi}
  Let $E$ be a family of homotopies of $\Th$; elements of $E$ are said to be \defemph{marked}.

  The \defemph{equational extension} of $\Th$ by $E$ is the SOGAT $\Th_E$ defined by extending $\Th$ by equations $(\gamma:\Gamma \vdash x = y)$ and
  $(\gamma:\Gamma \vdash p(\gamma) = \hrefl_{\iS(\sigma(\gamma))}(x(\gamma)))$
  for every marked homotopy $(\gamma:\Gamma \vdash p(\gamma) : x(\gamma) \sim_{\iS(\sigma(\gamma))} y(\gamma)) \in E$.
  \defiEnd
\end{defi}

The following are examples of equational extensions of first-order GATs.
\begin{exas} \hfill
  \begin{enumerate}
  \item 
    The extension $\Th_{\ECat} \to \Th_{\CCat}$ is the equational extension obtained by marking $p : \iEqHom(f,g)$ over the context $(x,y:\iOb,f,g:\iHom(x,y),p:\iEqHom(f,g))$.

  \item 
    The extension $\Th_{\CMonCat} \to \Th_{\CStrMonCat}$ is the equational extension obtained by marking the associator and unitors in $\Th_{\CMonCat}$.

  \item
    We can also consider the extension $\Th_{\ECat} \to {(\Th_{\ECat})}_E$ obtained by only marking the unitality and associativity laws.
    This extension can be better behaved than $\Th_{\CCat}$ for some applications.
    Indeed, ${(\Th_{\ECat})}_E$ has decidable equality, while $\Th_{\CCat}$ does not (\eg by reduction to the undecidability of the word problem for groups).
    \defiEnd{}
  \end{enumerate}
\end{exas}

We give some examples of equational extensions of type theories, obtained by marking a family of type equivalences and identifications between terms.
\begin{exas} \hfill
  \begin{enumerate}
  \item For the extensions from weak computation rules to strict computation rules, we mark the computation rules that should be made strict.
    For example, in the case of identity types, we mark the family of identifications $\J_{\beta}(P,d) : \Id(\J_{A,x}(P,d,\refl),d)$ over the context $(A,x,P,d)$.
  \item When considering the extension from inductive natural numbers to natural numbers with a strictly associative addition, we proceed in in two steps.
    First we extend the base theory $\Th_0$ by adding
    \[ \mathsf{plus} : \NatTy \to \NatTy \to \NatTy \]
    as a new primitive operation, along with some of the equalities that it satisfies, such as
    \begin{alignat*}{1}
      & \mathsf{plus}_{0} : \forall x \to \mathsf{plus}(0,x) \simeq x, \\
      & \mathsf{plus}_{1} : \forall x \to \mathsf{plus}(x,0) \simeq x, \\
      & \mathsf{plus}_{2} : \forall x\ y\ z \to \mathsf{plus}(\mathsf{plus}(x,y)), z) \simeq \mathsf{plus}(x, \mathsf{plus}(y,z)).
    \end{alignat*}
    The operation $\mathsf{plus}$ is homotopic to the usual inductively defined addition, but not strictly equal to it.
    The weak type theory $\Th$ is then this extended theory.
    Because $\Th$ can be seen as the extension of $\Th_0$ by new elements of contractible sorts, it should be a conservative extension of $\Th_0$ (see~\cref{conj:contractible_extension}).

    As a second step, we consider the equational extension of $\Th$ obtained by marking the identifications $\mathsf{plus}_{0}$, $\mathsf{plus}_{1}$, $\mathsf{plus}_{2}$, etc.
    Thus the extended theory $\Th_E$ includes the strict equalities $\mathsf{plus}(0,x) = x$, $\mathsf{plus}(x,0) = x$, $\mathsf{plus}(\mathsf{plus}(x,y),z) = \mathsf{plus}(x,\mathsf{plus}(y,z))$, etc.
    It also includes the strict equalities $\mathsf{plus}_{0} = \refl$, $\mathsf{plus}_{1} = \refl$, $\mathsf{plus}_{2} = \refl$, etc.

  \item
    Similarly, we can consider the extension of type theory by a composition operation on paths satisfying strictly the laws of a $1$-groupoid.
    We start by extending the base theory $\Th_0$ by new elements
    \begin{alignat*}{1}
      & \mathsf{trans} : (x \simeq y) \to (y \simeq z) \to (x \simeq z), \\
      & \mathsf{inv} : (x \simeq y) \to (y \simeq x), \\
      & \mathsf{idl} : \mathsf{trans}(\refl, p) \simeq p, \\
      & \mathsf{inv}\text{-}\refl : \mathsf{inv}(\refl) \simeq \refl, \\
      & \mathsf{idr} : \mathsf{trans}(p, \refl) \simeq p, \\
      & \mathsf{assoc} : \mathsf{trans}(\mathsf{trans}(p, q), r) \simeq \mathsf{trans}(p, \mathsf{trans}(q,r)), \\
      & \mathsf{inv}\text{-}\mathsf{l} : \mathsf{trans}(\mathsf{inv}(p), p) \simeq \refl, \\
      & \mathsf{inv}\text{-}\mathsf{r} : \mathsf{trans}(p, \mathsf{inv}(p)) \simeq \refl,
    \end{alignat*}
    along with some higher-dimensional identifications that are needed to ensure that this is an extension by new elements of \emph{contractible} sorts.
    We can see $\mathsf{idl}$ and $\mathsf{inv}\text{-}\refl$ as identifications specifying $\mathsf{trans}$ and $\mathsf{inv}$ using the universal property of $\Id$-types, so we need additional identifications specifying $\mathsf{idr}$, $\mathsf{assoc}$, $\mathsf{inv}\text{-}\mathsf{l}$ and $\mathsf{inv}\text{-}\mathsf{r}$:
    \begin{alignat*}{1}
      & - : \mathsf{idr}(\refl) \simeq \mathsf{idl}(\refl), \\
      & - : \mathsf{assoc}(p,q,\refl) \simeq \dots, \\
      & - : \mathsf{inv}\text{-}\mathsf{l}(\refl) \simeq \dots, \\
      & - : \mathsf{inv}\text{-}\mathsf{r}(\refl) \simeq \dots,
    \end{alignat*}
    where the right-hand sides should be the usual proofs of these laws.
    
    We obtain a theory $\Th$ that should be a conservative extension of $\Th_0$.

    Secondly, we consider the equational extension of $\Th$ obtained by marking the identifications $\mathsf{idl}$, $\mathsf{inv}\text{-}\refl$, \etc, as well as the additional higher-dimensional identifications.
  \item To consider the extension of a type theory with a new universe of strict propositions, we would also perform two steps.
    As a first step, we introduce a new constant type $\mathsf{SProp}$, along with an equivalence $\mathsf{SProp} \simeq \mathsf{HProp}$ with the type $\mathsf{HProp}$ of propositions.
    We write $F : \mathsf{SProp} \to \mathsf{HProp}$ for the associated transport function.

    Secondly, we mark the family of identifications
    \[ (A : \mathsf{SProp}) (a, b : F(A)) \to \Id(a,b). \]

    In the resulting strict type theory, the only way to obtain closed elements of $\mathsf{SProp}$ is to use the inverse of the equivalence $F$ to replace elements of $\mathsf{HProp}$ by elements in $\mathsf{SProp}$.

    Note that the equational extension that marks instead the family of identifications
    \[ (A : \mathsf{HProp}) (a, b : A) \to \Id(a, b) \]
    is not a conservative extension in the absence of UIP.
    Indeed, as remarked in \cite{SProp}, if all propositions are strict propositions, then UIP holds.

  \item So far we have only marked identification between terms.
    We can also mark equivalences between types.

    For example, let $\Th$ be a type theory with a weak Tarski universe closed under identity types.
    This means that $\Th$ extends $\Th_\Id$ with:
    \begin{alignat*}{1}
      & \CU : \Ty, \\
      & \El : \CU \to \Ty, \\
      & \widehat{\Id} : (A : \CU) \to \El(A) \to \El(A) \to \CU, \\
      & \alpha : \forall A\ x\ y \to \El(\widehat{\Id}(A,x,y)) \simeq \Id(\El(A),x,y).
    \end{alignat*}

    We then mark the equivalence $\alpha$ over the context $(A,x,y)$.
    The resulting theory $\Th_E$ then has a strict Tarski universe.

    Note that we have assumed that the same definition of equivalence (say relational equivalences) is used in the specification of $\alpha$ and in the specification of the homotopy relations of $\Th$ in~\cref{exa:homotopy_relations_id}.

    If $\alpha$ were instead a family of say half-adjoint equivalences, we would need to construct the extension in two steps similarly to the previous examples.
    In the first step, we add a new family $\alpha'$ of relational equivalences, along with homotopies between $\alpha$ and $\alpha'$.
    In the second step, we consider the equational extension obtained by marking the equivalences $\alpha'$.
    
  \item As a last example, we can also mark the universal identification $p$ over the generic context $(A:\Ty,x:A,y:A,p:\Id(x,y))$.
    The corresponding strict type theory then includes the equality reflection rule.
    Equivalently, one might mark all identifications over arbitrary contexts.
    \defiEnd{}
  \end{enumerate}
\end{exas}

\subsection{Morita equivalences}

We now define the notion of Morita equivalence between SOGATs.
In general, two theories are Morita equivalent when they have equivalent categories of models.
Accordingly, two SOGATs equipped with homotopy relations should be Morita equivalent when they have equivalent $\infty$-categories of models (provided that these $\infty$-categories exist).
Isaev~\cite{IsaevMorita} has defined a notion of Morita equivalence between type theories and shown that two theories are Morita equivalent if and only if their categories of models are Quillen equivalent (provided that the model or semi-model structures on the categories of models exist).
We give a simpler definition at the level of the classifying $(\reppre)$-CwFs.

Let $\Th_1 \to \Th_2$ be a morphism of SOGATs, where $\Th_1$ is equipped with homotopy relations.

\begin{defi}\label{def:theq}
  We say that the morphism $\Th_1 \to \Th_2$ is a (split/non-split) Morita equivalence if the map $\Th_1 \to \Th_2$ is a (split/non-split) weak equivalence in $\CMod_{\Th_1}$.
  \defiEnd
\end{defi}

We have an adjunction between the categories $\CMod_{\Th_1}$ and $\CMod_{\Th_2}$ of models of $\Th_1$ and $\Th_{2}$.
\begin{center}\(\begin{tikzcd}
    \CMod_{\Th_1} \ar[r, swap, bend right, "L"] \ar[r, phantom, "\top"] & \CMod_{\Th_2} \ar[l, swap, bend right, "R"]
  \end{tikzcd}\)\end{center}
When $\Th_2$ is an equational extension of $\Th_1$, the right adjoint $R$ is the forgetful functor that forgets that a model satisfies the additional equations of $\Th_2$.

We have the following characterizations of Morita equivalences.
Condition (\ref{itm:characterization_morita_eqv_1}) is Isaev's definition of Morita equivalence~\cite{IsaevMorita}.
The full proof is left to future work, but it ought to follow from the same ideas and methods as the proof of~\cref{clm:ext_univ_model_structure}.

\begin{conj}\label{clm:characterization_morita_eqv}
  Assume that $\Th_1$ satisfies external univalence.
  Then the following conditions are equivalent:
  \begin{enumerate}
  \item\label{itm:characterization_morita_eqv_1} For every cofibrant model $\CC$ of $\Th_1$, the unit $\eta_\CC : \CC \to R(L(\CC))$ is a weak equivalence in $\CMod_{\Th_1}$.
  \item\label{itm:characterization_morita_eqv_2} For every freely generated model $\CC$ of $\Th_1$, the unit $\eta_\CC : \CC \to R(L(\CC))$ is a weak equivalence in $\CMod_{\Th_1}$.
  \item\label{itm:characterization_morita_eqv_3} For every finitely generated model $\CC$ of $\Th_1$, the unit $\eta_\CC : \CC \to R(L(\CC))$ is a weak equivalence in $\CMod_{\Th_1}$.
  \item\label{itm:characterization_morita_eqv_4} The SOGAT morphism $\Th_1 \to \Th_2$ is a Morita equivalence.
  \end{enumerate}
\end{conj}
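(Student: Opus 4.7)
The plan is to prove the cycle $(1) \Rightarrow (2) \Rightarrow (3) \Rightarrow (4) \Rightarrow (1)$. The first three implications are inclusions of classes: every freely generated (i.e., $I$-cellular) model is cofibrant, being a transfinite composition of pushouts of maps in $I$; every finitely generated model is in particular freely generated; and the initial model of $\Th_1$ is $0$-generated, hence finitely generated, with $\eta$ at this model being precisely the canonical morphism $\Th_1 \to R(L(\Th_1))$, which presents the SOGAT morphism $\Th_1 \to \Th_2$ through the forgetful functor $R$. Thus $(3)$ specialized at the initial model delivers $(4)$.

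The substantive direction is $(4) \Rightarrow (1)$. I would expand the class of $\CC$ for which $\eta_\CC$ is known to be a weak equivalence in three stages. First, extend from the initial model to finitely generated ones by induction on the cellular construction: given a pushout
\[
\begin{tikzcd}
\Free_{\Th_1}(\bm{\Gamma} \vdash \bm{\sigma} : \partial \iS) \ar[r, "I^{\iS}"] \ar[d] & \Free_{\Th_1}(\bm{\Gamma} \vdash \bm{x} : \iS(\bm{\sigma})) \ar[d] \\
\CC \ar[r] & \CC'
\end{tikzcd}
\]
attaching a single generator to $\CC$, apply $L$ to obtain a corresponding pushout in $\CMod_{\Th_2}$ and show that the comparison $\eta_{\CC'}$ is again a weak equivalence. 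Second, pass from finitely generated to freely generated models via filtered colimits: any freely generated model is a filtered colimit of its finite subcomplexes; $L$ preserves colimits; and both $\eta$ and the class of weak equivalences ought to be stable under such filtered colimits, using that the trivial-cofibration generators $J^{\iS}$ have finitely presentable domains and codomains. Third, pass from freely generated to arbitrary cofibrant models: any cofibrant $\CC$ is a retract of a freely generated $\CC'$ by the small object argument applied to $I$, and the class of weak equivalences is closed under retracts by a direct check on the defining weak lifting property.

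The main obstacle is the pushout step in the first stage. Weak equivalences between cofibrant objects are only well-behaved under pushouts along cofibrations in a (left semi-)model category, so this conjecture is closely intertwined with \cref{clm:ext_univ_model_structure}; in practice one would likely establish the two together, following the strategy of \cite{ExternalUnivalence}. A more direct alternative, avoiding an explicit model structure, would use the partial saturation machinery of \cref{sec:partial_saturation}: from the marked homotopies presenting the equational extension $\Th_1 \to \Th_2$, one would construct by induction on the syntax of $\Th_1$ the weak lifts witnessing that $\eta_\CC$ is a weak equivalence, with the cellular presentation of $\CC$ driving the induction analogously to how the cellular presentation of $\Th$ drives the proof of external univalence.
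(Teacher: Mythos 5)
The weak link is your step $(3) \Rightarrow (4)$. By \cref{def:theq}, condition (4) says that $\Th_1 \to \Th_2$ is a weak equivalence in $\CMod_{\Th_1}$, where $\Th_1$ and $\Th_2$ are the \emph{classifying} $(\reppre)$-CwFs regarded as models; the weak lifting conditions there range over all contexts $\Gamma \in \Th_1$. Via the (conjecturally fully faithful) functor $\Init_{\Th_1}[-] : \Th_1^\op \to \CMod_{\Th_1}^{\cxl}$ of \cref{clm:ff_sign}, each such $\Gamma$ corresponds to the finitely generated model $\Init_{\Th_1}[\Gamma]$, and the lifting problem for $\Th_1 \to \Th_2$ over $\Gamma$ is the lifting problem for $\eta_{\Init_{\Th_1}[\Gamma]}$ at its universal generators. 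The initial model corresponds only to the empty context, so specializing (3) at $\Init_{\Th_1}$ yields only the closed-context instances of (4), which is strictly weaker. (Your phrase ``the canonical morphism $\Th_1 \to R(L(\Th_1))$'' conflates the initial model with the classifying CwF; the latter is not finitely generated.) The correct content of $(3) \Leftrightarrow (4)$ is that the two are reformulations of one another through \cref{clm:ff_sign}, and this is precisely the ingredient your proposal is missing.

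Once this is in place, your hard direction $(4) \Rightarrow (1)$ simplifies drastically and the pushout induction you flag as the main obstacle disappears: $(4) \Rightarrow (3)$ is immediate from \cref{clm:ff_sign}; $(3) \Rightarrow (2)$ follows from writing a freely generated model as a filtered colimit of finitely generated ones (the theory being finitary, as you note); and $(2) \Rightarrow (1)$ follows from closure of weak equivalences under retracts, which is one of the places where the paper invokes external univalence. This is exactly the paper's proof sketch; your second and third stages coincide with it, while your first stage (cell attachment and the cube-lemma-type argument, which would indeed require the semi-model structure of \cref{clm:ext_univ_model_structure}) is avoided entirely by routing through \cref{clm:ff_sign}.
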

\begin{proof}[Proof sketch]\label{prf:characterization_morita_eqv}
  The fact that $\Th_1$ satisfies external univalence is needed to ensure that weak equivalences in $\CMod_{\Th_1}$ satisfy some closure conditions (some directions of the $2$-out-of-$3$ property and closure under retracts).
  
  The equivalence between (\ref{itm:characterization_morita_eqv_1}) and (\ref{itm:characterization_morita_eqv_2}) follows from the fact that every cofibrant model is a retract of a freely generated model.

  The equivalence between (\ref{itm:characterization_morita_eqv_2}) and (\ref{itm:characterization_morita_eqv_3}) follows from the fact that every $I$-cellular model is a filtered colimit of finitely generated models (because $\Th$ is a finitary theory).

  The equivalence between (\ref{itm:characterization_morita_eqv_3}) and (\ref{itm:characterization_morita_eqv_4}) follows from~\cref{clm:ff_sign}.
\end{proof}

\subsection{Extensions by contractible sorts}

In our examples of equational extensions, we have often considered compositions $\Th_0 \to \Th \to \Th_E$ of extensions, where $\Th \to \Th_E$ is an equational extension and $\Th_0 \to \Th$ is the extension of $\Th_0$ by new elements of contractible sorts.

The main focus of this paper is the conservativity of the extension $\Th \to \Th_E$, but we also need a way to prove that $\Th_0 \to \Th$ is a conservative extension.
In general, we would need a proof of the following conjecture.
\begin{conj}\label{conj:contractible_extension}
  Let $\Th$ be a cofibrant SOGAT that satisfies external univalence.
  Let $(\Gamma \vdash Y\ \type) \in \Th$ be a sort of $\Th$ that is contractible (with respect to the weakly stable identity types arising from external univalence).
  Then the extension $\Th \to \Th[\gamma:\Gamma \vdash \bm{y}(\gamma) : Y(\gamma)]$ is a Morita equivalence, and $\Th[\gamma:\Gamma \vdash \bm{y}(\gamma) : Y(\gamma)]$ also satifies external univalence (with the same homotopy relations as $\Th$).
  \defiEnd{}
\end{conj}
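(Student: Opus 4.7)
The plan is to use the contractibility of $Y$ to construct a retraction of the extension morphism $i : \Th \to \Th[\bm{y}]$ directly at the level of the SOGAT, and then to verify the weak lifting property (\cref{def:theq}) generating sort by generating sort. By external univalence of $\Th$, the contractibility of $Y$ yields a center $(\gamma : \Gamma \vdash y_0(\gamma) : Y(\gamma))$ in $\Th$ together with a contraction
\[ c : (\gamma : \Gamma)(y : Y(\gamma)) \to \Id(y_0(\gamma), y) \]
in $\Th$. Sending $\bm{y} \mapsto y_0$ then defines a retraction $r : \Th[\bm{y}] \to \Th$ in $\CMod_\Th$ with $r \circ i = \id_{\Th}$, which already shows that $i$ is injective on every sort.

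For the weak lifting property at a generating sort $\iS \in \GenTy_\Th$, given $\Delta \in \Th$, a boundary $\sigma : \Delta \to \partial\iS$ in $\Th$, and an element $(\delta : \Delta \vdash x(\delta) : \iS(\sigma(\delta))) \in \Th[\bm{y}]$, I propose the lift $x_0 := r(x)$. To obtain the homotopy $r(x) \sim_{\iS(\sigma)} x$ in $\Th[\bm{y}]$, I instantiate the contraction at the generic element to obtain $c(\bm{y}) : \Id(y_0, \bm{y})$ in $\Th[\bm{y}]$, and transport along $c(\bm{y})$ using the weakly stable $\J$-eliminator; this yields a path in $\Id(r(x), x)$ in the basic type $\iS(\sigma)$. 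Saturation of the weakly stable identity types then converts this Id-path into the required $\sim$-homotopy: since the total $\sim$-space over $r(x)$ is contractible, transporting the canonical element $(r(x), \hrefl)$ along the Id-path produces a pair whose second component is the desired relator.

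For external univalence of $\Th[\bm{y}]$ I would lift the weakly stable identity type structure from $\Th$ to $\Th[\bm{y}]$. Because the extension adds no generating sort and no equation, and $\Th$ is cofibrant, the identity type operations (which only act on existing types, and we have added no new type) extend canonically. Saturation then transfers: the contractibility statements are indexed by the generating sorts of $\Th$, which are unchanged in $\Th[\bm{y}]$, and the weak lifting already established shows that $\bm{y}$ does not disrupt those contractibilities, since any new element of a generating sort in $\Th[\bm{y}]$ is $\sim$-related to one already in $\Th$.

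The main obstacle is making the transport-along-$c(\bm{y})$ argument rigorous when $Y$ is not representable: one cannot internalize ``$x$ viewed as a function of $\bm{y}$'' via $\Pi_\rep$. The cleanest route I foresee is to reduce to the case where $Y$ is a basic type, using the normal-form decomposition of \cref{ssec:sogat_structure_types} together with closure of contractible extensions under composition, so that the generic element to be substituted always lives in a representable or basic sort where transport is directly applicable. Alternatively, one can present $\Th[\bm{y}]$ as a pushout in $\CCwf_{\reppre}$ along a ``center-picking'' morphism and argue that this pushout preserves weak equivalences whenever its boundary is contractible. Constructing the lifted weakly stable identity type structure on $\Th[\bm{y}]$ in a way compatible with saturation, and in particular with the chosen homotopy relations of $\Th$, is where I expect the technical heart of the argument to lie.
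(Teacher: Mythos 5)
First, a point of calibration: the paper does not prove this statement. It is stated as \cref{conj:contractible_extension} and explicitly deferred to future work; the paper says that a proof ``would require some more tools (including~\cref{clm:ext_univ_model_structure})'', sketches only the special case of a \emph{closed} contractible sort (via the isomorphism of $\Th[\bm{y} : Y]$ with the contextual slice of $\Th$ over $1_\Th.Y$), and suggests that the general case should reduce to a coherence/strictification theorem turning $(\reppre,\Id_{\mathsf{ws}})$-CwFs into $(\repinfty)$-CwFs. So there is no proof in the paper to compare yours against, and you should not present the statement as established.

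On the substance of your sketch: the retraction $r : \Th[\bm{y}] \to \Th$ sending $\bm{y}$ to the centre $y_0$ is fine, and choosing $x_0 := r(x)$ as the candidate lift is the right move. The genuine gap is exactly the step you flag and then do not close: producing the homotopy $i(r(x)) \sim_{\iS(\sigma)} x$ in $\Th[\bm{y}]$ for an \emph{arbitrary} term $x$ of the extension. This is not a single transport along $c(\bm{y})$: $x$ may use the generic element $\bm{y}$ at many instances $\gamma : \Delta \to \Gamma$, through arbitrary operations of $\Th$, so what is really needed is a coherent ``action on identifications'' of every operation of the theory with respect to the weakly stable identity types --- i.e.\ a path between the two model morphisms $i \circ r$ and $\id_{\Th[\bm{y}]}$. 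Establishing that such actions exist and compose coherently is precisely the content of the missing coherence theorem (equivalently, of knowing that $\Th \to \Th[\bm{y}]$ is a trivial cofibration in a semi-model structure whose existence is itself only \cref{clm:ext_univ_model_structure}); your two fallback routes (reduction to basic types via normal forms, or a pushout argument) presuppose rather than prove this. Your argument is also circular at the last step: converting the $\Id$-path into a $\sim$-homotopy uses saturation \emph{in $\Th[\bm{y}]$}, but external univalence of $\Th[\bm{y}]$ is one of the things being proved, and the claim that the weakly stable identity structure ``extends canonically'' because ``we have added no new type'' is not justified --- $\Th[\bm{y}]$ has new instances $\iS(\sigma)$ for boundaries $\sigma$ mentioning $\bm{y}$, and weak stability (non-strict compatibility with substitution) is exactly where the extension becomes delicate. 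In short, your proposal correctly identifies the shape of the argument, but the obstruction you name in your last paragraph is the whole difficulty, and it is the reason the paper leaves this as a conjecture.
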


Proving~\cref{conj:contractible_extension} or any special case would require some more tools (including~\cref{clm:ext_univ_model_structure}), so we leave any proof to future work.

It should not too difficult to prove the special case of~\cref{conj:contractible_extension} for \emph{closed} contractible sorts, using the fact that $\Th[\bm{y} : Y]$ is isomorphic to the contextual slice $(\Th \sslash 1_\Th.Y)$, that is the contextual core (in $\CCwf_{\reppre}$) of the slice model $(\Th / 1_\Th.Y)$.

The general case is more complicated.
However, if we consider $(\repinfty)$-CwFs instead of SOGATs with external univalence (which are $(\reppre,\Id_{ws})$-CwFs), then the analogous result follows from~\cref{clm:ext_univ_model_structure}, applied to the SOGAT $\Th_{\repinfty}$ that classifies $(\repinfty)$-CwFs.
This means that~\cref{conj:contractible_extension} should follow from a coherence theorem providing a way to strictify $(\reppre,\Id_{ws})$-CwFs into $(\repinfty)$-CwFs.
This strictification is likely only possible for $(\reppre,\Id_{ws})$-CwFs which satisfy some cofibrancy condition, \eg which are cofibrant in $\CCwf_{\reppre}$.

%%% Local Variables:
%%% mode: latex
%%% TeX-master: "main"
%%% End:

\section{Partial saturation}\label{sec:partial_saturation}

This section introduces partially saturated $(\repinfty)$-CwFs with respect to a family $E$ of homotopies, and the initial such CwF $\Th^\infty_{\Ehat}$.
The main theorem of the paper reduces the conservativity of the equational extension $\Th \to \Th_E$ to properties of $\Th^\infty_{\Ehat}$.

\subsection{Definitions}

Let $\CC$ be a $(\repinfty)$-CwF equipped with an internal model of $\Th$.
For every generating sort $\iS \in \GenTy_\Th$, we have a comparison map going from outer identifications ($\simeq$) to homotopies ($\sim$), defined by sending the outer reflexivity $\refl$ to the constant homotopy $\hrefl$.
\begin{alignat*}{1}
  & \idToHpty_\iS : (x \simeq_{\iS(\sigma)} y) \to (x \sim_{\iS(\sigma)} y), \\
  & \idToHpty_\iS(\refl) \simeq \hrefl_{\iS}.
\end{alignat*}

The CwF $\CC$ is \emph{saturated} in the sense of~\cref{def:univalent_internal_model} exactly when these maps are equivalences for every generating sort.
Accordingly, we will say that $\CC$ is partially saturated with respect to a family of homotopies when the restriction of $\idToHpty_\iS$ to these homotopies are all equivalences.

\begin{defi}
  Let $\CC$ be a $(\repinfty)$-CwF equipped with an internal model of $\Th$, \ie a $(\reppre)$-CwF morphism $\Th \to \CC$.
  We say that $\CC$ is \defemph{partially saturated} with respect to $E$ if for every homotopy $(\gamma:\Gamma \vdash p(\gamma) : x(\gamma) \sim_{\iS(\sigma)} y(\gamma)) \in E$, the following contractibility condition holds:
  \[ \gamma:\Gamma \vdash \isContr((q : x(\gamma) \simeq_{\iS(\sigma)} y(\gamma)) \times (\idToHpty_\iS(q(\gamma)) \simeq p(\gamma))). \]
  In that case, we have the following center of contraction:
  \begin{alignat*}{1}
    & \gamma:\Gamma \vdash \widehat{p}(\gamma) : x(\gamma) \simeq_{\iS(\sigma)} y(\gamma), \\
    & \gamma:\Gamma \vdash \widetilde{p}(\gamma) : \idToHpty_\iS(\widehat{p}(\gamma)) \simeq p(\gamma).
      \tag*{\defiEnd{}}
  \end{alignat*}
\end{defi}

In particular, $\CC$ is saturated exactly when it is partially saturated with respect to all generic homotopies:
\[ \{ (\sigma:\partial \iS, x : \iS(\sigma), y : \iS(\sigma), p : x \sim_{\iS(\sigma)} y \vdash p : x \sim_{\iS(\sigma)} y) \mid \iS \in \GenTy_\Th \}, \]
or equivalently with respect to all homotopies.

We write $\Th^\infty_{\Ehat}$ for the initial object among $(\repinfty)$-CwFs equipped with an internal model of $\Th$ that is partially saturated with respect to $E$.
The superscript $-^{\infty}$ indicates that $\Th^\infty_{\Ehat}$ is morally an $\infty$-category, up to the conjectured correspondence between $(\repinfty)$-CwF and representable map $\infty$-categories.
The subscript $-_{\Ehat}$ indicates that $\Th^\infty_{\Ehat}$ includes lifts of every marked homotopy in $E$.

We write $\Th^1_{E}$ for the initial object among $(\repO)$-CwFs equipped with an internal model of $\Th_{E}$.
The superscript $-^{1}$ now indicates that $\Th^1_E$ is morally a representable map $1$-category.
Note that $\Th^1_{E}$ can be seen as a $(\repinfty)$-CwFs equipped with an internal model of $\Th$.
It is then saturated with respect to $E$, because the equations of $E$ hold definitionally in $\Th_E$.
By the universal property of $\Th^\infty_{\Ehat}$, we therefore have a morphism $\Th^\infty_{\Ehat} \to \Th^1_{E}$, such that the following square commutes:
\[ \begin{tikzcd}
    \Th
    \ar[r]
    \ar[d]
    & \Th_{E}
    \ar[d]
    \\ \Th^\infty_{\Ehat}
    \ar[r]
    & \Th^1_{E} \rlap{\ .}
  \end{tikzcd} \]

\subsection{Properties of the vertical maps}

We now discuss the properties of the two vertical maps in the above square.

Under some assumptions, these two morphisms should be seen as embeddings; at the level of categories or $\infty$-categories they would correspond to fully faithful functors or $\infty$-functors.

\begin{prop}\label{prop:the_the1_embedding}
  The $(\reppre)$-CwF morphism $\Th_{E} \to \Th^1_{E}$ is bijective on terms.
  \qed
\end{prop}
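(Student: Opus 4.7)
The strategy is to analyze the explicit construction of $\Th^1_E$ as the free $(\repO)$-CwF on the $(\reppre)$-CwF $\Th_E$. Its contexts, types, substitutions, and terms are obtained from those of $\Th_E$ by freely adjoining the equality type former $\Eq$, its introduction $\refl$, and its eliminator $\J$, modulo the equations of $\Th_E$, the $\beta\eta$-laws for $\Eq$, and equality reflection. The proposition asserts that at any type $A$ lying in the image of $\Th_E$, this enlargement neither adds nor identifies terms.

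For surjectivity, I would show by induction on the syntactic structure of a term $a : A$ in $\Th^1_E$, with $A$ coming from $\Th_E$, that $a$ is equal to a term in the image of $\Th_E$. The only term formers not originating from $\Th_E$ are $\refl$, whose outputs have $\Eq$ type (not $A$), and $\J(C,d,p)$ with some $p : \Eq(x,y)$. In the latter case, equality reflection applied to $p$ forces $x = y$ definitionally in $\Th^1_E$, at which point the $\beta$-rule of $\J$ collapses the term to $d$; by the inductive hypothesis $d$ already lies in the image of $\Th_E$, hence so does $a$.

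For injectivity, I would check that equality reflection cannot create new equalities between terms of $\Th_E$ beyond those derivable there. Equality reflection would derive $x = y$ from some derivable term $p : \Eq_A(x,y)$ in $\Th^1_E$; by the surjectivity analysis, such a $p$ must normalize to $\refl$ under the equational theory, which is to say that $x = y$ was already forced by the equations of $\Th_E$ together with the $\beta$-rules for $\Eq$ alone, giving the equation already in $\Th_E$.

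The main obstacle is the careful management of the interleaved induction between the normalization of a $\J$-subterm (which requires identifying the endpoints $x$ and $y$ of the equality proof it eliminates on) and the dependent motive $C$ of $\J$ (whose formation relies on those same endpoints), so the reduction must be set up as a well-founded induction on the combined derivation height. A perhaps cleaner alternative avoids syntactic reduction entirely and constructs an auxiliary $(\repO)$-CwF directly from $\Th_E$ in which equality types are modeled trivially — using the telescope machinery of \cref{sec:weak_id_types} to compensate for the possible absence of $\Sigma$-types — so that $\Th_E \to R(\CC)$ is manifestly bijective on terms; the initiality of $\Th^1_E$ then produces $\Th^1_E \to \CC$ and pins down $\Th_E \to \Th^1_E$ as bijective on terms as well.
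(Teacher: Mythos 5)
Both of your routes have genuine gaps, and neither matches the paper's (omitted) argument, which is a gluing/logical-relations construction in the style of a canonicity proof: one builds a displayed model over $\Th^1_E$ whose component over a term $(F(\Gamma) \vdash a : F(A))$ carries a syntactic witness $(\Gamma \vdash a_0 : A) \in \Th_E$ with $F(a_0) = a$, the equality types being handled trivially because reflection collapses them. Your first route, induction on the syntactic structure of terms of $\Th^1_E$, runs into the classic obstruction for extensional theories: the terms of $\Th^1_E$ are equivalence classes under a congruence generated by equality reflection, which identifies terms of arbitrarily different shapes, so there is no well-founded ``syntactic structure'' to induct on without first descending to raw derivations and proving a coherence result --- and that is precisely the hard content of the Oury/Winterhalter-et-al.\ line of work, not a bookkeeping issue about derivation heights. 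Moreover your induction is not closed under the term formers of $\Th_E$ itself: a term at an image type can be built by substitution along a morphism whose codomain involves $\Eq$-types, or by applying a $(\reppre)$-CwF operation to subterms whose types are not in the image, so the inductive hypothesis as you state it does not apply to the subterms. Generalizing the statement to all contexts and types, with a suitable relation back to $\Th_E$, is exactly what the gluing model does.

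Your second route has the right semantic flavor but proves only half the statement. Given a $(\repO)$-CwF $\CC$ with $H : \Th_E \to \CC$ bijective on terms, initiality of $\Th^1_E$ yields $G : \Th^1_E \to \CC$ with $G \circ F = H$; this shows $F$ is (split) injective on terms, but surjectivity would require $G$ to be injective on terms over image contexts, which is not automatic and is essentially equivalent to what you are trying to prove. The fix is again gluing: the auxiliary model must record a $\Th_E$-term as part of its data rather than merely interpret $\Eq$ trivially, so that the section of the displayed model hands you an actual preimage. Separately, the construction of $\CC$ is underspecified --- $\Th_E$ has no mechanism for extending contexts by strict equalities, so the natural candidate is the presheaf $(\repO)$-CwF $\widehat{\Th_E}$ along a strictified Yoneda embedding (which the paper sets up for a different purpose); the telescope machinery of \cref{sec:weak_id_types} concerns lifting identity-type structure to telescopes and does not supply equality types where none exist.
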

We omit the proof; it follows from somewhat standard gluing constructions or logical relation interpretations, very similar to the adequacy proof of~\cite{SketchingAndAdequacy} or to the full proof of conservativity of two-level type theory of~\cite{KovacsStagedCompilation}.
From a more syntactic point of view, it is a canonicity proof, where the canonical form of a term $(F(\Gamma) \vdash a : F(A)) \in \Th_E^1$ is a term $(\Gamma \vdash a_0 : A) \in \Th_E$ such that $F(a_0) = a$.

Alternatively, we don't need to use $\Th^1_E$, because we never need to use its universal property in this paper.
We can use the presheaf category $\widehat{\Th_E}$ instead, which is a $(\repO)$-CwF.
The Yoneda embedding $\yo : \Th_E \to \widehat{\Th_E}$ is a pseudo-morphism of $(\reppre)$-CwFs, meaning that the terminal object, context extensions and type-formers are only preserved up to isomorphism.
The action of $\yo$ on terms is bijective.
Relying on the fact that $\Th_E$ is $\{I^\ty,I^{\repty},I^\tm,E^\tm\}$-cellular, we can replace the Yoneda embedding by a strict morphism $y : \Th_E \to \widehat{\Th_E}$ along with a natural isomorphism $\alpha : y \cong \yo$.
Informally, the contexts and types of $\Th_E$ are freely closed under the operations of a $(\reppre)$-CwF, so we can redefine the actions of $y$ on contexts and types by induction over them.
Formally, we construct the iso-gluing $\CG(\yo)$ of the Yoneda embedding, equip it with the structure of a $(\reppre)$-CwF, and show that the projection $\CG(\yo) \to \Th_E$ has the right lifting property with respect to $\{I^\ty,I^{\repty},I^\tm,E^\tm\}$.
We then obtain a section $\Th_E \to \CG(\yo)$, which can be decomposed into $y$ and $\alpha$.

We only conjecture the analogous result for the other vertical map.
\begin{conj}\label{conj:embedding_from_ext_univ}
  If $\Th$ satisfies external univalence and is cofibrant in $\CCwf_{\reppre}$, then $\Th \to \Th^\infty_{\Ehat}$ is a weak equivalence in $\CMod_\Th$.
  \defiEnd{}
\end{conj}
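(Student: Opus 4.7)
The plan is to compare $\Th \to \Th^\infty_\Ehat$ against a direct strictification of $\Th$ as a $(\repinfty)$-CwF, and then to apply the 2-out-of-3 style property of weak equivalences given by~\cref{lem:modt_two_out_of_three_2}.

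By external univalence, $\Th$ carries weakly stable identity types that satisfy function extensionality and the full saturation condition. The first step of the plan is to strictify these into a genuine $(\repinfty)$-CwF $\widetilde\Th$, together with a morphism $s : \Th \to \widetilde\Th$ in $\CCwf_{\reppre}$ that is also a weak equivalence in $\CMod_\Th$. Modulo~\cref{clm:ext_univ_model_structure}, this should proceed by a cofibrant replacement argument: cofibrancy of $\Th$ in $\CCwf_{\reppre}$ provides enough freedom to strictly choose representatives of each weakly stable operation, while saturation ensures that these choices assemble into a coherent $(\repinfty)$-CwF. Concretely, $\widetilde\Th$ would be built by induction over the cellular presentation of $\Th$, replacing each sort and operation by the corresponding strict data in a freely constructed $(\repinfty)$-CwF.

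Given $\widetilde\Th$, the saturation of the weakly stable identity types on $\Th$ transfers along $s$: for every generating sort $\iS$, the map $\idToHpty_\iS$ becomes an equivalence in $\widetilde\Th$, because $s$ matches the two identity type structures up to equivalence. In particular $\widetilde\Th$ is partially saturated with respect to any $E$, so the universal property of $\Th^\infty_\Ehat$ yields a canonical $(\repinfty)$-CwF morphism $G : \Th^\infty_\Ehat \to \widetilde\Th$ under $\Th$, satisfying $G \circ (\Th \to \Th^\infty_\Ehat) = s$. By~\cref{lem:modt_two_out_of_three_2}, it then remains to show that $G$ itself is a weak equivalence in $\CMod_\Th$. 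This will be proven by induction on the cellular presentation of $\Th^\infty_\Ehat$ over $\Th$: every generator of $\Th^\infty_\Ehat$ is either part of the $(\repinfty)$-CwF structure (coming from $\Unit$, $\Sigma$, $\Pi_\rep$, $\Id$ and their introduction and elimination operations) or a chosen lift $\widehat p, \widetilde p$ of a marked homotopy, and saturation of $\widetilde\Th$ yields a weak lift in each case; transitivity of the homotopy relations of $\Th$, a consequence of external univalence, glues these pointwise lifts into the required weak equivalence property for $G$.

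The hard part, and the reason this statement is left as a conjecture, is the strictification step. Turning weakly stable identity types into strictly stable ones is itself a coherence theorem of the same flavor as the main theorem of the paper, and absent~\cref{clm:ext_univ_model_structure} or an independent semi-model-theoretic construction of cofibrant replacements for $(\repinfty)$-CwFs, both the construction of $\widetilde\Th$ and the verification that $s$ is a weak equivalence in $\CMod_\Th$ must be carried out by ad hoc means, presumably along the lines of~\cite{ExternalUnivalence}. The remaining steps are essentially routine once such a strictification is available.
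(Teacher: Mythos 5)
First, be aware that the paper does not prove this statement: it is stated as \cref{conj:embedding_from_ext_univ} and left open, and \cref{sec:discussion} only sketches an intended strategy via gluing arguments analogous to the proof of \cref{prop:the_the1_embedding}, whose main obstacle is the need for the internal language of an $\infty$-topos (or a cubical substitute). So there is no proof in the paper to compare against; your proposal is an independent attempt, and it takes a genuinely different route (strictification of the weakly stable identity types followed by $2$-out-of-$3$) from the one the author envisions.

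Your route has two genuine gaps. The first you acknowledge yourself: the construction of $\widetilde{\Th}$ together with a weak equivalence $s : \Th \to \widetilde{\Th}$ is itself an open coherence problem (strictifying weakly stable identity types into a $(\repinfty)$-CwF), which the paper flags as future work in \cref{ssec:coherence_problems} and in the discussion after \cref{conj:contractible_extension}; deferring the hard step to an unproven strictification theorem does not reduce the difficulty. The second gap is more structural. To apply \cref{lem:modt_two_out_of_three_2} you must show that $G : \Th^\infty_{\Ehat} \to \widetilde{\Th}$ is a weak equivalence in $\CMod_\Th$, and this is a lifting condition for elements of $\widetilde{\Th}$ over \emph{arbitrary} contexts and boundaries of $\Th^\infty_{\Ehat}$, including those built from the outer identity types, which are not in the image of $\Th$. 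The paper explicitly warns (just before the construction of $\Th_{\Ehat}$) that the analogous map $\Th^\infty_{\Ehat} \to \Th^1_E$ is ``almost never'' a weak equivalence precisely because of such contexts, and the same concern applies here: $\widetilde{\Th}$ is fully saturated while $\Th^\infty_{\Ehat}$ is only partially saturated, so $\widetilde{\Th}$ contains elements of generating sorts obtained by transporting along lifts $\widehat{h}$ of \emph{unmarked} homotopies, with no evident preimage in $\Th^\infty_{\Ehat}$ up to homotopy. Your proposed ``induction on the cellular presentation of $\Th^\infty_{\Ehat}$'' also goes in the wrong direction for this purpose: checking that each generator of $\Th^\infty_{\Ehat}$ has a sensible image in $\widetilde{\Th}$ only re-establishes the existence of $G$, which the universal property already gives, whereas the weak equivalence property requires lifting elements of $\widetilde{\Th}$ \emph{back} into $\Th^\infty_{\Ehat}$, and no argument is offered for that. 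Until both gaps are closed the proposal is not a proof, which is consistent with the statement being left as a conjecture.
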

This conjecture is discussed in more details in~\cref{sec:discussion}.

% \Cref{conj:embedding_from_ext_univ} should follow from the following coherence conjecture, by a gluing argument similar to the proof of~\cref{prop:the_the1_embedding}.
% This is discussed in more details in~\cref{sec:discussion}.
% \begin{conj}\label{conj:strictify_repinfty}
%   If $\Th$ satisfies external univalence and is cofibrant in $\CCwf_{\reppre}$, then there exists a $(\repinfty)$-CwF $\Th^\infty$ and a $(\reppre)$-CwF morphism $\Th \to \Th^\infty$ that weakly preserves identity types and is essentially surjective on terms.
%   \defiEnd{}
% \end{conj}

\subsection{Applications to conservativity}

We now explain how the partially saturated model $\Th_{\Ehat}^\infty$ can be used to establish that $\Th \to \Th_E$ is a Morita equivalence.
The idea is that $\Th_{\Ehat}^\infty$ is seen as an $\infty$-congruence over $\Th$ (although we don't formally define $\infty$-congruences).

An ordinary congruence consists of equivalence relations over the terms $\Th$.
Equivalently, a congruence is an embedding of a set-valued model into a setoid-valued model.
Similarly, a $\infty$-congruence should be an embedding of a set-valued model into a space-valued model, or a model valued in $\infty$-groupoids.

Here we adopt the point of view that an $\infty$-groupoid is a type in a model of type theory with identity types: the point of the $\infty$-groupoid are the terms of that type, and the higher morphisms are the terms of the iterated identity types.
All higher composition operations can be derived from the identity type eliminator.
The identity types of $\Th_{\Ehat}^\infty$ equip its terms with the structure of $\infty$-groupoids.
The higher structure of these $\infty$-groupoids is freely generated by lifts of all homotopies in $E$.

Given a $1$-congruence, it is natural to consider its quotient.
In order to prove the conservativity of the extension $\Th \to \Th_E$, we want consider the quotient of $\Th_{\Ehat}^\infty$.
To ensure the existence of the quotient, we need to assume that the $\infty$-congruence $\Th_{\Ehat}^\infty$ can actually be restricted to a $1$-congruence.

We first need to define the notion of $0$-truncated $(\lexinfty)$-CwF.
\begin{defi}
  A $(\lexinfty)$-CwF $\CC$ is said to be (merely) $0$-truncated if for every self-identification $(\gamma:\Gamma \vdash p(\gamma) : x(\gamma) \simeq_{A(\gamma)} x(\gamma))$ in $\CC$, there (merely) exists
  an identification $(\gamma:\Gamma \vdash \mathsf{uip}_p(\gamma) : p(\gamma) \simeq \refl)$ in $\CC$.
  \defiEnd{}
\end{defi}

\begin{defi}
  A $(\lexinfty)$-CwF $\CC$ is said to be (merely) $0$-truncated relatively to a functor $F : \CA \to \CC$ if for every self-identification $(\gamma:F(\Gamma) \vdash p(\gamma) : x(\gamma) \simeq_{A(\gamma)} x(\gamma))$ in $\CC$, there (merely) exists
  an identification $(\gamma:F(\Gamma) \vdash \mathsf{uip}_p(\gamma) : p(\gamma) \simeq \refl)$ in $\CC$.
  \defiEnd{}
\end{defi}

Our main result, which will be eventually proven in~\cref{sec:retracts_quotients}, is then the following:
\begin{thm}\label{thm:main_theorem_cor}
  If $\Th \to \Th^\infty_{\Ehat}$ is a weak equivalence in $\CMod_\Th$ and $\Th^\infty_{\Ehat}$ is merely $0$-truncated relatively to $\Th \to \Th^\infty_{\Ehat}$, then $\Th \to \Th_E$ is a Morita equivalence.
\end{thm}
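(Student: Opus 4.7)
The plan is to factor $\Th \to \Th_E$ through the intermediate $(\reppre)$-CwF $\Th_{\Ehat}$ and prove that each factor is a weak equivalence in $\CMod_\Th$; then the $2$-out-of-$3$ direction for composition (\cref{lem:modt_two_out_of_three_1}, using the transitivity of the homotopy relations on $\Th$) yields the conclusion. Accordingly I would first dispatch the ``outer'' factor $\Th \to \Th_\Ehat$ and then reduce the ``inner'' factor $\Th_\Ehat \to \Th_E$ to the $0$-truncatedness hypothesis; the latter reduction, via the theory of fibrant congruences and retracts of quotient inclusions, is where essentially all the content sits.

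For the factor $\Th \to \Th_\Ehat$: By construction, $\Th_\Ehat$ is the sub-$(\reppre)$-CwF of $\Th_\Ehat^\infty$ spanned by contexts and types that do not involve the identity types of the $(\repinfty)$-structure. Since the sorts of $\Th$ are already basic non-$\Id$-types (recall the structure of types in a SOGAT discussed in~\cref{ssec:sogat_structure_types}), the inclusion $\Th_\Ehat \hookrightarrow \Th_\Ehat^\infty$ acts identically on the relevant sorts, so in particular is bijective on them, hence a (split) weak equivalence in $\CMod_\Th$. Combining this with the hypothesis that $\Th \to \Th_\Ehat^\infty$ is a weak equivalence, \cref{lem:modt_two_out_of_three_2} gives that $\Th \to \Th_\Ehat$ is a weak equivalence.

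For the factor $\Th_\Ehat \to \Th_E$: The strategy, as outlined in the introduction and announced in \cref{sec:retracts_quotients}, is to exhibit it as a retract of the inclusion of $\Th_\Ehat$ into the quotient of a fibrant congruence. The congruence is obtained from the identity types of $\Th_\Ehat^\infty$: for each generating sort $\iS \in \GenTy_\Th$ one sets two elements of $\Th_\Ehat$ to be related when there merely exists an identification between them in $\Th_\Ehat^\infty$ (this relies on the partial saturation to transport the marked homotopies of $E$ into the identity types). The mere $0$-truncatedness of $\Th_\Ehat^\infty$ relative to $\Th$ is exactly what is needed to ensure that these relations are genuine equivalence relations rather than mere higher-dimensional structure, and that the resulting data assemble into a fibrant $1$-congruence in the sense of \cref{sec:fibrant_congruences}, whose quotient inclusion is a non-split trivial fibration. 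One then uses the universal property of $\Th_E$ (initiality among $(\repO)$-CwFs under $\Th$ in which the marked homotopies are strictly equal to reflexivity, cf.\ \cref{prop:the_the1_embedding}) against this quotient model to obtain a section of $\Th_\Ehat \to \Th_E$ compatible with the quotient map, thereby realising $\Th_\Ehat \to \Th_E$ as a retract of a non-split trivial fibration, and hence as a non-split weak equivalence (trivial fibrations are weak equivalences and retracts of weak equivalences are weak equivalences).

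The main obstacle will be the management of the word \emph{merely} throughout: since the UIP witnesses in $\Th_\Ehat^\infty$ are only assumed to exist propositionally, the congruence cannot be built by choosing canonical identifications, and one must work with $(-1)$-truncated relations and carefully verify that the fibration conditions survive this truncation --- this is precisely the point where the development of fibrant congruences in \cref{sec:fibrant_congruences} is used, and also what forces the conclusion to be a non-split rather than split Morita equivalence. A secondary difficulty is checking that the quotient of this fibrant congruence genuinely realises the equational extension $\Th_E$ up to a retraction; this amounts to verifying both that every equation of $E$ holds in the quotient (by partial saturation of $\Th_\Ehat^\infty$ with respect to $E$) and conversely that no extra equations are imposed, which follows from $0$-truncatedness confining all the higher identifications of $\Th_\Ehat^\infty$ to merely inhabited propositions.
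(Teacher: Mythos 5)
Your proposal is correct and follows the paper's own proof essentially step for step: the paper likewise factors $\Th \to \Th_E$ as $\Th \to \Th_{\Ehat} \to \Th_E$, proves the first factor is a weak equivalence because $\Th_{\Ehat} \to \Th^\infty_{\Ehat}$ is bijective on terms and hence a contextual isomorphism (\cref{lem:lt_tinfty_is_local_iso}, \cref{lem:modt_two_out_of_three_2}, packaged as \cref{lem:weq_ehat}), and proves the second factor is a non-split trivial fibration by exhibiting it as a retract of the quotient inclusion of a fibrant congruence built from the (merely $0$-truncated) identifications and equivalences of $\Th^\infty_{\Ehat}$ (\cref{thm:main_theorem}). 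The one small divergence is that you close the argument with \cref{lem:modt_two_out_of_three_1}, which assumes transitivity of the homotopy relations --- a hypothesis not present in the theorem statement; this is avoidable because the second factor is a trivial fibration, so its strict lifts compose with the weak lifts of the first factor without ever having to compose two homotopies.
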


If $\Th^\infty_{\Ehat} \to \Th^1_E$ were a weak equivalence in $\CMod_\Th$, we could conclude using $2$-out-of-$3$ for weak equivalences.
However, $\Th^\infty_{\Ehat} \to \Th^1_E$ is almost never a weak equivalence in $\CMod_\Th$ because of contexts that are not in the image of $\Th \to \Th^\infty_{\Ehat}$, such as $\Gamma \triangleq (\sigma:\partial \iS,x:\iS(\sigma),p : x \simeq_{\iS(\sigma)} x)$.
Indeed, $\idToHpty_\iS(p) \sim \hrefl$ is provable in $\Th^1_{E}$, but cannot usually be proven in $\Th^\infty_{\Ehat}$. 
But $\Th^\infty_{\Ehat} \to \Th^1_E$ doesn't need to be a full weak equivalence, it suffices that it satisfies the weak lifting conditions only over contexts that are in the image of $\Th \to \Th^\infty_{\Ehat}$.

We need a way to avoid contexts that are not in the image of $\Th \to \Th^\infty_{\Ehat}$.
We accomplish this by computing the factorization (in $\CCwf_{\reppre}$) of $\Th \to \Th_{\Ehat}$ as a $\{I^\tm,E^\tm\}$-cofibration $\Th \to {\Th_{\Ehat}}$ followed by a $\{I^\tm,E^\tm\}$-fibration ${\Th_{\Ehat}} \to \Th^\infty_{\Ehat}$.
Here $\{I^\tm,E^\tm\}$-fibrations are the right class of maps in the orthogonal factorization system generated by $I^\tm$, that is maps whose action on terms is bijective.

\begin{con}
  The $(\reppre)$-CwF $\Th_{\Ehat}$ is the unique (up to isomorphism) $(\reppre)$-CwF fitting in a factorization
  \[ \Th \xrightarrow{} \Th_{\Ehat} \xrightarrow{} \Th_{\Ehat}^\infty \]
  where $\Th \to \Th_{\Ehat}$ is a $\{I^\tm,E^\tm\}$-cofibration and $\Th_{\Ehat}^\infty$ is a $\{I^\tm,E^\tm\}$-fibration.
  \defiEnd{}
\end{con}

The CwF ${\Th_{\Ehat}}$ then plays the role of the image of $\Th \to \Th^\infty_{\Ehat}$.
It can also be identified with the subCwF of $\Th_{\Ehat}^{\infty}$ spanned by the context and types that do not contain any identity types (because they are freely generated by the type formers of $(\reppre)$-CwFs).

The subscript $-_{\Ehat}$ still indicates that $\Th_{\Ehat}$ includes lifts of all marked homotopies in $E$, or more precisely all transports and higher-dimensional transports over these lifts.
The absence of the superscript $-^{\infty}$ indicates that $\Th_{\Ehat}$ can no longer be seen as an $\infty$-category.

\begin{lem}
  The map $\Th \to {\Th_{\Ehat}}$ is a $\{I^\ty,E^\ty,I^{\repty},E^\repty,I^\tm,E^\tm\}$-cofibration.
\end{lem}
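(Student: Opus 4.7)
The plan is to show that this lemma is essentially formal, following from a general monotonicity property of weak factorization systems together with the defining construction of $\Th_{\Ehat}$. First I would unfold what the right classes of the two relevant WFSs are: a morphism $F$ in $\CCwf_{\reppre}$ has the right lifting property against $\{I^\tm, E^\tm\}$ precisely when its action on terms is a bijection, and it has the right lifting property against $\{I^\ty, E^\ty, I^{\repty}, E^{\repty}, I^\tm, E^\tm\}$ precisely when its actions on types, representable types and terms are all bijections, i.e., when $F$ is a contextual isomorphism in $\CCwf_{\reppre}$ in the sense of the end of~\cref{ssec:contextual_models}.

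Next I would invoke the standard monotonicity of WFSs with respect to their generators: since $\{I^\tm, E^\tm\} \subseteq \{I^\ty, E^\ty, I^{\repty}, E^{\repty}, I^\tm, E^\tm\}$, every morphism in the right class of the larger WFS is automatically bijective on terms, and hence lies in the right class of the smaller WFS. Dually, any map with the left lifting property against the smaller right class also has the left lifting property against the even smaller right class associated with the larger generating set. Consequently every $\{I^\tm, E^\tm\}$-cofibration is automatically an $\{I^\ty, E^\ty, I^{\repty}, E^{\repty}, I^\tm, E^\tm\}$-cofibration. Since the construction of $\Th_{\Ehat}$ makes the map $\Th \to \Th_{\Ehat}$ a $\{I^\tm, E^\tm\}$-cofibration by definition, it is a fortiori an $\{I^\ty, E^\ty, I^{\repty}, E^{\repty}, I^\tm, E^\tm\}$-cofibration, yielding the lemma.

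I do not expect any substantive obstacle: the argument is a short bookkeeping consequence of the defining factorization of $\Th_{\Ehat}$. The only subtle point is to keep track of the direction in which the two sides of a WFS move when the generating set is enlarged—more generators shrink the right class and enlarge the left class—which is precisely the direction needed here. No property of the particular $(\repinfty)$-CwF $\Th^\infty_{\Ehat}$, and no analysis of the specific homotopies in $E$, enters into this step; those play their role in the subsequent sections where the lift from $\Th_{\Ehat}$ to $\Th_E$ is built.
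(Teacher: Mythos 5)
Your proposal is correct and matches the paper's own proof, which is exactly the one-line observation that $\Th \to \Th_{\Ehat}$ is a $\{I^\tm,E^\tm\}$-cofibration by construction and that $\{I^\tm,E^\tm\} \subseteq \{I^\ty,E^\ty,I^{\repty},E^{\repty},I^\tm,E^\tm\}$, so the monotonicity of cofibration classes in the generating set gives the result. You simply spell out the (correct) direction of that monotonicity in more detail than the paper does.
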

\begin{proof}
  The map $\Th \to {\Th_{\Ehat}}$ is a $\{I^\tm, E^\tm\}$-cofibration by definition and
  \[
    \{I^\tm, E^\tm\} \subseteq \{I^\ty, E^\ty, I^{\repty}, E^\repty, I^\tm, E^\tm\}.
    \tag*{\qedhere}
  \]
\end{proof}

\begin{lem}\label{lem:lt_tinfty_is_local_iso}
  The map ${\Th_{\Ehat}} \to \Th_{\Ehat}^\infty$ is a contextual isomorphism in $\CMod_\Th$.
\end{lem}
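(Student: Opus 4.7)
The plan is to unpack both terms of the statement and observe that the conclusion is an immediate consequence of the factorization property. By construction of the factorization, the map $\Th_{\Ehat} \to \Th_{\Ehat}^\infty$ is a $\{I^\tm,E^\tm\}$-fibration in $\CCwf_{\reppre}$, which by the explicit description of that class means that its action on terms is bijective: for every object $\Gamma \in \Th_{\Ehat}$ and every type $(\Gamma \vdash A\ \type) \in \Th_{\Ehat}$, the map from the set of terms $(\Gamma \vdash a : A) \in \Th_{\Ehat}$ to the set of terms $(F(\Gamma) \vdash F(a) : F(A)) \in \Th_{\Ehat}^\infty$ is a bijection.

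Next I unfold the condition of being a contextual isomorphism in $\CMod_\Th$, as given in~\cref{ssec:sogat_contextual}. It requires that for each generating sort $\iS \in \GenTy_\Th$ with boundary $(\vdash \partial\iS\ \type) \in \Th$, every object $\Gamma \in \Th_{\Ehat}$ and every interpretation of the boundary $(\gamma:\Gamma \vdash \sigma(\gamma) : \partial\iS) \in \Th_{\Ehat}$, the induced map sending $(\gamma:\Gamma \vdash a(\gamma) : \iS(\sigma(\gamma))) \in \Th_{\Ehat}$ to $(\gamma:F(\Gamma) \vdash F(a)(\gamma) : \iS(F(\sigma)(\gamma))) \in \Th_{\Ehat}^\infty$ is a bijection.

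Now comes the only piece of content: under the functorial semantics of a SOGAT in a $(\reppre)$-CwF, an element of a generating sort $\iS$ over $\Gamma$ with boundary $\sigma$ in the model $\Th_{\Ehat}$ is by definition a term of the type $\iS[\sigma]$ in context $\Gamma$ inside the $(\reppre)$-CwF $\Th_{\Ehat}$, and similarly for $\Th_{\Ehat}^\infty$. Furthermore, the composite $\Th \to \Th_{\Ehat} \to \Th_{\Ehat}^\infty$ is the structure morphism that interprets $\iS$ in $\Th_{\Ehat}^\infty$, so the action of $F$ on sort elements is literally the restriction of its action on terms to the particular types $\iS[\sigma]$. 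Since $F$ is bijective on all terms, it is bijective on these terms in particular, and hence is a contextual isomorphism in $\CMod_\Th$. There is no real obstacle here; the whole argument is a direct unpacking of the two definitions together with the characterization of $\{I^\tm,E^\tm\}$-fibrations.
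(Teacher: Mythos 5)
Your proof is correct and follows the same route as the paper's: the map is a $\{I^\tm,E^\tm\}$-fibration by construction of the factorization, hence bijective on terms, and since elements of a generating sort $\iS$ over a boundary $\sigma$ are precisely terms of the type $\iS[\sigma]$, the action on sort elements is a restriction of the action on terms and is therefore also bijective. The paper states this in one line; your version just spells out the unpacking of the two definitions.
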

\begin{proof}
  Since that map is bijective on terms in $\CCwf_{\reppre}$, it is in particular bijective on the elements of any sort, \ie a contextual isomorphism.
\end{proof}

\begin{lem}\label{lem:weq_ehat}
  If $\Th \to \Th^\infty_{\Ehat}$ is a weak equivalence in $\CMod_\Th$, then $\Th \to {\Th_{\Ehat}}$ is also a weak equivalence in $\CMod_\Th$.
\end{lem}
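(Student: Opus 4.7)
The plan is to deduce this from the factorization $\Th \to \Th_\Ehat \to \Th_\Ehat^\infty$ together with the $2$-out-of-$3$ property for weak equivalences. Writing $F : \Th \to \Th_\Ehat$ and $G : \Th_\Ehat \to \Th_\Ehat^\infty$, we have $G \circ F$ equal to the canonical map $\Th \to \Th_\Ehat^\infty$, which is a weak equivalence in $\CMod_\Th$ by hypothesis. By~\cref{lem:modt_two_out_of_three_2}, it then suffices to show that $G$ is a weak equivalence in $\CMod_\Th$.

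For this I would combine~\cref{lem:lt_tinfty_is_local_iso}, which tells us that $G$ is a contextual isomorphism in $\CMod_\Th$, with the observation that every contextual isomorphism is automatically a (split) weak equivalence. Concretely, fix a generating sort $\iS \in \GenTy_\Th$, a boundary $(\gamma:\Gamma \vdash \sigma(\gamma) : \partial \iS) \in \Th_\Ehat$ and a target element $(\gamma:G(\Gamma) \vdash x(\gamma) : \iS(G(\sigma)(\gamma))) \in \Th_\Ehat^\infty$. Since the action of $G$ on elements of $\iS$ is a bijection, there is a (unique) lift $(\gamma:\Gamma \vdash x_0(\gamma) : \iS(\sigma(\gamma))) \in \Th_\Ehat$ with $G(x_0) = x$ strictly. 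The constant homotopy $\hrefl_{\iS(G(\sigma)(\gamma))}(x(\gamma))$ then witnesses the weak lifting condition for $G$.

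Putting these two facts together, $G$ is a weak equivalence, and so by~\cref{lem:modt_two_out_of_three_2} applied to $\Th \xrightarrow{F} \Th_\Ehat \xrightarrow{G} \Th_\Ehat^\infty$, we conclude that $F = (\Th \to \Th_\Ehat)$ is a weak equivalence in $\CMod_\Th$, as required. There is no real obstacle here: the only ingredient beyond the definitions is the remark that the weak lifting condition is satisfied trivially by a map whose action on each generating sort is bijective, since one may take the lift strictly and use $\hrefl$ for the homotopy. Note that the transitivity hypothesis of~\cref{lem:modt_two_out_of_three_1} is not needed, only the direction of $2$-out-of-$3$ given by~\cref{lem:modt_two_out_of_three_2}.
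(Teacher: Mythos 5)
Your proposal is correct and follows exactly the paper's own route: apply the direction of $2$-out-of-$3$ from~\cref{lem:modt_two_out_of_three_2} to the factorization $\Th \to \Th_{\Ehat} \to \Th^\infty_{\Ehat}$, using~\cref{lem:lt_tinfty_is_local_iso} to see that the second map is a weak equivalence. The only difference is that you spell out the (easy, and correct) observation that a contextual isomorphism is a weak equivalence via strict lifts and $\hrefl$, which the paper leaves implicit.
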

\begin{proof}
  By $2$-out-of-$3$, since ${\Th_{\Ehat}} \to \Th^\infty_{\Ehat}$ is a weak equivalence in $\CMod_\Th$ as a consequence of~\cref{lem:lt_tinfty_is_local_iso}.
  This direction of $2$-out-of-$3$ in $\CMod_\Th$ was proven in~\cref{lem:modt_two_out_of_three_2}.
\end{proof}

The lifting property of the $\{I^\ty,E^\ty,I^{\repty},E^\repty,I^\tm,E^\tm\}$-cofibration $\Th \to {\Th_{\Ehat}}$ against the contextual isomorphism $\Th_E \to \Th_E^1$ provides a morphism ${\Th_{\Ehat}} \to \Th_E$ of $(\reppre)$-CwFs.
The following diagram commutes:
\[ \begin{tikzcd}
    \Th
    \ar[rr]
    \ar[dd]
    \ar[rd]
    && \Th_{E}
    \ar[dd, two heads, "{\sim}", "L"']
    \\
    &
    \Th_{\Ehat}
    \ar[ld, two heads, "{\sim}", "K"']
    \ar[ru, "F"]
    &
    \\
    \Th^\infty_{\Ehat}
    \ar[rr, "G"]
    && \Th^1_{E} \rlap{\ .}
  \end{tikzcd} \]
The remainder of the paper develops tools used to prove that $\Th_{\Ehat} \to \Th_E$ is a non-split trivial fibrations whenever $\Th_{\Ehat}^\infty$ is merely $0$-truncated relatively to $\Th \to \Th_{\Ehat}^\infty$.

%%% Local Variables:
%%% mode: latex
%%% TeX-master: "main"
%%% End:

\section{Quotients of models of first-order GATs}\label{sec:fibrant_congruences}

In this section we discuss congruences over models of (first-order) GATs and quotients of such congruences.
In this paper, we only need applications to the GAT of $(\reppre)$-CwFs, but it is easier to present them in full generality.

For ordinary algebraic theories, congruences and quotients are very well-behaved.
A congruence simply consists of equivalence relations for every sort, preserving all operations in the theory.
The quotient of a model is computed by taking the quotient of every sort by the specified equivalence relation.

For generalized algebraic theories, matters are more complicated, because of the presence of dependent sorts.
We still have a general notion of congruence, but quotients can no longer be computed sortwise.
In other words, the quotient inclusion is not necessarily a \emph{trivial fibration}.

For example, we can consider the quotient of the category
$\begin{tikzcd}[column sep=20pt]
  a \ar[r,"f"] & b & c \ar[r,"g"] & d
\end{tikzcd}$
by the congruence that identifies $b$ and $c$.
In the quotient, it is possible to compose $f$ and $g$, but there is no morphism corresponding to their composition in the original category.
The quotient inclusion functor is not full.

However, by adding additional fibration condition to the definition of congruence, it becomes possible to compute the quotient pointwise.
The quotient inclusion of any ``fibrant'' congruence is then a non-split trivial fibration.
Furthermore, every non-split trivial fibration arises (up to isomorphism) as the quotient of a fibrant congruence.

\subsection{Functorial semantics of GATs}

We start by briefly recalling the functorial semantics of GATs in $\CCwf_{\lexpre}$, in the same style as for the semantics of SOGATs developed in~\cref{sec:sogats}.
\begin{defi}
  A \defemph{generalized algebraic theory} (\defemph{GAT}) is a $\{I^\ty,I^\tm,E^\tm\}$-cellular $\lexpre$-CwF.
  \defiEnd{}
\end{defi}

Any SOGAT without representable sorts is also a GAT (this can be proven by induction over the $\{I^\ty,I^\tm,E^\tm\}$-cellular presentation of the SOGAT).
This includes the GATs of categories (\cref{exa:gat_cat}), monoidal categories (\cref{exa:gat_moncat}), \etc

\begin{exa}
  The GAT $\Th_{\CCwf}$ of CwFs extends the GAT of categories with two new sorts
  \begin{alignat*}{1}
    & \iTy : \iOb \to \MT_{\CCwf}, \\
    & \iTm : (\Gamma : \iOb) \to \iTy(\Gamma) \to \MT_{\CCwf}, 
  \end{alignat*}
  six operations
  \begin{alignat*}{1}
    & \_[\_] : \iTy(\Gamma) \to \iHom(\Delta,\Gamma) \to \iTy(\Delta), \\
    & \_[\_] : \iTm(\Gamma,A) \to (\gamma : \iHom(\Delta,\Gamma)) \to \iTy(\Delta,A[\gamma]), \\
    & \_.\_ : (\Gamma : \iOb) \to \iTy(\Gamma) \to \iOb, \\
    & \bm{p} : \iHom(\Gamma. A, \Gamma), \\
    & \bm{q} : \iTm(\Gamma. A, A[\bm{p}]), \\
    & \angles{\_,\_} : (\gamma : \iHom(\Delta,\Gamma)) \to \iTm(\Delta, A[\gamma]) \to \iHom(\Delta, \Gamma.A)
  \end{alignat*}
  and some equations expressing the functoriality and naturality of these operations.

  Similarly, there is a GAT $\Th_{\CCwf_{\reppre}}$ of $(\reppre)$-CwFs, which extends $\Th_\CCwf$ by a new sort of representable types, a substitution operation for representable types, an operation for every type-theoretic operation of a $(\reppre)$-CwF, and equations expressing the functoriality and naturality of the operations.
  \defiEnd{}
\end{exa}

We now fix a GAT $\Th$ for most of the section.
A closed sort of $\Th$ is an object $X \in \Th$ (or equivalently a closed type).
A dependent sort of $\Th$ is a type $(X \vdash Y\ \type) \in \Th$.

\begin{defi}
  A model of $\Th$ is a $\lexpre$-CwF morphism $\Th \to \CSet$.
  A morphism between two models $\CM,\CN : \Th \to \CSet$ is a natural transformation $\CM \Ra \CN$.
  \defiEnd{}
\end{defi}

Given a closed sort $X \in \Th$ and morphism $F : \CM \to \CN$, the component
\[ F_X : \CM_X \to \CN_X \]
of the natural transformation is the action of $F$ on elements of the sort $X$.

For any dependent sort $(X \vdash Y\ \type) \in \Th$ and morphism $F : \CM \to \CN$, there is an action
\[ F_Y : \CM_Y(x) \to \CN_Y(F_X(x)) \]
of $F$ on elements of $Y$, uniquely specified by the fact that 
\[ F_{X.Y} : \CM_{X.Y} \to \CN_{X.Y} \]
should send $(x,y)$ to $(F_X(x),F_Y(y))$.

\begin{defi}\label{def:tfib_gat}
  A morphism $F : \CM \to \CN$ in $\CMod_\Th$ is a \defemph{trivial fibration} if for every dependent sort $(X \vdash Y\ \type) \in \Th$, the map
  \[ F_Y : \CM_Y(x) \to \CN_Y(F_X(x)) \]
  is surjective.
  \defiEnd{}
\end{defi}
As in earlier definitions of weak equivalences and trivial fibrations, we actually have two notions of trivial fibrations: split trivial fibrations and non-split trivial fibrations.

\begin{rem}
  When applied to the GAT $\Th_{\CCwf}$ of CwFs, trivial fibrations as defined in~\cref{def:tfib_gat} are not exactly the same as trivial fibrations as defined in~\cref{def:tfib_cwf}, which could be called local trivial fibrations.
  Indeed, \cref{def:tfib_gat} includes additional lifting conditions for context and substitutions.
  Fortunately, any trivial fibration is a local trivial fibration, which is the direction needed for our constructions.
  \defiEnd{}
\end{rem}

\subsection{Congruences}

Given an equivalence relation $\widetilde{X}$ over a set $X$, we write $(x_1 \sim x_2) \in \widetilde{X}$ to indicate that $x_1$ and $x_2$ .
We use the same notation for dependent equivalence relations.
Given a dependent equivalence relation $\widetilde{Y}$ over $\widetilde{X}$, writing $(y_1 \sim y_2) \in \widetilde{Y}$ presupposes that $(x_1 \sim x_2) \in \widetilde{X}$ (when $y_1 : Y(x_1)$ and $y_2 : Y(x_2)$).

The $\lexpre$-CwF $\CSetoid$ of setoids is a CwF over the category of setoids in which types are dependent (or displayed) setoids, \ie families of sets equipped with a dependent equivalence relation.

\begin{defi}
  A \defemph{congruence} over a model $\CM : \Th \to \CSet$ is a $\lexpre$-CwF morphism $\widetilde{\CM} : \Th \to \CSetoid$ such that $U \circ \widetilde{\CM} = \CM$.
  \defiEnd{}
\end{defi}

Given a congruence $\widetilde{\CM}$ and a closed sort $X \in \Th$, we have an equivalence relation $\widetilde{\CM}_X$ on $\CM_X$.
For a dependent sort $(X \vdash Y\ \type) \in \Th$, we have a dependent equivalence relation $\widetilde{\CM}_Y$ over $\widetilde{\CM}_X$.
Given any morphism or term in $\Th$, the action of $\widetilde{\CM}$ on morphisms and terms implies that the corresponding operation of $\CM$ preserves the (dependent) equivalence relations.

Given two congruences $\widetilde{\CM}$ and $\overline{\CM}$ over the same model $\CM$, we write $\widetilde{\CM} \subseteq \overline{\CM}$ when $\widetilde{\CM}_X \subseteq \overline{\CM}_X$ as relations for every closed sort or dependent sort $X$.

\begin{defi}
  The \defemph{kernel} of a morphism $F : \CM \to \CN$ of models is the congruence $\ker_F : \Th \to \CSetoid$ over $\CM$ defined by:
  \[ ((x_1 \sim x_2) \in \ker_F(X)) \overset{\triangle}{\iff} F(x_1) = F(x_2) \]
  for any closed sort $X \in \Th$ and
  \[ ((y_1 \sim y_2) \in \ker_F(Y)) \overset{\triangle}{\iff} F(y_1) = F(y_2) \]
  for any dependent sort $(X \vdash Y\ \type) \in \Th$.
  \defiEnd{}
\end{defi}

\begin{defi}
  A \defemph{quotient} of a congruence $\widetilde{\CM}$ is a model $\Quot$ along with a morphism $\quot : \CM \to \Quot$ such that $\widetilde{\CM} \subseteq \ker_{\mathbf{q}}$ and satisfying the following universal property: for every other morphism $F : \CM \to \CN$ such that $\widetilde{\CM} \subseteq \ker_F$, there is a unique morphism $\widetilde{F} : \Quot \to \CN$ such that $\widetilde{F} \circ \quot = F$.
  \defiEnd{}
\end{defi}

\begin{lem}
  A quotient of a congruence always exists.
\end{lem}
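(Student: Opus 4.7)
The plan is to build the quotient as a coequalizer in $\CMod_\Th$. Since $\CMod_\Th$ is the category of models of a finitary generalized algebraic theory, it is locally finitely presentable and in particular has all coequalizers.

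First I would construct a model of \emph{related pairs}. Define a $\lexpre$-CwF morphism $P : \CSetoid \to \CSet$ by sending a setoid $(A,\sim)$ to the set $\{(a_1,a_2) \in A \times A \mid a_1 \sim a_2\}$, and a dependent setoid $(Y,\sim_Y)$ over $(X,\sim_X)$ to the family over $P(X)$ whose fibre over $(x_1,x_2)$ is $\{(y_1,y_2) \in Y(x_1) \times Y(x_2) \mid y_1 \sim_Y y_2\}$. This $P$ comes with two natural projections $\pi_1, \pi_2 : P \Rightarrow U$ to the forgetful functor. Composing with the congruence yields a model $\CR \triangleq P \circ \widetilde{\CM}$ of $\Th$ together with a parallel pair of morphisms $\pi_1, \pi_2 : \CR \rightrightarrows \CM$.

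I would then take $\Quot$ to be the coequalizer of $\pi_1$ and $\pi_2$ in $\CMod_\Th$, with $\quot : \CM \to \Quot$ the coequalizing morphism. By construction, for any related pair $(x_1 \sim x_2) \in \widetilde{\CM}_X$, the element $(x_1,x_2) \in \CR_X$ witnesses $\quot(x_1) = \quot(x_2)$, so $\widetilde{\CM} \subseteq \ker_\quot$. Conversely, a morphism $F : \CM \to \CN$ satisfies $\widetilde{\CM} \subseteq \ker_F$ exactly when $F \circ \pi_1 = F \circ \pi_2$, so the universal property of the coequalizer provides the unique factorization $\widetilde{F} : \Quot \to \CN$ with $\widetilde{F} \circ \quot = F$.

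The only step requiring real verification is that $P$ preserves the $\lexpre$-CwF structure, so that $\CR$ is indeed a model. This is routine: the terminal object is preserved because the only relation on a singleton is equality, and context extension as well as substitution are preserved because forming the set of related pairs commutes with fibered products of sets. I do not expect any serious obstacle here; the real content of the section will lie in the subsequent analysis of \emph{fibrant} congruences, where the quotient inclusion $\quot$ is additionally required to be a trivial fibration.
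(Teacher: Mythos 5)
Your proof is correct and follows the same route as the paper, which simply observes that the quotient is a coequalizer and that $\CMod_\Th$ is cocomplete; your construction of the model of related pairs $\CR = P \circ \widetilde{\CM}$ and the parallel pair $\pi_1,\pi_2 : \CR \rightrightarrows \CM$ is a faithful expansion of that one-line argument.
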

\begin{proof}
  The quotient can be expressed as a coequalizer and the category of models of $\Th$ is cocomplete.
\end{proof}

\subsection{Fibrant congruences and quotients}

\begin{defi}
  A dependent equivalence relation $\widetilde{Y}$ over an equivalence relation $\widetilde{X}$ is \defemph{fibrant} if it satisfies the following lifting condition:
  for every $(x_1 \sim x_2) \in \widetilde{X}$ and $y_1 : Y(x_1)$, there merely exists some $y_2 : Y(x_2)$ such that $(y_1 \sim y_2) \in \widetilde{Y}$.
  \defiEnd{}
\end{defi}

We can restrict the types of $\CSetoid$ to fibrant dependent equivalence relations.
This provides a sub-CwF $\CSetoid_\fib \subseteq \CSetoid$ over the same base category.

\begin{defi}
  A congruence $\widetilde{\CM} : \Th \to \CSetoid$ is \defemph{fibrant} if it factors through the inclusion $\CSetoid_\fib \subseteq \CSetoid$, \ie if for every dependent sort $(X \vdash Y\ \type) \in \Th$, the dependent equivalence relation $\widetilde{\CM}_Y$ is fibrant.
  \defiEnd{}
\end{defi}

\begin{lem}
  The quotient functor $Q : \CSetoid_\fib \to \CSet$ extends to a $\lexpre$-CwF morphism.
\end{lem}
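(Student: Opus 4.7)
The plan is to extend $Q$ by descent along the equivalence relations. Given a fibrant displayed setoid $(Y, \sim_Y)$ over a setoid $(X, \sim_X)$, I would define the action of $Q$ on types by forming the quotient of the total space: $Q(Y, \sim_Y)$ is the family over $Q(X, \sim_X) = X/{\sim_X}$ whose fibre over $[x_0]$ is the preimage of $[x_0]$ under the induced surjection $\pi : \bigl(\coprod_x Y(x)\bigr)/{\sim_Y} \to X/{\sim_X}$. Concretely, an element of $Q(Y, \sim_Y)([x_0])$ is an equivalence class $[x, y]$ with $x \sim_X x_0$ and $y \in Y(x)$. On terms, any term $t$ is a section $x \mapsto (x, t(x))$ that preserves the equivalence relations by definition, so it descends to a well-defined section $Q(t)([x_0]) \triangleq [x_0, t(x_0)]$. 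Substitution preservation is then immediate by inspecting the definitions.

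Next I would verify preservation of the $\lexpre$ structure. Terminal object preservation is immediate, since the quotient of the terminal setoid is a singleton. For context extension, note that $(X, \sim_X).(Y, \sim_Y)$ is by construction the setoid on $\coprod_x Y(x)$ equipped with $\sim_Y$, so its quotient is exactly $\bigl(\coprod_x Y(x)\bigr)/{\sim_Y}$; this agrees with the context $Q(X).Q(Y)$ via the canonical identification sending an element $[x,y]$ of the total quotient to the pair $([x], [x,y])$ (the data $[x] \in X/{\sim_X}$ is recoverable from $[x,y]$, which is what makes the identification strict rather than merely up to isomorphism). Unit- and $\Sigma$-types are preserved because quotients of setoids commute with the singleton and with nested setoid extensions.

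The crucial role of fibrancy appears at one specific point: the canonical map $Y(x_0)/{\sim_Y|_{x_0}} \to Q(Y,\sim_Y)([x_0])$ sending $[y] \mapsto [x_0, y]$ must be a bijection in order for the descended family to behave correctly. Fibrancy is exactly the surjectivity of this map: given $[x, y] \in Q(Y,\sim_Y)([x_0])$ with $x \sim_X x_0$, the lifting condition produces some $y_0 \in Y(x_0)$ with $y \sim_Y y_0$, so that $[x, y] = [x_0, y_0]$. Without fibrancy, the induced family would not be locally determined by $Y(x_0)$ and the identifications needed for strict preservation of context extension and for quotienting to be compatible with terms would fail.

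The main obstacle I expect is bookkeeping rather than mathematical depth: one must verify that every structural equation of a $\lexpre$-CwF morphism holds on the nose, and in each verification keep track of which fibrancy lift is being invoked. The only mildly subtle point is that the natural action on types is defined in terms of the total space (ensuring it depends only on $[x_0]$), while the intuitive description uses $Y(x_0)/{\sim_Y|_{x_0}}$, and fibrancy is precisely what reconciles these two presentations.
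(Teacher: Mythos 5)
Your proposal is correct and matches the paper's argument in substance: the paper also reduces everything to preservation of context extension (phrased as $Q$ preserving pullbacks along setoid fibrations) and invokes fibrancy at exactly the point you identify, namely to replace a representative $(x,y)$ of a class in the total quotient by one of the form $(x_0,y_0)$ lying over the chosen base point. Your version is merely more explicit about the action on types and terms, whereas the paper packages the same computation as an inverse to the canonical comparison map $Q(B) \to Q(A) \times_{Q(X)} Q(Y)$.
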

\begin{proof}
  We have to prove that $Q$ preserves context extensions.
  This boils down to the fact that $Q$ preserves pullback squares when at least one of the maps of the cospan is a setoid fibration.

  Take a pullback square in $\CSetoid$.
  \[ \begin{tikzcd}
      B
      \ar[rd, phantom, very near start, "\lrcorner"]
      \ar[r]
      \ar[d, two heads]
      & Y
      \ar[d, "p", two heads]
      \\ A
      \ar[r, "f"]
      & X\rlap{ ,}
    \end{tikzcd} \]
  where $Y \thra X$ is a setoid fibration.
  We write $[-]$ for the inclusions into the quotients of the setoids.
  Our goal is to prove that the canonical map $Q(B) \to Q(A) \times_{Q(X)} Q(Y)$ is an isomorphism.

  We define the inverse $Q(A) \times_{Q(X)} Q(Y) \to Q(B)$ using the universal properties of the quotients.
  Take a pair $([a],[y])$ in $Q(A) \times_{Q(X)} Q(Y)$.
  We have $[f(a)] = [p(y)]$, \ie $f(a) \sim p(y)$ in the setoid $X$.
  Since $p : Y \thra X$ is a fibration, we can find $y' \in Y$ such that $p(y) = f(a)$ and $y' \sim y$. We send the pair $([a],[y])$ to the element $[(a,y')]$ of $Q(B)$.
  It can be shown that element does not depend on the choices of $a$, $y$ and $y'$, thus the candidate inverse is well-defined.

  It is then straightforward to show that the defined map is an inverse.
\end{proof}

\begin{lem}\label{lem:fib_congruence_quotient}
  Given a fibrant congruence $\widetilde{\CM}$, the model $\Quot(\widetilde{\CM}) \triangleq (Q \circ \widetilde{\CM}) : \Th \to \CSet$ is the quotient of $\widetilde{\CM}$.
  The quotient inclusion $\quot : \CM \Ra \Quot(\widetilde{\CM})$ is the natural transformation $\widetilde{\CM} \cdot q$, where $q : U \Ra Q$ is the natural transformation that maps the underlying sets of setoids to their quotients.
\end{lem}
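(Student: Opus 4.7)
The plan is to verify the three components of the claim in turn: that $\Quot(\widetilde{\CM})$ is a model, that $\quot$ is a morphism of models whose kernel contains $\widetilde{\CM}$, and that the pair $(\Quot(\widetilde{\CM}), \quot)$ enjoys the required universal property. The bulk of the genuine technical content — namely that set-theoretic quotients of fibrant setoids commute with context extensions — has already been absorbed into the preceding lemma.

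First I would observe that $\Quot(\widetilde{\CM}) = Q \circ \widetilde{\CM}$ is a composition of $\lexpre$-CwF morphisms and therefore a model of $\Th$. Next, the natural transformation $q : U \Ra Q$ has components $q_A : U(A) \thra Q(A)$ given by the canonical quotient maps; these components commute with the $\lexpre$-CwF structure (terminal object, types, context extensions) — this is precisely what is ensured by the preceding lemma using the fibrancy hypothesis. Whiskering with $\widetilde{\CM}$ then yields a morphism of models $\quot : \CM \to \Quot(\widetilde{\CM})$. The inclusion $\widetilde{\CM} \subseteq \ker_\quot$ is immediate from the definition of the set-theoretic quotient: for any (closed or dependent) sort $X \in \Th$, related elements $(x_1 \sim x_2) \in \widetilde{\CM}_X$ belong to the same equivalence class, so $\quot_X(x_1) = \quot_X(x_2)$.

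For the universal property, suppose $F : \CM \to \CN$ is a morphism of models with $\widetilde{\CM} \subseteq \ker_F$. I construct $\widetilde{F} : \Quot(\widetilde{\CM}) \to \CN$ pointwise. For each sort $X$ of $\Th$, the component $F_X : \CM_X \to \CN_X$ respects the equivalence relation $\widetilde{\CM}_X$, so by the universal property of set quotients it factors uniquely through $\quot_X$ as a map $\widetilde{F}_X : \Quot(\widetilde{\CM})_X \to \CN_X$. To verify that the family $\{\widetilde{F}_X\}$ is a morphism of models, I would check naturality against any morphism or operation $\alpha$ of $\Th$ by precomposing the corresponding square with $\quot$: the square for $F$ commutes by assumption, and since $\quot$ is componentwise surjective hence (jointly) epimorphic in $\CSet$, the square for $\widetilde{F}$ commutes as well. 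Uniqueness of $\widetilde{F}$ is immediate from the same surjectivity.

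The main potential obstacle is bookkeeping rather than mathematical depth: for a dependent sort $(X \vdash Y\ \type) \in \Th$, one must check that the factorization $\widetilde{F}_Y$ is well-defined over equivalence classes in the base, which reduces to the fact that the fiber $\Quot(\widetilde{\CM})_Y$ over $[x]$ depends only on $[x]$ — a consequence of the fibrancy of $\widetilde{\CM}_Y$ already handled in the preceding lemma. Once that construction is in hand, the remainder of the argument is a routine sort-by-sort application of the universal property of set quotients.
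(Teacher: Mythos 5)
Your proposal is correct and follows essentially the same route as the paper: the quotient is computed sortwise, $\widetilde{F}$ is obtained componentwise from the universal property of set quotients, and naturality plus uniqueness follow from the surjectivity of the quotient maps, with the compatibility of $Q$ with the $\lexpre$-CwF structure delegated to the preceding lemma exactly as you do. The extra remarks on dependent sorts and on $\quot$ being a model morphism are fine and only make explicit what the paper leaves implicit.
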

\begin{proof}
  Let $\CN$ be any model of $\Th$, along with a morphism $F : \CM \to \CN$ such that $\widetilde{\CM} \subseteq \ker_F$.
  We define a natural transformation $\widetilde{F} : \Quot(\widetilde{\CM}) \Ra \CN$.

  For any closed sort $X \in \Th$, we define $\widetilde{F}_X : Q(\widetilde{\CM}_X) \to \CN_X$ using the universal property of $Q(\widetilde{\CM}_X)$, as the unique lift of $F_X : \CM_X \to \CN_X$.

  For any morphism $f : X \to Y$, the commutation of the square
  \[ \begin{tikzcd}
      Q(\widetilde{\CM}_X)
      \ar[r, "Q(\widetilde{\CM}_f)"]
      \ar[d, "\widetilde{F}_X"]
      & Q(\widetilde{\CM}_Y)
      \ar[d, "\widetilde{F}_Y"]
      \\
      \widetilde{\CN}_X
      \ar[r, "\CN_f"]
      & \widetilde{\CN}_Y
    \end{tikzcd} \]
  follows from the universal property of the quotient $Q(\widetilde{\CM}_X)$.

  The uniqueness of $\widetilde{F}$ as a lift of $F$ follows from the uniqueness of every component $\widetilde{F}_X$ as a lift of $F_X$.
\end{proof}

\begin{cor}\label{cor:fib_cong_tfib}
  If $\widetilde{\CM}$ is a fibrant congruence, then $\quot : \CM \to \Quot(\widetilde{\CM})$ is a non-split trivial fibration in $\CMod_\Th$.
\end{cor}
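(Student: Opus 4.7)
The plan is to unwind the explicit description of the quotient given by~\cref{lem:fib_congruence_quotient}, and then check that fibrancy of the dependent equivalence relations is exactly what is needed for the surjectivity-up-to-mere-existence required by a non-split trivial fibration. By definition, we must show that for every dependent sort $(X \vdash Y\ \type) \in \Th$, every $x \in \CM_X$ and every $\overline{y} \in \Quot(\widetilde{\CM})_Y(\quot_X(x))$, there merely exists some $y \in \CM_Y(x)$ with $\quot_Y(y) = \overline{y}$.

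The first step is to make explicit how $\quot_Y$ acts on elements. Since $\Quot(\widetilde{\CM}) = Q \circ \widetilde{\CM}$ with $Q : \CSetoid_\fib \to \CSet$ a $\lexpre$-CwF morphism, the dependent set $\Quot(\widetilde{\CM})_Y$ over $\Quot(\widetilde{\CM})_X$ is obtained by applying $Q$ pointwise to the fibrant dependent setoid $\widetilde{\CM}_Y$ over $\widetilde{\CM}_X$. The quotient inclusion $\quot_Y$ is then the natural projection sending $y \in \widetilde{\CM}_Y(x)$ to its equivalence class $[y]$ in the fibre over $[x]$.

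Next, given $\overline{y} \in \Quot(\widetilde{\CM})_Y([x])$, the preservation of context extension by $Q$ (established in the proof of~\cref{lem:fib_congruence_quotient}) ensures that $\overline{y}$ is merely represented by a pair $(x', y')$ with $x' \in \CM_X$, $y' \in \CM_Y(x')$, $(x \sim x') \in \widetilde{\CM}_X$, and whose image in the fibre over $[x]$ is $\overline{y}$. Applying the fibrancy of $\widetilde{\CM}_Y$ to the pair $((x \sim x'),\ y')$ (reading the relation in the appropriate direction), we obtain the mere existence of some $y \in \CM_Y(x)$ with $(y \sim y') \in \widetilde{\CM}_Y$, and hence $\quot_Y(y) = [y'] = \overline{y}$. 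Since this only yields mere existence, we obtain a non-split, rather than split, trivial fibration; this matches the statement of the corollary.

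No serious obstacle is expected: the delicate point is simply that the representative $y'$ lives in the fibre over $x'$ and not over $x$, so we must invoke fibrancy once to transport it into the right fibre. The lifting condition for representable and term sorts of $\Th_{\CCwf_\reppre}$ follows the same pattern, since they are all handled uniformly at the level of dependent sorts in the GAT $\Th$.
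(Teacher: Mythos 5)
Your proof is correct and follows essentially the same route as the paper: the paper's own proof is a one-liner invoking \cref{lem:fib_congruence_quotient} to say that each component of $\quot$ is a quotient inclusion and hence a non-split surjection. You additionally spell out the one point the paper leaves implicit --- that a representative of an element of the fibre over $[x]$ may live over a different representative $x'$, and fibrancy of $\widetilde{\CM}_Y$ is what transports it back to the fibre over $x$ --- which is exactly where the fibrancy hypothesis enters.
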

\begin{proof}
  By~\cref{lem:fib_congruence_quotient}, all components of $\quot$ are quotient inclusions, which are all non-split surjections.
\end{proof}

\begin{lem}
  If $F : \CM \to \CN$ is a non-split trivial fibration in $\CMod_\Th$, then $\ker_F$ is a fibrant congruence and $\CN$ is a quotient of $\ker_F$.
\end{lem}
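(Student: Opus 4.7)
The plan is to verify the two claims in sequence. First, I would show that $\ker_F$ is a fibrant congruence. The fact that $\ker_F$ is a congruence is immediate from the definition together with the functoriality of $F$: if $F(x_1) = F(x_2)$ and $f : X \to Y$ is any morphism in $\Th$, then $F(\CM_f(x_1)) = \CN_f(F(x_1)) = \CN_f(F(x_2)) = F(\CM_f(x_2))$, so kernel relations are preserved. For fibrancy, fix a dependent sort $(X \vdash Y\ \type) \in \Th$, take $(x_1 \sim x_2) \in \ker_F(X)$, i.e.\ $F(x_1) = F(x_2)$, and $y_1 : \CM_Y(x_1)$. Then $F(y_1) \in \CN_Y(F(x_1)) = \CN_Y(F(x_2))$, and since $F$ is a non-split trivial fibration, there merely exists $y_2 \in \CM_Y(x_2)$ with $F(y_2) = F(y_1)$, so $(y_1 \sim y_2) \in \ker_F(Y)$.

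Second, I would show that $\CN$ (together with $F$) is a quotient of $\ker_F$. Since $\ker_F$ is fibrant, by~\cref{lem:fib_congruence_quotient} the quotient $\Quot(\ker_F)$ exists with quotient map $\quot : \CM \to \Quot(\ker_F)$. The inclusion $\ker_F \subseteq \ker_F$ induces, by the universal property, a unique morphism $\widetilde{F} : \Quot(\ker_F) \to \CN$ with $\widetilde{F} \circ \quot = F$. It then suffices to prove that $\widetilde{F}$ is an isomorphism in $\CMod_\Th$, which I would do componentwise by exhibiting each $\widetilde{F}_X$ as a bijection (naturality of the inverse is then automatic by uniqueness).

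For injectivity at a closed sort $X$: if $\widetilde{F}([x_1]) = \widetilde{F}([x_2])$, then $F(x_1) = F(x_2)$, so $(x_1 \sim x_2) \in \ker_F(X)$ and $[x_1] = [x_2]$. The same argument works for any dependent sort. For surjectivity at a closed sort, an element $n \in \CN_X$ merely admits a lift $m \in \CM_X$ with $F(m) = n$ because $F$ is a non-split trivial fibration, and then $\widetilde{F}([m]) = n$. For a dependent sort $(X \vdash Y\ \type)$, given $\xi \in \Quot(\ker_F)_X$ and $n \in \CN_Y(\widetilde{F}(\xi))$, pick a representative $m \in \CM_X$ with $[m] = \xi$ (using that $\quot_X$ is surjective); then $n \in \CN_Y(F(m))$, so by non-split trivial fibrancy of $F$ there merely exists $y \in \CM_Y(m)$ with $F(y) = n$, and $[y] \in \Quot(\ker_F)_Y(\xi)$ satisfies $\widetilde{F}([y]) = n$.

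The proof is essentially routine once fibrancy of $\ker_F$ is observed; the only mild subtlety is that surjectivity of $\widetilde{F}$ on dependent sorts requires first lifting the base element through $\quot$ (to obtain a representative in $\CM$) before applying the lifting property of $F$ itself. Since all the existential statements are only ``mere'' existence, no choice principle is invoked, and the resulting componentwise bijection yields an isomorphism in $\CMod_\Th$, completing the identification of $\CN$ with the quotient.
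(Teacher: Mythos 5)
Your proof is correct and follows essentially the same route as the paper's: fibrancy of $\ker_F$ via the lifting property of $F$, then factoring $F$ through the quotient and checking that the induced map is a sortwise bijection. The only cosmetic difference is that the paper obtains surjectivity of the induced map by cancellation (the composite $F = \widetilde{F} \circ \quot$ is surjective on every sort, hence so is $\widetilde{F}$) rather than by re-running the lifting argument, and it checks injectivity on dependent sorts after first lifting both elements to a common representative of the base point; both variants are fine.
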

\begin{proof}
  We first prove the fibrancy of $\ker_F$.
  
  Let $(X \vdash Y\ \type) \in \Th$ be a dependent sort of $\Th$.
  Take $x_1,x_2 : \CM_X$ such that $F(x_1) = F(x_2)$ and $y_1 : \CM_Y(x_1)$.
  Then $F(y_1) : \CN_Y(F(x_2))$.
  Since $F$ is a trivial fibration, there exists a lift $y_2 : \CM_Y(x_2)$ such that $F(y_2) = F(y_2)$.
  This proves exactly that $\ker_F$ is fibrant.

  We can now construct the quotient $\Quot(\ker_F)$ (which could be called the coimage of $F$), and we know that the quotient inclusion is a non-split trivial fibration.
  The map $F$ factors through the quotient inclusion.
  \[ \begin{tikzcd}
      \CM
      \ar[rr, "F", two heads, "{\sim}"']
      \ar[rd, "\quot"', two heads, "{\sim}"]
      && \CN
      \\
      & \Quot(\ker_F)
      \ar[ru, "G"']
    \end{tikzcd} \]
  Since $F$ is a non-split trivial fibration, $G$ is also a non-split trivial fibration, surjective on every sort.
  To prove that $G$ is an isomorphism, it suffices to show that $G$ is injective on every sort.

  Let $(X \vdash Y\ \type) \in \Th$ be a dependent sort.
  Take $x : {\Quot(\ker_F)}_X$ and $y_1,y_2 : {\Quot(\ker_F)}_Y(x)$ such that $G(y_1) = G(y_2)$.
  Since $\quot$ is a non-split trivial fibration there exists lifts $x' : \CM_X$ and $y_1',y_2' : \CM_Y(x')$ of $x$, $y_1$ and $y_2$.
  We then have $F(y'_1) = F(y'_2)$, \ie $(y'_1 \sim y'_2) \in \ker_F$.
  Therefore, $\quot(y'_1) = \quot(y'_2)$, that is $y_1 = y_2$.
  
  This proves that the actions of $G$ on every sort are injective.
  Therefore, $G$ is an isomorphism and $\CN$ is a quotient of $\ker_F$.
\end{proof}

%%% Local Variables:
%%% mode: latex
%%% TeX-master: "main"
%%% End:

\section{Trivial fibrations as retracts of quotient inclusions}\label{sec:retracts_quotients}

Fix a SOGAT $\Th$ equipped with homotopy relations and an equational extension $\Th \to \Th_E$.
Recall from~\cref{sec:partial_saturation} that we have the following commutative diagram in $\CCwf_{\reppre}$:
\[ \begin{tikzcd}
    \Th
    \ar[rr]
    \ar[dd]
    \ar[rd]
    && \Th_{E}
    \ar[dd, two heads, "{\sim}", "L"']
    \\
    &
    \Th_{\Ehat}
    \ar[ld, two heads, "{\sim}", "K"']
    \ar[ru, "F"]
    &
    \\
    \Th^\infty_{\Ehat}
    \ar[rr, "G"]
    && \Th^1_{E} \rlap{\ .}
  \end{tikzcd} \]

The goal of this section is to prove the following theorem:
\begin{thm}\label{thm:main_theorem}
  Let $\Th$ be a SOGAT with homotopy relations, and let $E$ be a set of homotopies of $\Th$.
  If $\Th^\infty_{\Ehat}$ is merely $0$-truncated relatively to $K : \Th \to \Th^\infty_{\Ehat}$, then $F : \Th_{\Ehat} \to \Th_E$ is a trivial fibration in $\CCwf_{\reppre}$.
\end{thm}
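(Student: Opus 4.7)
The plan is to realize $F : \Th_\Ehat \to \Th_E$ as a retract in $\CCwf_\reppre$ of the quotient inclusion $\quot : \Th_\Ehat \twoheadrightarrow Q$ of a fibrant congruence $\widetilde{\Th_\Ehat}$ on $\Th_\Ehat$, the latter viewed as a model of the GAT $\Th_{\CCwf_\reppre}$ of $(\reppre)$-CwFs. Once this retract has been exhibited, \cref{cor:fib_cong_tfib} guarantees that $\quot$ is a non-split trivial fibration in the sense of \cref{def:tfib_gat} applied to $\Th_{\CCwf_\reppre}$, hence a fortiori in the sense of \cref{def:tfib_cwf}, and the conclusion follows by closure of non-split trivial fibrations under retracts.

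We define $\widetilde{\Th_\Ehat}$ sortwise by mere identifiability in $\Th^\infty_\Ehat$: for each sort of $\Th_{\CCwf_\reppre}$, two elements are related whenever there merely exists a suitable identification or equivalence between them in $\Th^\infty_\Ehat$. On terms, $a \sim b$ means there merely exists $p : a \simeq b$ in $\Th^\infty_\Ehat$; on types, equivalences of types are used; and the lifting of identity types to telescopes developed in \cref{ssec:lift_tele} handles substitutions and contexts. Reflexivity, symmetry and transitivity are immediate from $\refl$, inversion and composition of identifications, and preservation by every operation of $\Th_{\CCwf_\reppre}$ follows because each such operation has a dependent action on identifications in $\Th^\infty_\Ehat$ via $\J$ and its derived transport and congruence combinators.

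The fibrancy of $\widetilde{\Th_\Ehat}$ is the technical heart of the proof and the place where the mere $0$-truncation hypothesis is used. For each dependent sort $Y$ over $X$, a lift of $y_1 : Y(x_1)$ along a base-sort relation $x_1 \sim x_2$ is produced by transporting in $\Th^\infty_\Ehat$ along an identification witnessing $x_1 \sim x_2$; the subtlety is that the transported element must be independent, modulo $\sim$, of the chosen witness. The relative $0$-truncation states that every self-identification over a context from $\Th$ is merely equal to $\refl$, which upgrades to the merely unique identifiability of any two parallel identifications with the same endpoints. All possible transports therefore yield $\sim$-related elements, and the required fibrancy lifts exist at the level of mere existence.

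Finally the retract structure is assembled as follows. The morphism $F$ respects $\widetilde{\Th_\Ehat}$: any witnessing identification $p : a \simeq b$ in $\Th^\infty_\Ehat$ becomes a strict equality under $G : \Th^\infty_\Ehat \to \Th^1_E$, and because $L \circ F = G \circ K$ with $L$ injective on terms by \cref{prop:the_the1_embedding}, we conclude $F(a) = F(b)$ in $\Th_E$. This induces a unique $\overline{F} : Q \to \Th_E$ with $\overline{F} \circ \quot = F$. Conversely, a section $j : \Th_E \to Q$ is produced from the universal property of $\Th_E$ once one checks that $Q$ models $\Th_E$: the composite $\Th \to \Th_\Ehat \xrightarrow{\quot} Q$ is a model of $\Th$, and for every marked homotopy $p : x \sim_\iS y$ in $E$ the partial saturation data $\widehat{p}$ and $\widetilde{p}$ of $\Th^\infty_\Ehat$ give $x \sim y$ and $p \sim \hrefl_\iS(x)$ in $\widetilde{\Th_\Ehat}$, so both marked equations hold strictly in $Q$. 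The identities $\overline{F} \circ j = \id_{\Th_E}$ and $j \circ F = \quot$ are then derived from the universal properties of $\Th_E$ and of the factorization $\Th \to \Th_\Ehat \to \Th^\infty_\Ehat$, exhibiting $F$ as the required retract of $\quot$.
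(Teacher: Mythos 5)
Your overall strategy --- realizing $F$ as a retract of the quotient inclusion of a fibrant congruence and concluding by closure of non-split trivial fibrations under retracts --- is exactly the paper's, as are the two universal-property arguments producing $\overline{F}$ and the section $j$. The gap is in the definition of the congruence itself. You define $\widetilde{\Th_\Ehat}$ sortwise by \emph{mere existence} of an identification or equivalence in $\Th^\infty_\Ehat$, and you claim that the relative $0$-truncation makes the choice of witness irrelevant. That is true for identifications (elements of identity types): $0$-truncatedness does give that any two parallel identifications are merely identified. It is false for \emph{equivalences between types}, which is what your relation on types uses. Even between $0$-truncated types the collection of equivalences is a set rather than a proposition (\eg a two-element type has two self-equivalences), and transporting a term along two non-homotopic equivalences produces terms that are not identified. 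Consequently (i) your relation on terms is not independent of the chosen type equivalence, so the fibrancy lifts are not well defined modulo $\sim$, and (ii) the kernel inclusion $\widetilde{\Th_\Ehat} \subseteq \ker_F$ fails: two types of $\Th_\Ehat$ can be merely equivalent in $\Th^\infty_\Ehat$ (say, a basic type and a $\Sigma$-type presenting the same data) without having equal images in $\Th_E$, so $\overline{F}$ need not exist.

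The paper's proof repairs exactly this point. Instead of arbitrary equivalences it defines, by induction on the structure of the contexts and types of $\Th_\Ehat$, a restricted relation $\approx$ together with canonical equivalences $\alpha_p$ and $\beta_q$: at a basic type $\iS(\sigma)$ the datum is an identification of the boundaries $K(\sigma_1) \simeq_{\partial \iS} K(\sigma_2)$ and the associated equivalence is transport along it; at $\Unit$, $\Sigma$ and $\Pi_\rep$ the equivalences are lifted structurally; types of different shapes are never related. This is the analogue of Hofmann's partial families of propositional isomorphisms. The relative $0$-truncation hypothesis is then used only in the base case of \cref{lem:approx_uip}, to show that any two canonical equivalences between the same contexts or types are homotopic; that is what makes the term relation well defined and the fibrancy lifts independent of the witness. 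The kernel inclusion \cref{lem:sim_subseteq_kernel} then goes through because boundary identifications become strict equalities under $G$ by equality reflection in $\Th^1_E$ and bijectivity of $L$ on terms. Without some such structural restriction on the admissible equivalences, your argument does not close.
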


We follow the structure of Hofmann's proof of the conservativity of Extensional Type Theory over Intensional Type Theory.
We will construct a quotient $\Quot(\widetilde{\Th_{\Ehat}})$ in $\CCwf_{\reppre}$ of $\Th_{\Ehat}$ by a fibrant congruence $\widetilde{\Th_{\Ehat}}$, and prove that the map $\Th_{\Ehat} \to \Th_E$ becomes a retract of the quotient inclusion $\quot : \Th_{\Ehat} \to \Quot(\widetilde{\Th_{\Ehat}})$.
The results then follows from the closure of non-split trivial fibrations under retracts.
\[ \begin{tikzcd}
    &
    \Quot(\widetilde{\Th_{\Ehat}})
    \ar[rd, "R"']
    & \\
    \Th_{\Ehat}
    \ar[rr, "F"']
    \ar[ru, "\quot"]
    &&
    \Th_E
    \ar[lu, bend right, "S"']
  \end{tikzcd} \]

\subsection{Equivalences and dependent equivalences}

For this proof we need to consider some equivalences over the contexts and types of $\Th_{\Ehat}^\infty$.
It is most convenient to work with relational equivalences, also known as one-to-one correspondences.
A relational equivalence between two types $A$ and $B$ is a type-valued relation that is functional in both directions.
The advantage of this definition is that is easy to show that the standard type structures ($\Unit$, $\Sigma$ and $\Pi$) preserve equivalences.

Let $\MC$ be a family equipped with $\Sigma$-types and weak identity types.

\begin{defi}
  Let $A_1,A_2 : \MC$ be two types.
  An \defemph{equivalence} $\alpha : A_1 \simeq A_2$ is a type-valued relation
  \[ \alpha : A_1 \to A_2 \to \MC \]
  satisfying the following two contractibility conditions
  \begin{alignat*}{1}
    & (a_1 : A_1) \to \isContr((a_2 : A_2) \times \alpha(a_1,a_2)), \\
    & (a_2 : A_2) \to \isContr((a_1 : A_1) \times \alpha(a_1,a_2)).
      \tag*{\defiEnd{}}
  \end{alignat*}
\end{defi}

The elements of $\alpha(a_1,a_2)$ will be called identifications between $a_1$ and $a_2$.

\begin{defi}
  Let $B_1 : A_1 \to \MC$ and $B_2 : A_2 \to \MC$ be two dependent types.
  A \defemph{dependent equivalence} $\beta : B_1 \simeq_\alpha B_2$ over an equivalence $\alpha : A_1 \simeq A_2$ is a family
  \[ \beta : \forall a_1\ a_2 \to \alpha(a_1,a_2) \to B_1(a_1) \simeq B_2(a_2) \]
  of equivalences.
  \defiEnd{}
\end{defi}

\begin{defi}
  Let $\alpha_1,\alpha_2 : A_1 \simeq A_2$ be two parallel equivalences.
  A \defemph{homotopy} $H : \alpha_1 \simeq \alpha_2$ between $\alpha_1$ and $\alpha_2$ is a family
  \[ H : \forall a_1\ a_2 \to \alpha_1(a_1,a_2) \simeq \alpha_2(a_1,a_2) \]
  of equivalences.
  \defiEnd{}
\end{defi}

\begin{defi}
  Let $\beta_1 : B_1 \simeq_{\alpha_1} B_2$ and $\beta_2 : B_1 \simeq_{\alpha_2} B_2$ be two parallel dependent equivalences, and $H : \alpha_1 \simeq \alpha_2$ be a homotopy between $\alpha_1$ and $\alpha_2$.
  A \defemph{dependent homotopy} $G$ between $\beta_1$ and $\beta_2$ is a family
  \begin{alignat*}{1}
    & G : \forall a_1\ a_2\ (a_{e1} : \alpha_1(a_1,a_2))\ (a_{e2} : \alpha_2(a_1,a_2))\ (a_{ee} : H(a_{e1},a_{e2})) \\
    & \quad \to \forall b_1\ b_2 \to \beta_1(a_{e1},b_1,b_2) \simeq \beta_2(a_{e2},b_1,b_2) 
  \end{alignat*}
  of equivalences.
  \defiEnd{}
\end{defi}

These notions come with various composition operations, including identity equivalences, inverse equivalences, \etc.

\subsection{Construction of the fibrant congruence}

We now assume that $\Th^\infty_{\Ehat}$ is merely $0$-truncated relatively to $K : \Th \to \Th^\infty_{\Ehat}$.

We want to construct a quotient of $\Th_{\Ehat}$ by a fibrant congruence.
This congruence should identify terms of $\Th_{\Ehat}$ when there is an identification between them in $\Th_{\Ehat}^\infty$, and it should identify types of $\Th_{\Ehat}$ when they are equivalent in $\Th_{\Ehat}^\infty$.
Since $\Th_{\Ehat}^\infty$ is $0$-truncated relatively to $K$, the choice of identifications won't matter.
However, there may be multiple parallel equivalences in $\Th_{\Ehat}^\infty$.
Because of that, we need to consider a restricted class of equivalences, built from transport over identifications.
This is analogous to the partial families $\mathsf{co}_{\Gamma,\Delta}$ and $\mathsf{ty}_{\Gamma,\Delta,\sigma,\tau}$ of propositional isomorphisms that are used in Hofmann's conservativity proof~\cite{HofmannCons}.

\begin{con}
  We define the following set-valued relations and families of equivalences by induction over the structure of contexts and types of $\Th_{\Ehat}$.
  \begin{alignat*}{1}
    & ({\approx}) : \Th_{\Ehat} \to \Th_{\Ehat} \to \SSet, \\
    & ({\approx}) : (\Gamma_1 \approx \Gamma_2) \to \Th_{\Ehat}.\Ty \to \Th_{\Ehat}.\Ty \to \SSet, \\
    & ({\approx})^{\rep} : (\Gamma_1 \approx \Gamma_2) \to \Th_{\Ehat}.\Ty_\rep \to \Th_{\Ehat}.\Ty_\rep \to \SSet, \\
    & \alpha : (\Gamma_1 \approx \Gamma_2) \to (K(\Gamma_1) \simeq K(\Gamma_2)), \\
    & \beta : (p : \Gamma_1 \approx \Gamma_2) \to (A \approx_p B) \to (K(A) \simeq_{\alpha_p} K(B)), \\
    & \beta^{\rep} : (p : \Gamma_1 \approx \Gamma_2) \to (A \approx^\rep_p B) \to (K(A) \simeq_{\alpha_p} K(B)).
  \end{alignat*}

  \begin{description}
  \item[Empty context]
    We pose $(\diamond \approx \diamond) \triangleq \{\star\}$.
    Given $p : \diamond \approx \diamond$, we let $\alpha_p$ be the identity equivalence, that is
    \[ \alpha_p(\_,\_) \triangleq \Unit. \]

  \item[Context extension]
    We pose $((\Gamma_1.A_1) \approx (\Gamma_2.A_2)) \triangleq (p : \Gamma_1 \approx \Gamma_2) \times (q : A_1 \approx_p A_2)$.
    Given $(p,q) : (\Gamma_1.A_1) \approx (\Gamma_2.A_2)$, we let $\alpha_{(p,q)}$ be the equivalence obtained by lifting the equivalences $\alpha_p$ and $\beta_q$ to extended contexts, \ie
    \[ \alpha_{(p,q)}((\gamma_1,a_1),(\gamma_2,a_2)) \triangleq (\gamma_e : \alpha_p(\gamma_1,\gamma_2)) \times (a_e : \beta_q(\gamma_e,a_1,a_2)). \]

  \item[Basic types]
    We let $(\iS(\sigma_1) \approx_p \iS(\sigma_2))$ be the set of identifications between $K(\sigma_1)$ and $K(\sigma_2)$ over the equivalence $\alpha_p$, that is an element of $(\iS(\sigma_1) \approx_p \iS(\sigma_2))$ is an element
    \[ (\gamma_1:K(\Gamma_1),\gamma_2:K(\Gamma_2),\gamma_e:\alpha_p(\gamma_1,\gamma_2) \vdash q(\gamma_e) : K(\sigma_1)(\gamma_1) \simeq_{\partial \iS} K(\sigma_2)(\gamma_2)) \in \Th_{\Ehat}^\infty \]

    Given such an identification $q$, we let $\beta_q$ be the dependent equivalence defined by transport over $q$, namely
    \[ \beta_q(\gamma_e,a_1,a_2) \triangleq \mathsf{transport}_{\iS(-)}(a_1,q(\gamma_e)) \simeq_{\iS(K(\sigma_2)(\gamma_2))} a_2. \]

  \item[$\Unit$-types]
    We pose $(\Unit \approx_p \Unit) \triangleq \{\star\}$.
    Given $q : \Unit \approx_p \Unit$, we let $\beta_{q}$ be the identity equivalence, that is
    \[ \beta_q(\gamma_e,\_,\_) \triangleq \Unit. \]

  \item[$\Sigma$-types]
    We pose $(\Sigma(A_1,B_1) \approx_p \Sigma(A_2,B_2)) \triangleq (q_A : A_1 \approx_p A_2) \times (q_B : B_1 \approx_{p,q_A} B_2)$.
    Given $q : \Sigma(A_1,B_1) \approx_p \Sigma(A_2,B_2)$, we let $\beta_q$ be the equivalence obtained by lifting the equivalences $\beta_{q_A}$ and $\beta_{q_B}$ to $\Sigma$-types, that is
    \[ \beta_{q}(\gamma_e,(a_1,a_2),(b_1,b_2)) \triangleq (a_e : \beta_{q_A}(\gamma_e,a_1,a_2)) \times (b_e : \beta_{q_B}((\gamma_e,a_e),b_1,b_2)). \]
    
  \item[$\Pi_\rep$-types]
    We pose $(\Pi_\rep(A_1,B_1) \approx_p \Pi_\rep(A_2,B_2)) \triangleq (q_A : A_1 \approx^\rep_p A_2) \times (q_B : B_1 \approx_{p,q_A} B_2)$.
    Given $q : \Pi_\rep(A_1,B_1) \approx_p \Pi_\rep(A_2,B_2)$, we let $\beta_p(q)$ be the equivalence obtained by lifting the equivalences $\beta_{q_A}$ and $\beta_{q_B}$ to $\Pi_\rep$-types, namely
    \[ \beta_{q}(\gamma_e,f_1,f_2) \triangleq (a_1 : A_1(\gamma_1))\ ((a_2,a_e) := \overrightarrow{\beta_{q_A}}(a_1)) \to \beta_{q_B}((\gamma_e,a_e),f_1(a_1),f_2(a_2)). \]

    Here $a_2 : K(A_2)(\gamma_2)$ and $a_e : \beta_{q_A}(\gamma_e,a_1,a_2)$ are defined by transporting $a_1$ over the equivalence $\beta_{q_A}$.
    It would not be possible to quantify over $(a_2,a_e)$ instead, because $a_e : \beta_{q_A}(a_1,a_2)$, but $\beta_{q_A}(a_1,a_2)$ is assumed to be representable.
    
  \item[Representable types]
    The relation ${\approx}^\rep$ and the function $\beta^\rep$ are defined analogously to ${\approx}$ and $\beta$.
    Given two representable types $A$ and $B$, we pose $A \approx_p B \triangleq A \approx^\rep_p B$.
    Given $q : A \approx_p B$, we pose $\beta_q \triangleq \beta_q^\rep$.
    
  \item[Remaining cases]
    In any other case, \ie when the two contexts or types have different shapes, they are not related by ${\approx}$, \ie we pose
    \[ (A \approx B) \triangleq \emptyset.
      \tag*{\defiEnd{}}
    \]
  \end{description}
\end{con}

We will use induction on the structure of context and types of $\Th_{\Ehat}$ in multiple proofs.
In all of those, we view basic types (of the form $\iS(\sigma)$) as the base cases of the induction, and all other cases as recursive cases of the induction.

We now prove that the relations $(\approx)$ are reflexive, transitive and symmetric up to homotopy.

\begin{lem}\label{lem:approx_refl}
  For every $\Gamma \in \Th_{\Ehat}$, there is an element $r_\Gamma : \Gamma \approx \Gamma$ and a homotopy between the equivalence $\alpha_{r_\Gamma} : K(\Gamma) \simeq K(\Gamma)$ and the identity equivalence.

  For every $(\Gamma \vdash A\ \type) \in \Th_{\Ehat}$, there is an element $r_A : A \approx_{r_\Gamma} A$ and a homotopy between the dependent equivalence $\beta_{r_A} : K(A) \simeq_{\alpha_{r_\Gamma}} K(A)$ and the identity equivalence.

  The analogous statement also holds for representable types.
\end{lem}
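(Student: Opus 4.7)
The plan is to define $r_\Gamma$, $r_A$ and $r_A^{\rep}$ simultaneously by induction on the structure of contexts and types of $\Th_{\Ehat}$, following exactly the case split used in the preceding construction of $({\approx})$. The inductive hypothesis has to be slightly strengthened: in addition to producing a witness $r$, we must produce a specific homotopy $H_\Gamma : \alpha_{r_\Gamma} \simeq \id_{K(\Gamma)}$ and, for each type $A$, a dependent homotopy $H_A : \beta_{r_A} \simeq \id_{K(A)}$ lying over $H_\Gamma$, where the identity equivalence is given by $\id_{K(\Gamma)}(\gamma_1,\gamma_2) \triangleq (\gamma_1 \simeq_{K(\Gamma)} \gamma_2)$ and the identity dependent equivalence by $\id_{K(A)}(p,a_1,a_2) \triangleq (\mathsf{transport}_{K(A)}(a_1,p) \simeq a_2)$. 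Carrying these coherences in the inductive hypothesis is what makes the basic-type case go through.

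For the empty context, $r_\diamond \triangleq \star$ and $\alpha_{r_\diamond}$ is already the identity equivalence on the terminal $K(\diamond)$, so $H_\diamond$ is trivial. For a context extension, $r_{\Gamma.A} \triangleq (r_\Gamma,r_A)$; since the defining formula for $\alpha_{(r_\Gamma,r_A)}$ lifts the components pointwise, the inductive homotopies $H_\Gamma$ and $H_A$ combine to give $H_{\Gamma.A}$. The cases for $\Unit$-, $\Sigma$- and $\Pi_{\rep}$-types are handled in the same way, using that each type former preserves identity (dependent) equivalences in a way that commutes with the defining pattern of $\beta$, so that the inductive homotopies on the sub-parts lift to a homotopy on the whole. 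The case of arbitrary representable types reduces immediately to the analogous induction for ${\approx}^{\rep}$, which is handled by the same argument.

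The interesting case is the basic type $\iS(\sigma)$. Given the homotopy $H_\Gamma$ from the inductive hypothesis, I set
\[ r_{\iS(\sigma)}(\gamma_1,\gamma_2,\gamma_e) \triangleq \mathsf{ap}_{K(\sigma)}(H_\Gamma(\gamma_1,\gamma_2,\gamma_e)) : K(\sigma)(\gamma_1) \simeq_{\partial\iS} K(\sigma)(\gamma_2). \]
Unfolding, $\beta_{r_{\iS(\sigma)}}(\gamma_e,a_1,a_2)$ is the type $\mathsf{transport}_{\iS(-)}(a_1, \mathsf{ap}_{K(\sigma)}(H_\Gamma(\gamma_e))) \simeq a_2$, which by the transport functoriality law $\mathsf{transport}_P(a,\mathsf{ap}_f(p)) \simeq \mathsf{transport}_{P \circ f}(a,p)$, together with the fact that $K$ is a $(\reppre)$-CwF morphism (so that $K(\iS(\sigma))$ factors as $\iS \circ K(\sigma)$), is canonically equivalent to $\mathsf{transport}_{K(\iS(\sigma))}(a_1,H_\Gamma(\gamma_e)) \simeq a_2$. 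This is precisely the identity dependent equivalence evaluated at $H_\Gamma(\gamma_e)$, which supplies the required $H_{\iS(\sigma)}$.

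The hardest part is the coherence book-keeping: one must formulate the strengthened statement so that the homotopies $H_\Gamma$ and $H_A$ delivered by the type-former cases have precisely the shape consumed by the basic-type case, and vice-versa. Once the statement is set up correctly, each individual case reduces to routine transport and $\mathsf{ap}$ identities in the $(\repinfty)$-CwF $\Th_{\Ehat}^\infty$. Note that the relative $0$-truncatedness hypothesis on $\Th_{\Ehat}^\infty$ is not used here---we are only constructing reflexivity witnesses---but it will be essential later, where one needs different choices of such witnesses to coincide.
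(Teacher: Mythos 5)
Your proposal is correct and follows essentially the same route as the paper: a mutual induction on the structure of contexts and types, using at each step that the corresponding construction of $\alpha$ or $\beta$ preserves identity equivalences (the paper's proof is exactly this one-sentence induction, and the homotopies you carry in your ``strengthened'' hypothesis are already part of the lemma's statement, so they come along for free). Your explicit treatment of the basic-type case via $\mathsf{ap}_{K(\sigma)}$ and transport functoriality is a faithful elaboration of what the paper leaves implicit.
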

\begin{proof}
  By mutual induction on the structures of $\Gamma$ and $A$. In every case we have to use the fact that the corresponding construction of $\alpha$ or $\beta$ preserves identity equivalences.
\end{proof}

We can prove both symmetry and transitivity in a single step.
\begin{lem}\label{lem:approx_trans}
  For every $\Gamma_1,\Gamma_2,\Gamma_3 : \Th_{\Ehat}$ and elements $p_1 : \Gamma_1 \approx \Gamma_2$ and $p_2 : \Gamma_3 \approx \Gamma_2$, there is an element $t(p_1,p_2) : \Gamma_1 \approx \Gamma_3$ and a homotopy between $\alpha(t(p_1,p_2)) : K(\Gamma_1) \simeq K(\Gamma_3)$ and the composed equivalence $\alpha(p_1) \cdot {\alpha(p_2)}^{-1}$.

  Given types $(\Gamma_1 \vdash A_1\ \type) \in \Th$, $(\Gamma_2 \vdash A_2\ \type) \in \Th$ and $(\Gamma_3 \vdash A_3\ \type) \in \Th$, and elements $q_1 : A_1 \approx_{p_1} A_2$ and $q_2 : A_3 \approx_{p_2} A_2$, there is an element $t(q_1,q_2) : A_1 \approx_{t(p_1,p_2)} A_3$ and a homotopy between $\beta(t(q_1,q_2))$ and the composed equivalence $\beta(q_1) \cdot {\beta(q_2)}^{-1}$.

  The analogous statement also holds for representable types.
\end{lem}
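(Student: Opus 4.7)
The plan is to proceed by mutual induction on the structure of the contexts and types, following the clauses in the definition of ${\approx}$. The key point is that each clause defines $\alpha$ (or $\beta$) by a type-theoretic construction that is functorial with respect to equivalences, and so in particular respects composition of equivalences with inverses up to homotopy. Thus each inductive step reduces to combining the inductive hypotheses with a coherence fact about the corresponding construction.

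In detail, for the empty context we take $t(\star,\star) \triangleq \star$; the identity equivalence composed with the inverse of the identity is homotopic to the identity. For context extensions, given $(p_1,q_1) : \Gamma_1.A_1 \approx \Gamma_2.A_2$ and $(p_2,q_2) : \Gamma_3.A_3 \approx \Gamma_2.A_2$, we apply the context-level induction hypothesis to $(p_1,p_2)$ to get $t(p_1,p_2)$ with a homotopy $\alpha(t(p_1,p_2)) \simeq \alpha(p_1) \cdot \alpha(p_2)^{-1}$, then the type-level induction hypothesis to $(q_1,q_2)$ to get $t(q_1,q_2)$ (the statement is set up so that the latter type lives over $t(p_1,p_2)$). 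We set $t((p_1,q_1),(p_2,q_2)) \triangleq (t(p_1,p_2), t(q_1,q_2))$, and the homotopy follows from the fact that the $\Sigma$-style extension of equivalences preserves composition up to homotopy. The cases of $\Unit$-, $\Sigma$- and $\Pi_\rep$-types are entirely analogous, reducing to their respective inductive hypotheses plus the compatibility of the corresponding type-former with composition of equivalences.

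The base case, basic types, is where the assumption on $\Th_\Ehat^\infty$ enters. Here $q_1$ and $q_2$ are identifications $K(\sigma_1) \simeq K(\sigma_2)$ and $K(\sigma_3) \simeq K(\sigma_2)$ over $\alpha(p_1)$ and $\alpha(p_2)$ respectively, living in $\Th_\Ehat^\infty$. We form $q_1 \cdot q_2^{-1}$, which naturally lives over $\alpha(p_1) \cdot \alpha(p_2)^{-1}$, and then transport it along the homotopy $\alpha(t(p_1,p_2)) \simeq \alpha(p_1) \cdot \alpha(p_2)^{-1}$ provided by the induction hypothesis to obtain $t(q_1,q_2) : \iS(\sigma_1) \approx_{t(p_1,p_2)} \iS(\sigma_3)$. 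The required homotopy on $\beta$ follows from the fact that transport along a composition is homotopic to the composition of transports. Note that these constructions involve choices (of composition, inverses, and coherences in $\Th_\Ehat^\infty$); the choices happen inside the image of $K$, so the assumption that $\Th_\Ehat^\infty$ is merely $0$-truncated relative to $K$ collapses the higher coherences that would otherwise need to be verified.

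The main obstacle is this basic types case: identifications in $\Th_\Ehat^\infty$ live over equivalences of contexts, and transporting a composed identification along a homotopy between such equivalences requires care. The $0$-truncatedness relative to $K$ is exactly what lets us treat these identifications as elements of h-sets and so discharge coherence obligations without needing explicit higher-dimensional data; without it, the statement would need to be promoted to an $\infty$-categorical composition law, and the whole inductive bookkeeping would collapse. Apart from this, the other cases are routine unwinding of the definition of $\alpha$ and $\beta$ together with generic facts about composition of relational equivalences, dependent equivalences, and transport in weak identity types as developed in~\cref{sec:weak_id_types}.
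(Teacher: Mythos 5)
Your proof follows the same route as the paper's: a mutual induction on the structure of contexts and types, where each recursive clause reduces to the fact that the corresponding construction of $\alpha$ or $\beta$ (for context extension, $\Unit$, $\Sigma$, $\Pi_\rep$) preserves composition of equivalences and inverses up to homotopy, and the basic-type case is handled by composing the given identifications and transporting along the homotopy produced at the context level. The inductive skeleton and the base-case construction are exactly what the paper intends.

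The one point I would push back on is your claim that the relative $0$-truncatedness of $\Th^\infty_{\Ehat}$ is what makes the basic-type case go through. It is not used in this lemma, and the paper does not invoke it here: the statement only asks you to \emph{construct} one element $t(q_1,q_2)$ and one homotopy between $\beta(t(q_1,q_2))$ and $\beta(q_1)\cdot\beta(q_2)^{-1}$; there is no uniqueness or higher-coherence obligation to discharge, and the facts you need (composition and inversion of identifications, and that transport along a composite is homotopic to the composite of transports) are all derivable from the weak identity type structure of $\Th^\infty_{\Ehat}$ alone. The place where relative $0$-truncatedness genuinely enters is the \emph{next} lemma (\cref{lem:approx_uip}), which asserts that any two parallel elements of $\Gamma_1 \approx \Gamma_2$ induce homotopic equivalences. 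Moreover, if your construction really did depend on mere $0$-truncatedness, it would only yield the \emph{mere} existence of $t(p_1,p_2)$ and $t(q_1,q_2)$, whereas the lemma (and its later uses) requires actual constructed data. So you should drop that dependency; the rest of your argument stands as is.
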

\begin{proof}
  By mutual induction on the structures of the contexts and types.
  In every case we have to use the fact that the corresponding construction of $\alpha$ or $\beta$ preserves inverse equivalences and compositions of equivalences.
\end{proof}
Symmetry for $p : \Gamma \approx \Delta$ can then be recovered as $t(r_\Gamma,p)$, and transitivity for $p : \Gamma \approx \Delta$ and $q : \Delta \approx \Theta$ can be reconstructed as $t(p,t(r_\Delta,q))$.

\begin{lem}\label{lem:approx_uip}
  Given two elements $p_1,p_2 : \Gamma_1 \approx \Gamma_2$, there merely exists a homotopy between the equivalences $\alpha_{p_1}$ and $\alpha_{p_2}$.

  Given two elements $q_1 : A_1 \approx_{p_1} A_2$ and $q_2 : A_1 \approx_{p_2} A_2$, there merely exists a homotopy $h$ between the equivalences $\alpha_{p_1}$ and $\alpha_{p_2}$ and a dependent homotopy between $\beta_{q_1}$ and $\beta_{q_2}$, over the homotopy $h$.
\end{lem}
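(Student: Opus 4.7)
I would prove both clauses simultaneously by mutual induction on the shapes of the contexts $\Gamma_1,\Gamma_2$ and types $A_1,A_2$, following the same case split used in the definition of $({\approx})$, $({\approx})^\rep$, $\alpha$, $\beta$, and $\beta^\rep$. The empty-context, $\Unit$-type, and shape-mismatch cases are immediate.

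For the recursive cases (context extension, $\Sigma$-type, $\Pi_\rep$-type, representable types), the argument is mechanical: each clause of the definition assembles $\alpha$ or $\beta$ via $(\repinfty)$-CwF operations that preserve (dependent) homotopies of (dependent) equivalences, so the inductive hypotheses applied to the constituents recombine into the required homotopy. The $\Pi_\rep$-type case additionally invokes function extensionality in $\Th^\infty_\Ehat$.

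The substantive case is the base one with $A_1 = \iS(\sigma_1)$ and $A_2 = \iS(\sigma_2)$ for a generating sort $\iS \in \GenTy_\Th$. Here each $q_i$ is an identification
\[
\gamma_1:K(\Gamma_1),\, \gamma_2:K(\Gamma_2),\, \gamma_e:\alpha_{p_i}(\gamma_1,\gamma_2) \vdash q_i(\gamma_e) : K(\sigma_1)(\gamma_1) \simeq K(\sigma_2)(\gamma_2)
\]
in $\Th^\infty_\Ehat$. Applying the first clause of the inductive hypothesis to $p_1,p_2$ first merely produces a homotopy $\alpha_{p_1} \simeq \alpha_{p_2}$ under which the two extended domains become aligned. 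Since $\beta_{q_i}$ is defined by transport along $q_i$, it then suffices to show that the composed self-identification $q_1(\gamma_e)^{-1} \cdot q_2(\gamma_e)$ of $K(\sigma_1)(\gamma_1)$ is merely equal to $\refl$.

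The key obstacle, and the sole use of the hypothesis that $\Th^\infty_\Ehat$ is $0$-truncated relative to $K$, is that the ambient context carries the non-$K$-image component $\gamma_e : \alpha_p$. To work around this, I would strengthen the induction with a side statement that $\alpha_p$ and $\beta_q$ are always representable types of $\Th^\infty_\Ehat$; this holds because they are assembled from $\Unit$-types, $\Sigma$-types, and identity types at representable sorts, each of which preserves representability. Representability of $\alpha_p$ then lets me $\Pi_\rep$-abstract over $\gamma_e$ and reinterpret $\lambda\gamma_e.\,q_1(\gamma_e)^{-1} \cdot q_2(\gamma_e)$, via function extensionality, as a self-identification of $\lambda\gamma_e.\,K(\sigma_1)(\gamma_1)$ inside the $\Pi_\rep$-type, now living over $K(\Gamma_1 \times \Gamma_2)$, which is in the image of $K$ since $\Th$ has $\Sigma$-types. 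Applying $0$-truncatedness relative to $K$ merely yields the required $\mathsf{uip}$-witness, and unfolding through funext recovers the pointwise identification $q_1(\gamma_e) \simeq q_2(\gamma_e)$; functoriality of transport then upgrades this to the required dependent homotopy $\beta_{q_1} \simeq \beta_{q_2}$.
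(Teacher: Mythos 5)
Your overall architecture---mutual induction following the case split in the definition of $({\approx})$, with the recursive cases handled by the fact that the $\Unit$-, $\Sigma$- and $\Pi_\rep$-formers act on homotopies between equivalences, and the base case handled by relative $0$-truncatedness---is exactly the paper's (three-line) proof. You also correctly isolate the one genuine subtlety the paper glosses over: the self-identification $q_1(\gamma_e)^{-1}\cdot q_2(\gamma_e)$ lives over the context $(\gamma_1 : K(\Gamma_1),\, \gamma_2 : K(\Gamma_2),\, \gamma_e : \alpha_{p_1}(\gamma_1,\gamma_2))$, whose last component is not in the image of $K$, so the relative $0$-truncatedness hypothesis does not apply verbatim.

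Your proposed repair, however, does not work. The side claim that $\alpha_p$ and $\beta_q$ are always representable types of $\Th^\infty_\Ehat$ is false. In the base case $\beta_q(\gamma_e,a_1,a_2)$ is by definition an identity type $\mathsf{transport}(a_1,q(\gamma_e)) \simeq_{\iS(\cdots)} a_2$ of the $(\repinfty)$-CwF $\Th^\infty_\Ehat$, and nothing in the definition of a $(\repinfty)$-CwF places identity types in the representable family: the introduction structure has $\Id : (A:\MC)(x,y:A)\to\MC$, not $\MC_\rep$. Worse, generating sorts need not be representable (e.g.\ $\iTy$ in $\Th_\Id$), so for a context $\Gamma$ containing a variable of such a sort, $\alpha_p$ has a $\Sigma$-component consisting of identifications between elements of a non-representable sort and cannot be representable. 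Hence the $\Pi_\rep$-abstraction over $\gamma_e$ is unavailable and the funext step collapses. The workable route to a context in the image of $K$ is instead to use that $\alpha_{p_1}$ is an \emph{equivalence}: for each $\gamma_1$ the telescope $(\gamma_2 : K(\Gamma_2)) \times \alpha_{p_1}(\gamma_1,\gamma_2)$ is contractible, so you can restrict the self-identification along the centre of contraction to obtain one over the bare context $K(\Gamma_1)$, apply relative $0$-truncatedness there, and transport the resulting $\mathsf{uip}$-witness back along the contraction to an arbitrary $(\gamma_2,\gamma_e)$. The remainder of your base case (aligning $\alpha_{p_1}$ with $\alpha_{p_2}$ via $h$ and reducing the dependent homotopy $\beta_{q_1}\simeq\beta_{q_2}$ to an identification $q_1\simeq q_2$, i.e.\ to killing the composite self-loop) is fine.
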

\begin{proof}
  By induction on the structure of the contexts and types.
  In the base case ($\iS(\sigma_1) \approx \iS(\sigma_2)$), we use the fact that $\CI_E^\infty$ is merely $0$-truncated relatively to $K$.
  In other cases we use the fact that the $\Unit$-, $\Sigma$- and $\Pi_\rep$- type formers have actions on homotopies between equivalences.
\end{proof}

We can then consider the equivalence relations $(\sim)$ obtained by taking the propositional truncation of $(\approx)$.
The above lemmas approximately say that $\alpha$ is the functorial action of a $1$-groupoid morphism from the setoid $(\Ob_{\Th_{\Ehat}},\sim)$ to the $1$-groupoid of objects of $\Th_{\Ehat}^\infty$ and homotopy classes of equivalences, and analogous statements for $\beta$ and $\beta^\rep$.

We will also need the following additional lemmas.
\begin{lem}\label{lem:ty_uniq}
  Given elements $p_1,p_2 : \Gamma_1 \approx \Gamma_2$ and $q : A_1 \approx_{p_1} A_2$ and a homotopy $h$ between $\alpha_{p_1}$ and $\alpha_{p_2}$, we can construct an element of $A_1 \approx_{p_2} A_2$.
\end{lem}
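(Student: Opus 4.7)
The plan is to proceed by structural induction on $A_1$ (and simultaneously $A_2$; the cases where they have incompatible shapes are vacuous since then $(A_1 \approx_{p_2} A_2)$ is empty). To handle the dependencies arising in the $\Sigma$- and $\Pi_\rep$-cases, I would strengthen the statement to produce, along with the witness $q' : A_1 \approx_{p_2} A_2$, a dependent homotopy between $\beta_q$ and $\beta_{q'}$ over the homotopy $h$ between $\alpha_{p_1}$ and $\alpha_{p_2}$. The analogous strengthened statement for representable types would be proved in parallel.

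The base case concerns basic types $\iS(\sigma_1)$ and $\iS(\sigma_2)$. An element of $\iS(\sigma_1) \approx_{p_1} \iS(\sigma_2)$ is by definition a term depending on $(\gamma_1, \gamma_2, \gamma_e : \alpha_{p_1}(\gamma_1, \gamma_2))$ and valued in $K(\sigma_1)(\gamma_1) \simeq_{\partial\iS} K(\sigma_2)(\gamma_2)$. Since $h$ supplies an equivalence $\alpha_{p_1}(\gamma_1,\gamma_2) \simeq \alpha_{p_2}(\gamma_1,\gamma_2)$ for each $\gamma_1, \gamma_2$, I would reindex $q$ along the inverse of this equivalence to obtain the required element $q' \in (\iS(\sigma_1) \approx_{p_2} \iS(\sigma_2))$. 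By construction $\beta_{q'}$ is then identified with $\beta_q$ via the transport supplied by $h$, providing the accompanying dependent homotopy required by the strengthened induction.

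The inductive cases then follow by unfolding definitions. The $\Unit$-case is immediate. For the $\Sigma$-case, decompose $q = (q_A, q_B)$; the induction applied to $A_i$ yields $q'_A : A_1 \approx_{p_2} A_2$ together with a dependent homotopy $h_A$ between $\beta_{q_A}$ and $\beta_{q'_A}$ over $h$. Pairing $h$ with $h_A$ produces a homotopy between the context-extension equivalences $\alpha_{(p_1, q_A)}$ and $\alpha_{(p_2, q'_A)}$, which is exactly what allows applying the induction hypothesis to $q_B$ in the extended context to obtain $q'_B : B_1 \approx_{(p_2, q'_A)} B_2$, and hence $q' = (q'_A, q'_B)$. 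The $\Pi_\rep$-case is handled the same way, invoking the parallel induction for the representable domain.

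The main obstacle will be the bookkeeping of the strengthened induction hypothesis: at each inductive step one must carry both the new relating element and the comparison witness for the induced equivalences, and the $\Pi_\rep$-case must be threaded with care because the dependent component quantifies over transported domain witnesses via the $\overrightarrow{\beta_{q_A}}$ convention used in the construction. Once this data is carried along consistently, the proof reduces to a routine verification that the inductive clauses defining $\alpha$ and $\beta$ are functorial in homotopies, in the same spirit as the inductive arguments used for \cref{lem:approx_refl} and \cref{lem:approx_trans}.
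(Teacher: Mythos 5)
Your proposal is correct and takes essentially the same route as the paper: induction on the structure of the two types, transporting over $h$ in the base case of basic types $\iS(\sigma_i)$, and handling the recursive cases via the actions of the $\Unit$-, $\Sigma$- and $\Pi_\rep$-type formers on homotopies between equivalences. The only difference is presentational: you make explicit the strengthened induction hypothesis (carrying a dependent homotopy between $\beta_q$ and the new $\beta_{q'}$ over $h$), which is indeed what is needed to re-type $q_B$ in the $\Sigma$- and $\Pi_\rep$-cases and which the paper leaves implicit in its appeal to the type formers' functoriality in homotopies.
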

\begin{proof}
  By induction over the structure of $A_1$ and $A_2$.

  In the base case, $q : \iS(\sigma_1) \approx_{p_1} \iS(\sigma_2)$, \ie 
  \[ (\gamma_1:K(\Gamma_1),\gamma_2:K(\Gamma_2),\gamma_e:\alpha_{p_1}(\gamma_1,\gamma_2) \vdash q(\gamma_e) : K(\sigma_1)(\gamma_1) \simeq_{\partial \iS} K(\sigma_2)(\gamma_2)) \in \Th_{\Ehat}^\infty. \]
  By transporting over the homotopy between $\alpha_{p_1}$ and $\alpha_{p_2}$, we can construct 
  \[ (\gamma_1:K(\Gamma_1),\gamma_2:K(\Gamma_2),\gamma_e:\alpha_{p_2}(\gamma_1,\gamma_2) \vdash q'(\gamma_e) : K(\sigma_1)(\gamma_1) \simeq_{\partial \iS} K(\sigma_2)(\gamma_2)) \in \Th_{\Ehat}^\infty, \]
  which is an element of $\iS(\sigma_1) \approx_{p_2} \iS(\sigma_2)$.
 
  In other cases we use the fact that the $\Unit$-, $\Sigma$- and $\Pi_\rep$- type formers have actions on homotopies between equivalences.
\end{proof}

\begin{lem}\label{lem:ty_fibrancy}
  Given $p : \Gamma_1 \approx \Gamma_2$ and $(\Gamma_1 \vdash A_1\ \type) \in \Th_{\Ehat}$, we construct a type $A_2$ and an element $q : A_1 \approx_p A_2$.
\end{lem}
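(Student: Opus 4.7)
The proof is by mutual induction on the structure of $A_1$, jointly with the analogous statement for representable types.

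In the recursive cases, the definition of $\approx_p$ on each type-former dictates the construction.
If $A_1 = \Unit$, take $A_2 := \Unit$ and $q := \star$.
If $A_1 = \Sigma(B_1,C_1)$, apply the inductive hypothesis to $B_1$ to obtain $B_2$ and $q_B : B_1 \approx_p B_2$; this produces the extended context equivalence $(p,q_B) : \Gamma_1.B_1 \approx \Gamma_2.B_2$, and a second appeal to the inductive hypothesis (in the extended contexts) to $C_1$ yields $C_2$ and $q_C$.
Setting $A_2 := \Sigma(B_2,C_2)$ and $q := (q_B,q_C)$ finishes this case.
The case $A_1 = \Pi_\rep(B_1,C_1)$ is entirely analogous, except that we invoke the mutually-inductive representable version of the statement to handle $B_1$.

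The difficulty is concentrated in the base case $A_1 = \iS(\sigma_1)$ for a generating sort $\iS \in \GenTy_\Th$. The structure of a basic type forces $A_2 = \iS(\sigma_2)$ for some $\sigma_2 : \Gamma_2 \to \partial \iS$ in $\Th_{\Ehat}$, together with an identification $q$ over $\alpha_p$ in $\Th_{\Ehat}^\infty$. I would proceed as follows. Working in $\Th_{\Ehat}^\infty$, which has full weak identity types, use the equivalence $\alpha_p : K(\Gamma_1) \simeq K(\Gamma_2)$ to transport $K(\sigma_1)$ to a substitution $\widetilde{\sigma}_2 : K(\Gamma_2) \to \partial \iS$ in $\Th_{\Ehat}^\infty$, coming with a canonical identification $q$ over $\alpha_p$ between $K(\sigma_1)$ and $\widetilde{\sigma}_2$. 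This is a direct use of the transport operation of weak identity types applied to the dependent type $(\gamma : K(\Gamma_1)) \vdash K(\sigma_1)(\gamma)$ along $\alpha_p$.

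It then remains to pull $\widetilde{\sigma}_2$ back to a substitution $\sigma_2 : \Gamma_2 \to \partial \iS$ living in $\Th_{\Ehat}$ and satisfying $K(\sigma_2) = \widetilde{\sigma}_2$. Because $\partial \iS$ is a context of $\Th$, it is a telescope of basic types from $\Th$, and we can argue by a sub-induction on its length: at each step, a component of the substitution is a term of a basic type whose target is already in $\Th_{\Ehat}$, and such a term can be pulled back uniquely because $\Th_{\Ehat} \to \Th_{\Ehat}^\infty$ is bijective on terms — which is exactly how $\Th_{\Ehat}$ was defined, as the $\{I^\tm,E^\tm\}$-cofibration factor of the factorization of $\Th \to \Th_{\Ehat}^\infty$. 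Reassembling the pulled-back components yields $\sigma_2$ with $K(\sigma_2) = \widetilde{\sigma}_2$, so the same $q$ witnesses $\iS(\sigma_1) \approx_p \iS(\sigma_2)$.

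The main obstacle is precisely this final pullback step: transport along $\alpha_p$ is performed in $\Th_{\Ehat}^\infty$ and a priori produces data only there, and we must know that the result descends to $\Th_{\Ehat}$. The bijectivity-on-terms property of $\Th_{\Ehat} \to \Th_{\Ehat}^\infty$ combined with the decomposition of $\partial \iS$ into basic types from $\Th$ is exactly what makes this descent possible; all the other cases of the induction are essentially bookkeeping around this base case.
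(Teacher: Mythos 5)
Your proof is correct and follows essentially the same route as the paper: induction on the structure of $A_1$, with the base case handled by transporting $K(\sigma_1)$ along $\alpha_p$ in $\Th_{\Ehat}^\infty$ and then descending the result to $\Th_{\Ehat}$ via the bijectivity on terms of $K$ (which the paper compresses into the notation $K^{-1}(\sigma_2)$). Your version merely spells out the recursive cases and the componentwise descent along the telescope $\partial\iS$ in more detail than the paper, which dismisses the recursive cases as straightforward.
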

\begin{proof}
  By induction over the structure of $A_1$.

  In the base case, $A_1(\gamma_2) = \iS(\sigma_1(\gamma_1))$ where $(\Gamma_1 \vdash \sigma_1(\gamma_1) : \partial \iS)$.
  By transporting $K(\sigma_1)$ over $\alpha_p$, we can find $(\gamma_2 : K(\Gamma_2) \vdash \sigma_2(\gamma_2) : \partial \iS)$
  and
  \[ (\gamma_1:K(\Gamma_1),\gamma_2:K(\Gamma_2),\gamma_e:\alpha_p(\gamma_1,\gamma_2) \vdash \sigma_e(\gamma_e) : K(\sigma_1)(\gamma_1) \simeq_{\partial \iS} \sigma_2(\gamma_2). \]
  We then pose $A_2(\gamma_2) \triangleq \iS(K^{-1}(\sigma_2)(\gamma_2))$, and we have $\sigma_e : A_1 \approx_p A_2$.

  The recursive cases are straightforward.
\end{proof}

\begin{lem}\label{lem:approx_subst}
  Take two types $(\Delta_1 \vdash A_1\ \type), (\Delta_2 \vdash A_2\ \type) \in \Th_{\Ehat}$ and two morphisms $f_1 : \Gamma_1 \to \Delta_1$ and $f_2 : \Gamma_2 \to \Delta_2$.
  Given elements $p_\Gamma : \Gamma_1 \approx \Gamma_2$, $p_\Delta : \Delta_1 \approx \Delta_2$ and $q : A_1 \approx_{p_\Delta} A_2$ and an identification
  \[ (\gamma_1:K(\Gamma_1),\gamma_2:K(\Gamma_2),\gamma_e:\alpha_{p_\Gamma}(\gamma_1,\gamma_2) \vdash r(\gamma_e) : \alpha_{p_\Delta}(K(f_1)(\gamma_1),K(f_2)(\gamma_2))) \in \Th^\infty_{\Ehat}, \]
  we can construct an element $q[r] : A_1[f_1] \approx_{p_\Gamma} A_2[f_2]$ and a homotopy between $\alpha_{q[r]}$ and the equivalence $\alpha_q[r] : K(A_1[f]) \simeq K(A_2[f])$.
\end{lem}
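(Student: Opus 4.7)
The plan is to proceed by induction on the common syntactic structure of $A_1$ and $A_2$, following the inductive definition of ${\approx}$ (recall that $A_1 \approx A_2$ is empty unless $A_1$ and $A_2$ have the same head former). In each case we simultaneously construct $q[r]$ and exhibit the required homotopy between $\beta_{q[r]}$ and $\beta_q$ precomposed with the substitutions. The key idea is that every clause in the definition of ${\approx}$ and $\beta$ is built out of operations that are substitution-stable up to the transport data provided by $r$.

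For the base case $A_1 = \iS(\sigma_1)$, $A_2 = \iS(\sigma_2)$, the element $q$ is just an identification
\[ \delta_1,\delta_2,\delta_e : \alpha_{p_\Delta}(\delta_1,\delta_2) \vdash q(\delta_e) : K(\sigma_1)(\delta_1) \simeq_{\partial\iS} K(\sigma_2)(\delta_2) \]
in $\Th^\infty_\Ehat$. Using the functoriality of $K$, I set
\[ q[r](\gamma_e) \;\triangleq\; q(r(\gamma_e)) \colon K(\sigma_1[f_1])(\gamma_1) \simeq_{\partial\iS} K(\sigma_2[f_2])(\gamma_2), \]
which defines an element of $\iS(\sigma_1[f_1]) \approx_{p_\Gamma} \iS(\sigma_2[f_2])$. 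The required homotopy $\beta_{q[r]} \simeq \beta_q[r]$ is then reflexive by unfolding: both sides compute to transport along $q(r(\gamma_e))$. The $\Unit$-case is trivial.

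The inductive cases are the interesting ones. Suppose $A_1 = \Sigma(A_A, A_B)$, $A_2 = \Sigma(A_A', A_B')$, and $q = (q_A, q_B)$. By the induction hypothesis for $A_A$, I obtain $q_A[r] : A_A[f_1] \approx_{p_\Gamma} A_A'[f_2]$ together with a homotopy $h_A : \beta_{q_A[r]} \simeq \beta_{q_A}[r]$. To invoke the IH for $A_B$, I need to build an identification $r^+$ at the extended level, namely from $\alpha_{(p_\Gamma, q_A[r])}$ to $\alpha_{(p_\Delta, q_A)}$ along the extended substitutions $f_1^+, f_2^+$. On the first component I use $r$ itself; on the second component I use $h_A$ to bridge between $\beta_{q_A[r]}$ (which acts on the left context) and $\beta_{q_A}[r]$ (which is what appears after substitution into $\alpha_{(p_\Delta,q_A)}$). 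The IH then yields $q_B[r^+] : A_B[f_1^+] \approx_{(p_\Gamma, q_A[r])} A_B'[f_2^+]$, and I set $q[r] \triangleq (q_A[r], q_B[r^+])$. The $\Pi_\rep$-case is analogous, with the additional observation that the transported element $\overrightarrow{\beta_{q_A[r]}}(a_1)$ agrees, up to $h_A$, with $\overrightarrow{\beta_{q_A}}(a_1)$ in the right context.

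The main obstacle will be the $\Sigma$- and $\Pi_\rep$-cases, because constructing $r^+$ and verifying the homotopy clause simultaneously requires combining the IH for $A_A$ (which gives a homotopy of equivalences) with the data $r$ (which is a pointwise identification) in a way compatible with the context extension operation on $\alpha$. Concretely, the compatibility essentially amounts to a naturality-like square for $\beta$ against substitution, which needs to be massaged into the shape expected by the IH for $A_B$. I expect this to follow from the action of the $\Sigma$-, $\Pi_\rep$-type formers on homotopies together with the coherences already established in~\cref{lem:approx_refl,lem:approx_trans,lem:approx_uip}; Lemma~\ref{lem:ty_uniq} may also be needed to transport $q_B$ across the homotopy $h_A$ before feeding it into the inductive construction.
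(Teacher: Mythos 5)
Your proof takes essentially the same approach as the paper's: induction on the structure of the types, with the base case given by $q[r](\gamma_e) \triangleq q(r(\gamma_e))$ and the recursive cases handled via the fact that the $\Sigma$- and $\Pi_\rep$-type formers act on homotopies between equivalences. The paper's own proof is only a brief sketch of exactly this, so your more detailed account of how the extended identification $r^+$ is assembled in the $\Sigma$- and $\Pi_\rep$-cases is a faithful (and welcome) elaboration rather than a departure.
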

\begin{proof}
  By induction over the structure of the types $A_1$ and $A_2$.

  In the base case, we have 
  \[ (\delta_1:K(\Delta_1),\delta_2:K(\Delta_2),\delta_e:\alpha_p(\delta_1,\delta_2) \vdash q(\delta_e) : K(\sigma_1)(\delta_1) \simeq_{\partial \iS} K(\sigma_2)(\delta_2)) \in \Th_{\Ehat}^\infty, \]
  and we can simply pose $q[r](\gamma_e) \triangleq q(r(\gamma_e))$. It is then clear that $\alpha_{q[r]}$ and $\alpha_q[r]$ are homotopic.

  The recursive cases use the fact that the type formers $\Sigma$ and $\Pi_\rep$ preserve homotopies between equivalences.
\end{proof}

We now show that these equivalence relations assemble into a fibrant congruence over $\Th_{\Ehat}$. 
We start by defining the equivalence relations, we will prove that the $(\reppre)$-CwF operations are preserved in a second step.
\begin{con}
  We define the data $\widetilde{\Th_{\Ehat}}$ of equivalence relations and dependent equivalence relations on the objects, morphisms, types and terms of the CwF $\Th_{\Ehat}$.
  \begin{itemize}
  \item Given $\Gamma_1,\Gamma_2 \in \Th_{\Ehat}$, we have $(\Gamma_1 \sim \Gamma_2) \in \widetilde{\Th_{\Ehat}}$ if there merely exists $p : \Gamma_1 \approx \Gamma_2$.
  \item Given $(\Gamma_1 \vdash A_1\ \type), (\Gamma_2 \vdash A_2\ \type) \in \Th_{\Ehat}$, we have $(A_1 \sim A_2) \in \widetilde{\Th_{\Ehat}}$ if there merely exists $p : \Gamma_1 \approx \Gamma_2$ and $q : A_1 \approx_p A_2$.
    The relation on representable types is defined similarly.
  \item Given $(\Gamma_1 \vdash a_1 : A_1), (\Gamma_2 \vdash a_2 : A_2) \in \Th_{\Ehat}$, we have $(a_1 \sim a_2) \in \widetilde{\Th_{\Ehat}}$ if there merely exists $p : \Gamma_1 \approx \Gamma_2$, $q : A_1 \approx_p A_2$ and an identification
    \[ (\gamma_1:K(\Gamma_1),\gamma_2:K(\Gamma_2),\gamma_e:\alpha_p(\gamma_1,\gamma_2) \vdash r(\gamma_e) : \beta_q(\gamma_e,K(a_1)(\gamma_1),K(a_2)(\gamma_2))) \in \Th^\infty_{\Ehat}. \]
  \item Given $f_1 : \Gamma_1 \to \Delta_1$ and $f_2 : \Gamma_2 \to \Delta_2$, we have $(f_1 \sim f_2) \in \widetilde{\Th_{\Ehat}}$ if there merely exists $p : \Gamma_1 \approx \Gamma_2$, $q : \Delta_1 \approx \Delta_2$ and an identification
    \[ (\gamma_1:K(\Gamma_1),\gamma_2:K(\Gamma_2),\gamma_e:\alpha_p(\gamma_1,\gamma_2) \vdash r(\gamma_e) : \alpha_q(K(f_1)(\gamma_1),K(f_2)(\gamma_2))) \in \Th^\infty_{\Ehat}. \]
  \end{itemize}

  The reflexivity, symmetry and transitivity of these relations follow from~\cref{lem:approx_refl} and~\cref{lem:approx_trans}.
\end{con}

\begin{lem}
  The dependent equivalence relations of $\widetilde{\Th_{\Ehat}}$ are fibrant.
\end{lem}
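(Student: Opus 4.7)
The plan is to verify the fibrancy condition separately for each dependent sort of the GAT of $(\reppre)$-CwFs whose equivalence relation appears in $\widetilde{\Th_{\Ehat}}$: types, representable types, terms, and morphisms. In each case the conclusion is a propositional truncation, so one may freely unfold the propositional truncations hidden in the hypotheses and reason with honest witnesses. For the type case, given a witness $p : \Gamma_1 \approx \Gamma_2$ extracted from $\Gamma_1 \sim \Gamma_2$ and a type $A_1$ over $\Gamma_1$, \cref{lem:ty_fibrancy} directly produces $A_2$ over $\Gamma_2$ and a witness $q : A_1 \approx_p A_2$. The representable-type case is identical, using the obvious analogue of \cref{lem:ty_fibrancy} restricted to the generators of $\Th_{\Ehat}.\Ty_{\rep}$ (basic representable sorts, $\Unit$, $\Sigma$, $\Pi_\rep$), which is proved by the same induction.

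For the term case I would use the dependent equivalence $\beta_q$ coming from the witness of $A_1 \sim A_2$. Working inside $\Th_{\Ehat}^\infty$ and using the contractibility of $(\gamma_1 : K(\Gamma_1)) \times \alpha_p(\gamma_1, \gamma_2)$, I transport $K(a_1)$ along $\beta_q$ to obtain a candidate term $a_2^\infty : \Th_{\Ehat}^\infty.\Tm(K(\Gamma_2), K(A_2))$ together with the required identification $r : \beta_q(\gamma_e, K(a_1)(\gamma_1), a_2^\infty(\gamma_2))$. Since $K : \Th_{\Ehat} \to \Th_{\Ehat}^\infty$ is the right map of an $\{I^\tm, E^\tm\}$-factorization in $\CCwf_{\reppre}$, its action on terms is bijective, so $a_2^\infty$ lifts uniquely to a term $a_2$ of $\Th_{\Ehat}$, and $r$ then witnesses $a_1 \sim a_2$.

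The morphism case follows the same template but is the main obstacle, because the $\{I^\tm, E^\tm\}$-fibration property of $K$ only directly guarantees bijectivity on terms, not on morphisms of the underlying categories. The plan is to reduce it to the already established term case using contextuality. Since $\Th$ is contextual and the map $\Th \to \Th_{\Ehat}$ is a $\{I^\tm, E^\tm\}$-cofibration that only adds terms and equations between terms, $\Th_{\Ehat}$ is itself contextual; using $\Unit$ and $\Sigma$ on representable types, each codomain context may be replaced by a single representable ``total type'', so that a morphism $\Gamma \to \Delta$ is encoded as a term of this total type in context $\Gamma$. Transferring the witness $p_\Delta : \Delta_1 \approx \Delta_2$ along this encoding to a witness relating the total types and then invoking the already established term-fibrancy yields the required morphism $f_2 : \Gamma_2 \to \Delta_2$ together with the witness of $f_1 \sim f_2$.
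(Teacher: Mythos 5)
Your proof takes essentially the same route as the paper's (which compresses it into two sentences): the type case is exactly an appeal to~\cref{lem:ty_fibrancy}, and the term and morphism cases are handled by transporting along the equivalences $\alpha_p$, $\beta_q$, $\alpha_q$ inside $\Th_{\Ehat}^\infty$ and lifting the result back along $K$, whose action on terms is bijective. The only slip is in your morphism case: contexts of $\Th_{\Ehat}$ are telescopes of arbitrary (not necessarily representable) types, so the ``total type'' of the codomain must be formed with the $\Sigma$-types of the full family rather than the representable one --- or one simply treats the telescope componentwise, each component being an instance of the term case --- but this does not affect the argument.
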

\begin{proof}
  The fibrancy of the relation on types follows from~\cref{lem:ty_fibrancy}.
  Fibrancy of the relations on terms and morphisms is proven by transport over the equivalences $\alpha_p$, $\beta_q$ and $\alpha_q$, along with the fact that the actions of $K$ on terms are bijective.
\end{proof}

\begin{lem}\label{lem:sim_uip_simpl_ty}
  Take $p : \Gamma_1 \approx \Gamma_2$ and types $(\Gamma_1 \vdash A_1\ \type), (\Gamma_2 \vdash A_2\ \type) \in \Th_{\Ehat}$.
  Then $(A_1 \sim A_2) \in \widetilde{\Th_{\Ehat}}$ if and only if there merely exists an element $q : A_1 \approx_p A_2$.
\end{lem}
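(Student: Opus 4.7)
The statement has two directions, and both should follow from the lemmas already established, together with the fact that $\sim$ is a propositional truncation of $\approx$.

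The ``if'' direction is essentially definitional. If there merely exists $q : A_1 \approx_p A_2$ for our given $p : \Gamma_1 \approx \Gamma_2$, then by construction of the relation $\sim$ on types (which asks for the mere existence of \emph{some} $p' : \Gamma_1 \approx \Gamma_2$ together with $q' : A_1 \approx_{p'} A_2$), we immediately obtain $(A_1 \sim A_2) \in \widetilde{\Th_{\Ehat}}$ by taking the pair $(p, q)$ and propositionally truncating. No nontrivial content is required here.

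The ``only if'' direction is the real work, and amounts to a change-of-witness argument. Assume $(A_1 \sim A_2) \in \widetilde{\Th_{\Ehat}}$. Unfolding the definition, we obtain merely some $p' : \Gamma_1 \approx \Gamma_2$ and $q' : A_1 \approx_{p'} A_2$. Since we are proving a mere proposition (the mere existence of $q : A_1 \approx_p A_2$), we may assume we have $p'$ and $q'$ on the nose. The task is then to convert the witness $q'$ along $p'$ into a witness along the prescribed $p$. The plan is to apply~\cref{lem:approx_uip} to $p$ and $p'$ to obtain (merely) a homotopy $h$ between the equivalences $\alpha_{p'}$ and $\alpha_p$, and then feed $h$ and $q'$ into~\cref{lem:ty_uniq}, whose conclusion is precisely the construction of an element of $A_1 \approx_p A_2$ from a witness along $p'$ together with such a homotopy.

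The one point requiring care is the handling of the propositional truncations: both $\approx_{p'}$ witnesses and homotopies between $\alpha_{p'}$ and $\alpha_p$ are only \emph{merely} available, and our goal is also a mere proposition, so the argument goes through by eliminating the truncations in sequence. I do not expect any genuine obstacle, since~\cref{lem:ty_uniq} already encapsulates the structural induction over the shape of $A_1$ and $A_2$ (handling basic types via transport in $\Th_{\Ehat}^\infty$ and the recursive cases via the preservation of homotopies by $\Unit$, $\Sigma$ and $\Pi_\rep$), and~\cref{lem:approx_uip} supplies the needed homotopy; what remains is essentially a bookkeeping combination of these two lemmas.
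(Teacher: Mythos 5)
Your proposal is correct and matches the paper's proof: the reverse direction is immediate from the definition of $\sim$ on types, and the forward direction is exactly the combination of~\cref{lem:approx_uip} (to obtain the mere homotopy between $\alpha_{p'}$ and $\alpha_p$) with~\cref{lem:ty_uniq} (to transport the witness), with the truncations eliminated as you describe since the goal is a mere proposition. The paper's proof is just a terser statement of the same argument.
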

\begin{proof}
  The reverse implication is evident. The forward implication follows from~\cref{lem:ty_uniq}.
\end{proof}

\begin{lem}\label{lem:sim_uip_simpl_tm}
  Take elements $p : \Gamma_1 \approx \Gamma_2$ and $q : A_1 \approx_p A_2$ and terms $(\Gamma_1 \vdash a_1 : A_1), (\Gamma_2 \vdash a_2 : A_2) \in \Th_{\Ehat}$.
  Then $(a_1 \sim a_2) \in \widetilde{\Th_{\Ehat}}$ if and only if there merely exists an identification
  \[ (\gamma_1:K(\Gamma_1),\gamma_2:K(\Gamma_2),\gamma_e:\alpha_p(\gamma_1,\gamma_2) \vdash r(\gamma_e) : \beta_q(K(a_1)(\gamma_1),K(a_2)(\gamma_2))) \in \Th^\infty_{\Ehat}. \]
\end{lem}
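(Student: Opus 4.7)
The plan is to mirror the proof of \cref{lem:sim_uip_simpl_ty} one dimension up, with \cref{lem:approx_uip} playing the role that \cref{lem:ty_uniq} played there. The reverse direction is immediate: the given triple $(p, q, r)$ is itself a witness to the mere existential in the definition of $(a_1 \sim a_2)$, so there is nothing to show.

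For the forward direction, since the goal is a mere existential, hence a proposition, I may freely eliminate the propositional truncations coming from the hypothesis. Starting from $(a_1 \sim a_2) \in \widetilde{\Th_{\Ehat}}$, I pick representatives $p' : \Gamma_1 \approx \Gamma_2$, $q' : A_1 \approx_{p'} A_2$, together with an identification
\[ (\gamma_1:K(\Gamma_1),\gamma_2:K(\Gamma_2),\gamma_e:\alpha_{p'}(\gamma_1,\gamma_2) \vdash r'(\gamma_e) : \beta_{q'}(\gamma_e, K(a_1)(\gamma_1), K(a_2)(\gamma_2))) \in \Th^\infty_{\Ehat}. \]
The task then reduces to converting $r'$, which lies over $\alpha_{p'}$ and $\beta_{q'}$, into an identification $r$ lying over the \emph{given} $\alpha_p$ and $\beta_q$.

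This is exactly where \cref{lem:approx_uip} enters: applied to the pairs $(p,p')$ and $(q,q')$, it supplies (merely, which again is enough since the overall goal is propositional) a homotopy $h$ between $\alpha_p$ and $\alpha_{p'}$ together with a dependent homotopy $G$ between $\beta_q$ and $\beta_{q'}$ over $h$. Given $\gamma_e : \alpha_p(\gamma_1,\gamma_2)$, the relational equivalence $h(\gamma_1,\gamma_2)$ produces a corresponding $\gamma_e' : \alpha_{p'}(\gamma_1,\gamma_2)$ together with a witness $\gamma_{ee}$ of their correspondence; feeding $\gamma_{ee}$ to $G$ yields an equivalence between $\beta_q(\gamma_e, -, -)$ and $\beta_{q'}(\gamma_e', -, -)$, through whose inverse I transport $r'(\gamma_e')$ to obtain the desired $r(\gamma_e)$. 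All of this takes place internally to the weak identity types of $\Th^\infty_{\Ehat}$, which is a $(\repinfty)$-CwF.

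I do not foresee a genuine obstacle here: the construction is entirely parallel to \cref{lem:sim_uip_simpl_ty} but one level higher (terms replacing types, and \cref{lem:approx_uip} replacing \cref{lem:ty_uniq}), and the management of propositional truncations is routine since the final conclusion is itself propositional. The only point that deserves a moment's care is verifying that the composite transport is genuinely natural in $\gamma_e$, so that it really defines a single family $r$ rather than a pointwise choice; this is automatic because every ingredient (the equivalence $h$, the dependent homotopy $G$, and the transport operations) is a family definable inside $\Th^\infty_{\Ehat}$.
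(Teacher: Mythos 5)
Your proposal matches the paper's proof: the reverse direction is immediate, and for the forward direction the paper likewise unpacks the truncation to get $p_a$, $q_a$, $r_a$, invokes \cref{lem:approx_uip} to obtain homotopies between $\alpha_p$ and $\alpha_{p_a}$ and between $\beta_q$ and $\beta_{q_a}$, and transports $r_a$ over these to land over the given $p$ and $q$. Your additional remarks on eliminating truncations into a propositional goal and on the pointwise mechanics of the transport are just a more explicit rendering of the same argument.
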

\begin{proof}
  The reverse implication is evident.
  For the forward implication, we can find $p_a : \Gamma_1 \approx \Gamma_2$ and $q_a : A_1 \approx_{p_a} A_2$ and an identification
  \[ (\gamma_1:K(\Gamma_1),\gamma_2:K(\Gamma_2),\gamma_e:\alpha_{p_a}(\gamma_1,\gamma_2) \vdash r_a(\gamma_e) : \beta_{q_a}(K(a_1)(\gamma_1),K(a_2)(\gamma_2))) \in \Th^\infty_{\Ehat}. \]

  By~\cref{lem:approx_uip}, we have homotopies between $\alpha_{p}$ and $\alpha_{p_a}$ and between $\beta_{q}$ and $\beta_{q_a}$.
  By transporting over these homotopies, we obtain
  \[ (\gamma_1:K(\Gamma_1),\gamma_2:K(\Gamma_2),\gamma_e:\alpha_{p}(\gamma_1,\gamma_2) \vdash r_a(\gamma_e) : \beta_{q}(K(a_1)(\gamma_1),K(a_2)(\gamma_2))) \in \Th^\infty_{\Ehat}, \]
  as needed.
\end{proof}

\begin{lem}\label{lem:sim_uip_simpl_hom}
  Take elements $p : \Gamma_1 \approx \Gamma_2$ and $q : \Delta_1 \approx_p \Delta_2$ and morphisms $f_1 : \Gamma_1 \to \Delta_1$ and $f_2 : \Gamma_2 \to \Delta_2$.
  Then $(f_1 \sim f_2) \in \widetilde{\Th_{\Ehat}}$ if and only if there merely exists an identification
  \[ (\gamma_1:K(\Gamma_1),\gamma_2:K(\Gamma_2),\gamma_e:\alpha_p(\gamma_1,\gamma_2) \vdash r(\gamma_e) : \alpha_q(K(f_1)(\gamma_1),K(f_2)(\gamma_2))) \in \Th^\infty_{\Ehat}. \]
\end{lem}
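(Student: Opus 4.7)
The plan is to closely mirror the proof of \cref{lem:sim_uip_simpl_tm}, since the statement is the exact analogue at the level of morphisms rather than terms. The reverse implication is immediate from the definition of $(f_1 \sim f_2) \in \widetilde{\Th_{\Ehat}}$: if we are given $p$, $q$ and an identification $r$ over $\alpha_p$ landing in $\alpha_q(K(f_1),K(f_2))$, then by truncating we obtain the mere existence of such data and hence $(f_1 \sim f_2) \in \widetilde{\Th_{\Ehat}}$.

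For the forward implication, I would first unfold the definition of $(f_1 \sim f_2) \in \widetilde{\Th_{\Ehat}}$ to obtain witnesses $p_0 : \Gamma_1 \approx \Gamma_2$ and $q_0 : \Delta_1 \approx \Delta_2$ together with an identification
\[ (\gamma_1:K(\Gamma_1),\gamma_2:K(\Gamma_2),\gamma_e:\alpha_{p_0}(\gamma_1,\gamma_2) \vdash r_0(\gamma_e) : \alpha_{q_0}(K(f_1)(\gamma_1),K(f_2)(\gamma_2))) \in \Th^\infty_{\Ehat}. \]
The goal is to convert $r_0$ (which lives over $\alpha_{p_0}$ and lands in $\alpha_{q_0}$) into the desired identification $r$ over $\alpha_p$ landing in $\alpha_q$. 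The key tool is \cref{lem:approx_uip}, applied separately to the pairs $(p,p_0)$ and $(q,q_0)$: it merely provides a homotopy $h_p$ between the equivalences $\alpha_p$ and $\alpha_{p_0}$, and a homotopy $h_q$ between $\alpha_q$ and $\alpha_{q_0}$.

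Transporting $r_0$ along $h_p$ in its domain of quantification and along $h_q$ in its codomain yields a term of the required form, namely
\[ (\gamma_1:K(\Gamma_1),\gamma_2:K(\Gamma_2),\gamma_e:\alpha_p(\gamma_1,\gamma_2) \vdash r(\gamma_e) : \alpha_q(K(f_1)(\gamma_1),K(f_2)(\gamma_2))) \in \Th^\infty_{\Ehat}, \]
which exists merely since $h_p$ and $h_q$ do. There is no real obstacle here: both transports are routine, and the propositional truncations compose without issue since the target of our statement is itself mere existence. If any subtlety arises, it will be bookkeeping around the fact that $h_p$ and $h_q$ are only merely available, which is harmless because we are already proving a propositionally truncated statement.
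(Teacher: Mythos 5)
Your proof is correct and follows exactly the route the paper intends: the paper's own proof is just the one-liner ``Similar to~\cref{lem:sim_uip_simpl_tm}'', and your argument is precisely that proof transposed to morphisms, i.e.\ unfold the mere existence, apply \cref{lem:approx_uip} to get homotopies $\alpha_p \simeq \alpha_{p_0}$ and $\alpha_q \simeq \alpha_{q_0}$, and transport the identification. The only cosmetic difference is that here the two homotopies are obtained independently (no dependent-homotopy coherence is needed since $\alpha_q$ does not live over $\alpha_p$), which you handle correctly.
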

\begin{proof}
  Similar to~\cref{lem:sim_uip_simpl_tm}.
\end{proof}

It remains to prove that these relations are preserved by all $(\reppre)$-CwF operations.
\begin{lem}
  The relations of $\widetilde{\Th_{\Ehat}}$ form a fibrant congruence over the $(\reppre)$-CwF $\Th_{\Ehat}$.
\end{lem}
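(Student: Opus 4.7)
The plan is to verify, one operation at a time, that every operation in the signature of a $(\reppre)$-CwF and every generating operation of the SOGAT $\Th$ sends related inputs to related outputs. Thanks to~\cref{lem:sim_uip_simpl_ty,lem:sim_uip_simpl_tm,lem:sim_uip_simpl_hom}, the particular choice of witnesses $p : \Gamma_1 \approx \Gamma_2$ and $q : A_1 \approx_{p} A_2$ is irrelevant up to truncation, so we are free to pick witnesses for the inputs and need only produce \emph{some} identification in $\Th^\infty_\Ehat$ for the output, without worrying about matching any prescribed one.

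For the $(\reppre)$-CwF operations, preservation is essentially baked into the definition of $({\approx})$. The $\Unit$-, $\Sigma$- and $\Pi_\rep$-type formers preserve $({\approx})$ because their cases of $({\approx})$ are defined componentwise: given witnesses $q_A$ and $q_B$ for the components, the witness for the formed type is the pair $(q_A, q_B)$. Context extension $\Gamma.A$ is handled analogously at the level of contexts. Preservation of types and representable types under substitution is exactly~\cref{lem:approx_subst}. Preservation under the pairing, projection and generic element operations reduces, via~\cref{lem:sim_uip_simpl_tm,lem:sim_uip_simpl_hom}, to straightforward manipulations of the equivalences $\alpha_p$ and $\beta_q$, using that $K$ is bijective on terms. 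Preservation under $\lam$ and $\app$ for $\Pi_\rep$-types is built into the clause of $\beta$ for $\Pi_\rep$. Identities and composition of morphisms are handled by~\cref{lem:approx_refl,lem:approx_trans} together with~\cref{lem:sim_uip_simpl_hom}.

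For each generating operation $\opf$ of $\Th$ with codomain $\iS$, preservation rests on the fact that the map $K : \Th \to \Th^\infty_\Ehat$ is a $(\reppre)$-CwF morphism, so $K$ commutes with $\opf$, and that $\Th^\infty_\Ehat$ has identity types and hence an action of $\opf$ on identifications (an $\ap$-type operation, derived from $\J$). Given related arguments to $\opf$, \cref{lem:sim_uip_simpl_tm} supplies an identification between their $K$-images over the equivalence $\alpha_p$; applying the action of $\opf$ in $\Th^\infty_\Ehat$ and composing with the transports hidden in $\beta_q$ yields the identification demanded by~\cref{lem:sim_uip_simpl_tm} for the outputs. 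The case of basic types $\iS(\sigma)$ is handled in the same way, using that a term-level identification between boundaries induces (by transport of $K(\iS)$ along it) a witness of the form required by the definition of $({\approx})$ on basic types.

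The main obstacle is bookkeeping: the verification is essentially routine once the simplification lemmas are in place, but there are many cases and one must track carefully how the various transports along $\alpha_p$ compose. The one genuine subtlety lies in the substitution-heavy cases, such as the generic element $\mathbf{q}$ and the substitution extension $\angles{-,-}$, where compatibility of substitution in $\Th_\Ehat$ with substitution along the equivalences in $\Th^\infty_\Ehat$ is precisely what~\cref{lem:approx_subst} is designed to supply; no further input is needed beyond that lemma and the preservation of equivalences by the type formers of a $(\repinfty)$-CwF.
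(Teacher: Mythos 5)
Your proof takes essentially the same route as the paper's: a case analysis over the operations of a $(\reppre)$-CwF, using \cref{lem:sim_uip_simpl_ty,lem:sim_uip_simpl_tm,lem:sim_uip_simpl_hom} to make the choice of witnesses $p$ and $q$ irrelevant, \cref{lem:approx_subst} for the substitution-heavy cases, and the componentwise definition of $({\approx})$ for context extension and the $\Unit$-, $\Sigma$- and $\Pi_\rep$-structures. Two cosmetic remarks: checking the generating operations of the SOGAT $\Th$ separately is redundant, since the congruence is over $\Th_{\Ehat}$ as a model of the GAT of $(\reppre)$-CwFs and the $\Th$-operations are then preserved automatically; and identity and composition of morphisms are handled in the paper by composing the identifications $r_f$ and $r_g$ after aligning the middle witness via \cref{lem:sim_uip_simpl_hom}, rather than via \cref{lem:approx_refl,lem:approx_trans} (which serve to make $\sim$ an equivalence relation).
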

\begin{proof}
  We prove that all operations are preserved.
  The proof is quite lengthy due to the large number of operations in the theory of $(\reppre)$-CwFs.
  However, all cases are rather straightforward.
  The main idea is to use~\cref{lem:sim_uip_simpl_ty}, \cref{lem:sim_uip_simpl_tm} and~\cref{lem:sim_uip_simpl_hom} to ensure that we do not obtain different elements of $\Gamma_1 \approx \Gamma_2$ or $A_1 \approx_p A_2$.
  \Cref{lem:approx_subst} is also needed in the rules that involve substitution.
  
  \begin{description}
  \item[Identity morphism]
    Take $p : \Gamma_1 \approx \Gamma_2$.
    Then over the context $(\gamma_1:K(\Gamma_1),\gamma_2:K(\Gamma_2),\gamma_e:\alpha_p(\gamma_1,\gamma_2))$, the identification
    \[ (\gamma_e : \alpha_p(K(\id)(\gamma_1),K(\id)(\gamma_2))) \]
    witnesses the fact that $\id_{\Gamma_1} \sim \id_{\Gamma_2}$.

  \item[Composition of morphisms]
    Take morphisms $f_1 : \Gamma_1 \to \Delta_1$, $g_1 : \Delta_1 \to \Theta_1$, $f_2 : \Gamma_2 \to \Delta_2$ and $g_2 : \Delta_2 \to \Theta_2$ such that
    $f_1 \sim f_2$ and $g_1 \sim g_2$.
    Relying on~\cref{lem:sim_uip_simpl_hom}, we can find elements
    $p_f : \Gamma_1 \approx \Gamma_2$, $q_f : \Delta_1 \approx \Delta_2$ and $q_g : \Theta_1 \approx \Theta_2$ and identifications
    \[ (\gamma_1:K(\Gamma_1),\gamma_2:K(\Gamma_2),\gamma_e:\alpha_{p_f}(\gamma_1,\gamma_2) \vdash r_f(\gamma_e) : \alpha_{q_f}(K(f_1)(\gamma_1),K(f_2)(\gamma_2))) \in \Th^\infty_{\Ehat} \]
    and 
    \[ (\delta_1:K(\Delta_1),\delta_2:K(\Delta_2),\delta_e:\alpha_{q_f}(\delta_1,\delta_2) \vdash r_g(\delta_e) : \alpha_{q_g}(K(g_1)(\delta_1),K(g_2)(\delta_2))) \in \Th^\infty_{\Ehat}. \]

    We then have, over the context $(\gamma_1:K(\Gamma_1),\gamma_2:K(\Gamma_2),\gamma_e:\alpha_{p_f}(\gamma_1,\gamma_2))$, an identification
    \[ (r_g(r_f(\gamma_e)) : \alpha_{q_g}(K(g_1 \circ f_1)(\gamma_1)), K(g_2 \circ f_2)(\gamma_2)) \in \Th^\infty_{\Ehat}, \]
    witnessing that $(g_1 \circ f_1) \sim (g_2 \circ g_2)$.

  \item[Functorial action on terms]
    Similar to composition of morphisms, using~\cref{lem:sim_uip_simpl_tm} instead.
    
  \item[Functorial action on types]
    Take morphisms $f_1 : \Gamma_1 \to \Delta_1$ and $f_2 : \Gamma_2 \to \Delta_2$ and types $(\Delta_1 \vdash A_1\ \type)$ and $(\Delta_2 \vdash A_2\ \type)$ such that $f_1 \sim f_2$ and $A_1 \sim A_2$.
    
    Using~\cref{lem:sim_uip_simpl_ty}, we can find elements $p_f : \Gamma_1 \approx \Gamma_2$, $q_f : \Delta_1 \approx \Delta_2$ and $q_A : A_1 \approx_{q_f} A_2$ and an identification
    \[ (\gamma_1:K(\Gamma_1),\gamma_2:K(\Gamma_2),\gamma_e:\alpha_{p_f}(\gamma_1,\gamma_2) \vdash r_f(\gamma_e) : \alpha_{q_f}(K(f_1)(\gamma_1),K(f_2)(\gamma_2))) \in \Th^\infty_{\Ehat}. \]

    We then conclude using~\cref{lem:approx_subst}.
    
  \item[Functorial action on representable types]
    Similar to the functorial action on types.
    
  \item[Context extension]
    Let $(\Gamma_1 \vdash A_1\ \type), (\Gamma_2 \vdash A_2\ \type) \in \Th_{\Ehat}$ be two types such that $A_1 \sim A_2$.

    We can find $p : \Gamma_1 \approx \Gamma_2$ and $q : A_1 \approx_p A_2$.
    Then $(p,q) : (\Gamma_1.A_1) \approx (\Gamma_2.A_2)$.
  \item[Substitution extension]
    Let $(\Delta_1 \vdash A_1\ \type), (\Delta_2 \vdash A_2\ \type) \in \Th_{\Ehat}$ be two types such that $A_1 \sim A_2$, let
    $f_1 : \Gamma_1 \to \Delta_1$ and $f_2 : \Gamma_2 \to \Delta_2$ be two morphisms such that $f_1 \sim f_2$, and let
    $(\Gamma_1 \vdash a_1 : A_1[f_1])$ and $(\Gamma_2 \vdash a_2 : A_2[f_2])$ be two terms such that $a_1 \sim a_2$.

    We can find elements $p_A : \Delta_1 \approx \Delta_2$, $q_A : A_1 \approx_{p_A} A_2$, $p_f : \Gamma_1 \approx \Gamma_2$, $q_f : \Delta_1 \approx \Delta_2$ along with identifications
    \[ (\gamma_1:K(\Gamma_1),\gamma_2:K(\Gamma_2),\gamma_e:\alpha_{p_f}(\gamma_1,\gamma_2) \vdash r_f(\gamma_e) : \alpha_{q_f}(K(f_1)(\gamma_1),K(f_2)(\gamma_2))) \in \Th^\infty_{\Ehat}. \]
    By~\cref{lem:approx_subst}, we have $q_A[r_f] : A_1[f_1] \approx_{p_f} A_2[f_2]$ and a homotopy between $\beta_{q_A[r_f]}$ and $\beta_{q_A}[r_f]$.

    Using~\cref{lem:sim_uip_simpl_tm}, we can then find
    \[ (\gamma_1:K(\Gamma_1),\gamma_2:K(\Gamma_2),\gamma_e:\alpha_{p_f}(\gamma_1,\gamma_2) \vdash r_a(\gamma_e) : \beta_{q_A[r_f]}(K(a_1)(\gamma_1),K(a_2)(\gamma_2))) \in \Th^\infty_{\Ehat} \]

    The homotopy between $\beta_{q_A[r_f]}$ and $\beta_{q_A}[r_f]$ has an underlying transport function
    \[ (\gamma_1,\gamma_2,\gamma_e:\alpha_{p_f}(\gamma_1,\gamma_2),a_1,a_2,a_e:\beta_{q_A[r_f]}(\gamma_e,a_1,a_2) \vdash h(\gamma_e,a_e) : q_A(r_f(\gamma_e), a_1,a_2)). \]

    Then the identification
    \[ ((r_f(\gamma_e),h(\gamma_e,r_a(\gamma_e))) : \alpha_{p_A,q_A}(K(\angles{f_1,a_1})(\gamma_1),K(\angles{f_2,a_2})(\gamma_2))) \in \Th^\infty_{\Ehat} \]
    over the context $(\gamma_1:K(\Gamma_1),\gamma_2:K(\Gamma_2),\gamma_e:\alpha_{p_f}(\gamma_1,\gamma_2))$
    witnesses that $\angles{f_1,a_1} \sim \angles{f_2,a_2}$.

  \item[Substitution projections]
    Let $f_1 : \Gamma_1 \to \Delta_1.A_1$ and $f_2 : \Gamma_2 \to \Delta_2.A_2$ be two morphisms such that
    $f_1 \sim f_2$ and $A_1 \sim A_2$.

    Using~\cref{lem:sim_uip_simpl_hom}, we can find elements $p_f : \Delta_1 \approx \Delta_2$, $p_A : \Gamma_1 \approx \Gamma_2$, $q_A : A_1 \approx_{p_A} A_2$ and an identification
    \[ (\gamma_1:K(\Gamma_1),\gamma_2:K(\Gamma_2),\gamma_e:\alpha_{p_f}(\gamma_1,\gamma_2) \vdash r_f(\gamma_e) : \alpha_{(p_A,q_A)}(K(f_1)(\gamma_1),K(f_2)(\gamma_2))) \in \Th^\infty_{\Ehat}. \]
    By definition of $\alpha_{(p_A,q_A)}$, we can decompose $r_f$ into identifications
    \[ (\gamma_1,\gamma_2,\gamma_e:\alpha_{p_f}(\gamma_1,\gamma_2) \vdash \pi_1(r_f)(\gamma_e) : \alpha_{p_A}(K(\pi_1(f_1))(\gamma_1)),K(\pi_1(f_2))(\gamma_2))) \in \Th^\infty_{\Ehat} \]
    and
    \[ (\gamma_1,\gamma_2,\gamma_e:\alpha_{p_f}(\gamma_1,\gamma_2) \vdash \pi_2(r_f)(\gamma_e) : \beta_{q_A}(\pi_1(r_f)(\gamma_e),K(\pi_2(f_1))(\gamma_1)),K(\pi_2(f_2))(\gamma_2))) \in \Th^\infty_{\Ehat}. \]

    Then $\pi_1(r_f)$ witnesses the fact that $\pi_1(f_1) \sim \pi_1(f_2)$.

    By~\cref{lem:approx_subst}, we have an element $q_A[r_f] : A_1[\pi_1(r_f)] \approx_{p_f} A_2[\pi_2(r_f)]$ and a homotopy between $\beta_{q_A[r_f]}$ and $\beta_{q_A}[r_f]$.
    Up to transport over this homotopy, $\pi_2(r_f)$ witnesses the fact that $\pi_2(f_1) \sim \pi_2(f_2)$.

  \item[$\Unit$-type structure]
    Straightforward.
    Note that we also need to consider the $\Unit$-type structure on representable types.

  \item[$\Sigma$-type structure] \hfill \\
    Similar to context extensions, substitution extension and substitution projections.
    We also need to consider the $\Sigma$-type structure on representable types.

  \item[$\Pi_\rep$-type structure] \hfill \\
    \begin{description}
    \item[Type former]
      Let $(\Gamma_1.A_1 \vdash B_1\ \type), (\Gamma_2.A_2 \vdash B_2\ \type)$ be two dependent types such that $A_1 \sim A_2$ and $B_1 \sim B_2$.
      Using~\cref{lem:sim_uip_simpl_ty}, we can find $p_A : \Gamma_1 \approx \Gamma_2$, $q_A : A_1 \approx^\rep_{p_A} A_2$ and $q_B : B_1 \approx_{(p_A,q_A)} B_2$.
      Then $(q_A,q_B) : \Pi_\rep(A_1,B_1) \approx_{p_A} \Pi_\rep(A_2,B_2)$, as needed.
    \item[Application]
      Let $(\Gamma_1 \vdash f_1 : \Pi_\rep(A_1,B_1)), (\Gamma_2 \vdash f_2 : \Pi_\rep(A_2,B_2))$ be two terms such that $f_1 \sim f_2$, and
      $(\Gamma_1 \vdash a_1 : A_1), (\Gamma_2 \vdash a_2 : A_2)$ be two terms such that $a_1 \sim a_2$.
      We have $p_A$, $q_A$ and $q_B$ as before.

      Using~\cref{lem:sim_uip_simpl_tm}, we can find identifications
      \[ (\gamma_1,\gamma_2,\gamma_e:\alpha_{p_A}(\gamma_1,\gamma_2) \vdash r_f(\gamma_e) : \beta_{(q_A,q_B)}(\gamma_e,K(f_1)(\gamma_1),K(f_2)(\gamma_2))) \in \Th^\infty_{\Ehat} \]
      and
      \[ (\gamma_1,\gamma_2,\gamma_e:\alpha_{p_A}(\gamma_1,\gamma_2) \vdash r_a(\gamma_e) : \beta_{q_A}(\gamma_e,K(a_1)(\gamma_1),K(a_2)(\gamma_2))) \in \Th^\infty_{\Ehat}. \]

      By definition of $\beta_{(q_A,q_B)}$, we can apply $r_f(\gamma_e)$ to $r_a(\gamma_e)$ to obtain an identification in
      \[ \beta_{q_B}(\gamma_e, K(\app(f_1,a_1))(\gamma_1), K(\app(f_2,a_2))(\gamma_2)) \]
      witnessing that $\app(f_1,a_1) \sim \app(f_2,a_2)$.
    \item[Lambda]
       Let $(\Gamma_1.A_1 \vdash b_1 : B_1), (\Gamma_2.A_1 \vdash b_2 : B_2)$ be two terms such that $b_1 \sim b_2$.
       We have $p_A$, $q_A$ and $q_B$ as before.
       By~\cref{lem:sim_uip_simpl_tm}, we can find an identification
       \[ (r_b(\gamma_e,a_e) : \beta_{q_B}((\gamma_e,a_e),K(b_1)(\gamma_1,a_1),K(b_2)(\gamma_2,a_2))) \in \Th^\infty_{\Ehat} \]
       over the context $(\gamma_1,\gamma_2,\gamma_e:\alpha_{p_A}(\gamma_1,\gamma_2),a_1,a_2,a_e:\beta_{q_A}(a_1,a_2))$.

       By definition of $\beta_{(q_A,q_B)}$, we can construct an element
       \begin{alignat*}{1}
         & \lam(\lambda (a_1 : A_1)\ ((a_2,a_e) := \overrightarrow{\beta_{q_A}}(a_1)) \mapsto r_b(\gamma_e,a_e)) \\
         & \quad : \beta_{(q_A,q_B)}(\gamma_e,K(\lam(b_1))(\gamma_1),K(\lam(b_2)(\gamma_2))) 
       \end{alignat*}
       over the context $(\gamma_1,\gamma_2,\gamma_e:\alpha_{p_A}(\gamma_1,\gamma_2))$.
       Thus $\lam(b_1) \sim \lam(b_2)$, as needed.
       \qedhere{}
    \end{description}
  \end{description}
\end{proof}

\begin{lem}\label{lem:sim_subseteq_kernel_helper}
  Given $p : \Gamma_1 \approx \Gamma_2$ we have $F(\Gamma_1) = F(\Gamma_2)$ and $G(\alpha_p)$ is (homotopic to) the identity equivalence in $\Th_E^1$.

  Given $q : A_1 \approx_p A_2$ we have $F(A_1) = F(A_2)$ and $G(\beta_p)$ is (homotopic to) the identity equivalence in $\Th_E^1$.
\end{lem}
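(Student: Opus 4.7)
The plan is to prove both statements simultaneously by mutual induction on the structure of contexts and types, following the inductive definition of $\approx$ and $\approx^\rep$.

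The context cases are essentially bookkeeping. For the empty context, $F(\diamond) = \diamond$ and $\alpha_{r_\diamond}$ is the trivial equivalence on $\Unit$, which $G$ sends to the identity. For context extension $(p,q) : \Gamma_1.A_1 \approx \Gamma_2.A_2$, the type part of the induction hypothesis gives $F(A_1) = F(A_2)$ (suitably indexed), so $F(\Gamma_1.A_1) = F(\Gamma_2.A_2)$. Similarly, since $\alpha_{(p,q)}$ is built by lifting $\alpha_p$ and $\beta_q$ to extended contexts, and this construction preserves homotopic-to-identity equivalences, the induction hypotheses for $\alpha_p$ and $\beta_q$ yield the claim for $\alpha_{(p,q)}$.

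The crux of the proof is the base case for types, $q : \iS(\sigma_1) \approx_p \iS(\sigma_2)$, which is the datum of an identification
\[ (\gamma_1, \gamma_2, \gamma_e : \alpha_p(\gamma_1,\gamma_2) \vdash q(\gamma_e) : K(\sigma_1)(\gamma_1) \simeq_{\partial \iS} K(\sigma_2)(\gamma_2)) \in \Th^\infty_{\Ehat}. \]
Applying $G$ and using the context induction hypothesis that $F(\Gamma_1) = F(\Gamma_2)$ (so that $G(K(\Gamma_1)) = G(K(\Gamma_2))$ by commutativity $G \circ K = L \circ F$) and that $G(\alpha_p)$ is homotopic to the identity, we can transport $G(q)$ to obtain, over the diagonal, an identification $L(F(\sigma_1))(\gamma) \simeq_{\partial \iS} L(F(\sigma_2))(\gamma)$ in $\Th^1_E$. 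Because $\Th^1_E$ is a $(\lccO)$-CwF, equality reflection turns this identification into a strict equality $L(F(\sigma_1)) = L(F(\sigma_2))$; by \cref{prop:the_the1_embedding}, $L$ is bijective on terms, so this descends to $F(\sigma_1) = F(\sigma_2)$ in $\Th_E$, whence $F(\iS(\sigma_1)) = F(\iS(\sigma_2))$. The corresponding claim for $\beta_q$ then follows because $\beta_q$ was defined by transport along $q$, and $G(q)$ is now (homotopic to) reflexivity, so the transport is (homotopic to) the identity equivalence.

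The remaining recursive cases ($\Unit$-types, $\Sigma$-types, $\Pi_\rep$-types, and representable types) are routine: each type former preserves strict equality and preserves the property of being homotopic to the identity equivalence, so the induction hypotheses on the components immediately yield the desired conclusions. The main obstacle, as sketched above, is the base case for types, where the interaction between equality reflection in $\Th^1_E$ and the bijectivity of $L$ on terms is what allows an identification in $\Th^1_E$ to be descended to a strict equality in $\Th_E$; this is the only point at which the $(\repO)$-structure of $\Th^1_E$ and \cref{prop:the_the1_embedding} are essentially used.
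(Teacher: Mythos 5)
Your proof is correct and follows essentially the same route as the paper: mutual induction on the structure of contexts and types, with the recursive cases handled by the fact that the type formers and the constructions of $\alpha$ and $\beta$ preserve identity equivalences, and the base case resolved by applying $G$, invoking equality reflection in $\Th_E^1$, and using the bijectivity of $L$ on terms to descend to a strict equality in $\Th_E$. (Minor slip: $\Th_E^1$ is a $(\repO)$-CwF, not a $(\lccO)$-CwF, as you note correctly at the end.)
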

\begin{proof}
  By mutual induction on the structure of contexts and terms.

  The recursive cases follow from the fact that $F$ is a $(\reppre)$-CwF morphism, that $G$ is a $(\repinfty)$-morphism and that the constructions of $\alpha_p$ and $\beta_q$ preserve identity equivalences.

  In the base case, we have $q : \iS(\sigma_1) \approx_p \iS(\sigma_2)$, \ie
  \[ (\gamma_1:K(\Gamma_1),\gamma_2:K(\Gamma_2),\gamma_e:\alpha_p(\gamma_1,\gamma_2) \vdash q(\gamma_e) : K(\sigma_1)(\gamma_1) \simeq_{\partial \iS} K(\sigma_2)(\gamma_2)) \in \Th_{\Ehat}^\infty. \]
  By applying $G$, we obtain
  \[ (\gamma_1,\gamma_2,\gamma_e:G(\alpha_p)(\gamma_1,\gamma_2) \vdash G(q)(\gamma_e) : G(K(\sigma_1))(\gamma_1) \simeq_{\partial \iS} G(K(\sigma_2))(\gamma_2)) \in \Th_{E}^1. \]
  By the induction hypothesis, $F(\Gamma_1) = F(\Gamma_2)$ and $\alpha_p$ is the identity equivalence.
  We also recall that $\Th_E^1$ satisfies equality reflection.
  This implies that $G(K(\sigma_1)) = G(K(\sigma_2))$, or equivalently $L(F(\sigma_1)) = L(F(\sigma_2))$.
  Since the actions of $L$ on terms are bijective, we have $F(\sigma_1) = F(\sigma_2)$.
  Furthermore, since $G(\alpha_q)$ is an equivalence defined by transport over an equality, it is the identity equivalence, as needed.
\end{proof}

\begin{lem}\label{lem:sim_subseteq_kernel}
  We have $\widetilde{\Th_{\Ehat}} \subseteq \ker_F$.
\end{lem}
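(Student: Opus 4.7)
The plan is to reduce every relation of $\widetilde{\Th_{\Ehat}}$ to the corresponding kernel condition by applying the commutative square of the introduction, together with~\cref{lem:sim_subseteq_kernel_helper}, equality reflection in $\Th^1_E$, and the bijectivity of $L$ on terms given by~\cref{prop:the_the1_embedding}. The context and type cases are exactly~\cref{lem:sim_subseteq_kernel_helper}, so only terms (hence also representable-type equalities, treated as a special case of types) and morphisms remain.

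For the term case, suppose $(a_1 \sim a_2) \in \widetilde{\Th_{\Ehat}}$. Then there merely exist $p : \Gamma_1 \approx \Gamma_2$, $q : A_1 \approx_p A_2$ and an identification
\[ (\gamma_1,\gamma_2,\gamma_e:\alpha_p(\gamma_1,\gamma_2) \vdash r(\gamma_e) : \beta_q(\gamma_e, K(a_1)(\gamma_1), K(a_2)(\gamma_2))) \in \Th^\infty_{\Ehat}. \]
By~\cref{lem:sim_subseteq_kernel_helper}, $F(\Gamma_1) = F(\Gamma_2)$, $F(A_1) = F(A_2)$, and both $G(\alpha_p)$ and $G(\beta_q)$ are homotopic to identity equivalences. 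Pushing $r$ through $G$ and using these homotopies, we obtain an identification between $G(K(a_1))$ and $G(K(a_2))$ in $\Th^1_E$ over the context $F(\Gamma_1) = F(\Gamma_2)$. By commutativity of the square, $G \circ K = L \circ F$, so this is an identification between $L(F(a_1))$ and $L(F(a_2))$. Since $\Th^1_E$ is a $(\repO)$-CwF, equality reflection yields $L(F(a_1)) = L(F(a_2))$, and then the bijectivity of $L$ on terms (\cref{prop:the_the1_embedding}) gives $F(a_1) = F(a_2)$, as required.

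The morphism case is entirely analogous, using~\cref{lem:sim_uip_simpl_hom} to simplify the data and replacing $\beta_q$ by $\alpha_q$. Given $(f_1 \sim f_2) \in \widetilde{\Th_{\Ehat}}$ with witnessing $p$, $q$ and identification $r(\gamma_e) : \alpha_q(K(f_1)(\gamma_1),K(f_2)(\gamma_2))$, the helper lemma makes $F$ identify the source and target contexts and makes $G(\alpha_p)$, $G(\alpha_q)$ identity equivalences. Applying $G$ to $r$ and using these homotopies produces an identification $L(F(f_1)) \simeq L(F(f_2))$ in $\Th^1_E$; equality reflection collapses it to $L(F(f_1)) = L(F(f_2))$, and then one extracts $F(f_1) = F(f_2)$ by bijectivity of $L$ on morphisms (which follows from bijectivity on terms together with preservation of context extensions, as morphisms in $\Th_E$ are given by substitution extensions of terms).

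The technical heart is already the helper lemma~\cref{lem:sim_subseteq_kernel_helper}; the present proof is the short downstream argument that converts the identifications witnessing $\widetilde{\Th_{\Ehat}}$ into definitional equalities by passing them through $G$ into $\Th^1_E$ and invoking equality reflection together with the bijectivity properties of $L$. The only point requiring care is to keep track of the homotopies between $G(\alpha_p)$, $G(\beta_q)$ (resp. $G(\alpha_q)$) and the identity equivalences, so that applying $G$ to the witness $r$ actually lands in the expected fibre where equality reflection is applicable.
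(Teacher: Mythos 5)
Your proof is correct and follows essentially the same route as the paper's: contexts and types are immediate from \cref{lem:sim_subseteq_kernel_helper}, and terms/morphisms are handled by pushing the witnessing identification through $G$, using that $G(\alpha_p)$, $G(\beta_q)$ are identity equivalences, invoking equality reflection in $\Th^1_E$, and concluding via the bijectivity of $L$ on terms. You merely spell out a few steps (the square $G \circ K = L \circ F$, equality reflection, and the reduction of morphism injectivity to term injectivity) that the paper leaves implicit.
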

\begin{proof}
  We prove the inclusions of the relations for each sort.
  \begin{itemize}
  \item Take $(\Gamma_1 \sim \Gamma_2) \in \widetilde{\Th_{\Ehat}}$.
    We have some $p : \Gamma_1 \approx \Gamma_2$. By~\cref{lem:sim_subseteq_kernel_helper}, $F(\Gamma_1) = F(\Gamma_2)$, as needed.
  \item Take $(A_1 \sim A_2) \in \widetilde{\Th_{\Ehat}}$.
    We have some $q : A_1 \approx_p A_2$. By~\cref{lem:sim_subseteq_kernel_helper}, $F(A_1) = F(A_2)$, as needed.
  \item Take $(a_1 \sim a_2) \in \widetilde{\Th_{\Ehat}}$.
    We have some $q : A_1 \approx_p A_2$ and an identification
    \[ (\gamma_1:K(\Gamma_1),\gamma_2:K(\Gamma_2),\gamma_e:\alpha_p(\gamma_1,\gamma_2) \vdash r(\gamma_e) : \beta_q(K(a_1)(\gamma_1),K(a_2)(\gamma_2))) \in \Th^\infty_{\Ehat}. \]
    By~\cref{lem:sim_subseteq_kernel_helper}, we know that $F(A_1) = F(A_2)$ and that $G(\beta_q)$ is the identity equivalence in $\Th_E^1$.
    Therefore, $G(r)$ proves that $G(K(a_1)) = G(K(a_2))$ in $\Th_E^1$, or equivalently $L(F(a_1)) = L(F(a_2))$.
    Since the actions of $L$ on terms are bijective, we have $F(a_1) = F(a_2)$, as needed.

  \item The case of morphisms is similar to the case of terms.
    \qedhere
  \end{itemize}
\end{proof}

\begin{lem}\label{lem:sim_marked_hpty}
  For every marked homotopy
  \[ (\gamma:\Gamma \vdash h(\gamma) : x(\gamma) \sim_{\iS(\sigma)} y(\gamma)) \in E, \]
  we have $(x \sim y) \in \widetilde{\Th_{\Ehat}}$
\end{lem}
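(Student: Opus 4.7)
The plan is to unfold the definition of $\widetilde{\Th_{\Ehat}}$ on terms and use partial saturation of $\Th_{\Ehat}^\infty$ with respect to $E$ to construct the required witnesses directly from the marked homotopy $h$.

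First, I would invoke the partial saturation of $\Th_{\Ehat}^\infty$ with respect to $E$ applied to the given marked homotopy $h$, obtaining its lift to an outer identification
\[ (\gamma:K(\Gamma) \vdash \widehat{h}(\gamma) : K(x)(\gamma) \simeq_{K(\iS(\sigma))} K(y)(\gamma)) \in \Th_{\Ehat}^\infty. \]
(The accompanying coherence $\widetilde{h}$ between $\idToHpty_\iS(\widehat{h})$ and $K(h)$ is not needed here.)

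Second, by the definition of $\widetilde{\Th_{\Ehat}}$ on terms, in order to establish $(x \sim y) \in \widetilde{\Th_{\Ehat}}$ it suffices to exhibit some $p : \Gamma \approx \Gamma$, some $q : \iS(\sigma) \approx_p \iS(\sigma)$, and an identification
\[ (\gamma_1,\gamma_2,\gamma_e:\alpha_p(\gamma_1,\gamma_2) \vdash r(\gamma_e) : \beta_q(\gamma_e, K(x)(\gamma_1), K(y)(\gamma_2))) \in \Th_{\Ehat}^\infty. \]
I would take $p \triangleq r_\Gamma$ and $q \triangleq r_{\iS(\sigma)}$ supplied by~\cref{lem:approx_refl}, which come equipped with a homotopy $H_\alpha$ between $\alpha_{r_\Gamma}$ and the identity equivalence on $K(\Gamma)$, and a homotopy $H_\beta$ between $\beta_{r_{\iS(\sigma)}}$ and the identity equivalence on $K(\iS(\sigma))$.

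Third, I would construct $r$ by transporting $\widehat{h}$ along $H_\alpha$ and $H_\beta$. From an input $\gamma_e : \alpha_{r_\Gamma}(\gamma_1,\gamma_2)$, the homotopy $H_\alpha$ extracts an identification $\gamma_1 \simeq \gamma_2$ in $K(\Gamma)$; applying the functorial action of $K(y)$ and $K(\sigma)$ to this identification, and composing with $\widehat{h}(\gamma_1)$, produces an element witnessing that $K(x)(\gamma_1)$ and $K(y)(\gamma_2)$ are identified under the identity equivalence; finally $H_\beta$ transports this element into $\beta_{r_{\iS(\sigma)}}(\gamma_e, K(x)(\gamma_1), K(y)(\gamma_2))$.

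The lemma is essentially immediate once partial saturation is invoked; the only work is bookkeeping the transports between the reflexivity witnesses and the identity equivalences, which is purely formal. I don't anticipate a significant obstacle here, only some care in lining up the coherence data so that the final $r(\gamma_e)$ inhabits exactly the type demanded by the definition of $\widetilde{\Th_{\Ehat}}$ on terms.
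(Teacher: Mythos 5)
Your proposal is correct and matches the paper's proof essentially step for step: both take $p = r_\Gamma$ and $q = r_{\iS(\sigma)}$ from~\cref{lem:approx_refl}, reduce along the homotopies to the identity equivalences, and then supply the lift $\widehat{h}$ obtained from partial saturation. The only difference is that you spell out the transport bookkeeping slightly more explicitly than the paper does.
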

\begin{proof}
  Take a marked homotopy $h$ as in the statement.
  
  Let $p : \Gamma \approx \Gamma$ and $q : \iS(\sigma) \approx_p \iS(\sigma)$ be $r_\Gamma$ and $r_{\iS(\sigma)}$ from~\cref{lem:approx_refl}.

  It then suffices to construct an element of type 
  \[ \alpha_q(K(x)(\gamma_1), K(y)(\gamma_2)) \]
  over the context $(\gamma_1:K(\Gamma),\gamma_2:K(\Gamma),\alpha_p(\gamma_1,\gamma_2))$.

  Since $\alpha_p$ and $\alpha_q$ are homotopic to identity equivalences, it suffices to construct an element of type
  \[ K(x)(\gamma) \simeq K(y)(\gamma) \]
  over the context $(\gamma:K(\Gamma))$.

  We can then pick the element $\widehat{h}(\gamma) : K(x)(\gamma) \simeq K(y)(\gamma)$.
\end{proof}

\subsection{Proof of the main theorem}

We can now prove~\cref{thm:main_theorem}.
\begin{proof}[Proof of~\cref{thm:main_theorem}]
  We have constructed a fibrant congruence $\widetilde{\Th_{\Ehat}}$ over the $(\reppre)$-CwF $\Th_{\Ehat}$ and we can consider the quotient $\Quot(\widetilde{\Th_{\Ehat}})$ in $(\reppre)$-CwF.

  By~\cref{cor:fib_cong_tfib}, the quotient inclusion $\quot : \Th_{\Ehat} \to \Quot(\widetilde{\Th_{\Ehat}})$ is a non-split trivial fibration.

  By~\cref{lem:sim_subseteq_kernel} and the universal property of the quotient, the map $F : \Th_{\Ehat} \to \Th_E$ factors through the quotient inclusion.
  We obtain a map $R : \Quot(\widetilde{\Th_{\Ehat}}) \to \Th_E$.

  By~\cref{lem:sim_marked_hpty} and the universal property of $\Th_E$, we obtain a section $S : \Th_E \to \Quot(\widetilde{\Th_{\Ehat}})$.
  
  \[ \begin{tikzcd}
      &
      \Quot(\widetilde{\Th_{\Ehat}})
      \ar[rd, "R"']
      & \\
      \Th_{\Ehat}
      \ar[rr, "F"']
      \ar[ru, "\quot"]
      &&
      \Th_E
      \ar[lu, bend right, "S"']
    \end{tikzcd} \]

  The map $F$ is then a section of the quotient inclusion $\quot$.
  Since non-split trivial fibrations are closed under retracts, the map $F$ is a non-split trivial fibration.
\end{proof}

Our main theorem directly follows.
\begin{proof}[Proof of~\cref{thm:main_theorem_cor}]
  By~\cref{lem:weq_ehat} the map $\Th \to \Th_{\Ehat}$ is a weak equivalence in $\CMod_\Th$.
  By~\cref{thm:main_theorem} the map $\Th \to \Th_E$ is a non-split trivial fibration in $\CCwf_\reppre$, hence a non-split weak equivalence in $\CMod_\Th$.
  Their composite $\Th \to \Th_E$ is then also a non-split weak equivalence in $\CMod_\Th$.
\end{proof}

%%% Local Variables:
%%% mode: latex
%%% TeX-master: "main"
%%% End:

\section{Future directions}\label{sec:discussion}

\subsection{Alternative proofs of theorem~8.1}

The proof of~\cref{thm:main_theorem} is very close to Hofmann's original proof of the conservativity of extensional type theory over intensional type theory.
We discuss potential ways to obtain simpler proofs of~\cref{thm:main_theorem}.

\subsubsection*{Local universes}

In presence of enough universes in $\Th_\Ehat^\infty$, there would be a simpler way to define the relations of the congruence $\widetilde{\Th_\Ehat}$.
Indeed, we could say that $(\Gamma \sim \Delta) \in \widetilde{\Th_\Ehat}$ when there merely exists $p : \mathsf{code}(K(\Gamma)) \simeq_\CU \mathsf{code}(K(\Delta))$, where $\mathsf{code}(-)$ gives the code of a context or type in the universe.

In the absence of universe, we can try to use local universes.
A local universe in $\Th_\Ehat^\infty$ is a type $(V \vdash E\ \type) \in \Th_\Ehat^\infty$.
By induction on the shape of contexts and types of $\Th_\Ehat$, we would define for every $\Gamma \in \Th_\Ehat$ a local universe $(V_\Gamma,E_\Gamma)$ and an element $(\chi_\Gamma : V_\Gamma) \in \Th_\Ehat^\infty$ such that $K(\Gamma) = E_\Gamma[\chi_\Gamma]$.
We could then say that $(\Gamma \sim \Delta) \in \widetilde{\Th_\Ehat}$ when $(V_\Gamma,E_\Gamma) = (V_\Delta,E_\Delta)$ and there merely exists $p : \chi_\Gamma \simeq_{V_\Gamma} \chi_\Delta$.

Unfortunately, it is not possible to define these local universes: we would need arbitrary $\Pi$-types in $\Th_\Ehat^\infty$, but we only have $\Pi_\rep$-types.
If we could conservatively embed $\Th_\Ehat^\infty$ into a $(\lccinfty)$-CwF, this method would be applicable.

\subsubsection*{Higher-order congruences}

Another possible simplification would be obtained by working internally to $\CPsh(\Th_{\Ehat})$ instead of externally.

Consider the following external diagram:
\[
  \begin{tikzcd}
    \Th
    \ar[r]
    \ar[d]
    & \Th_E
    \ar[d, equal]
    \\
    \Th_{\Ehat}
    \ar[r, "F"]
    \ar[d, "K"]
    & \Th_E
    \ar[d, "L"]
    \\
    \Th_{\Ehat}^{\infty}
    \ar[r, "G"]
    &
    \Th_E^1\rlap{\ .}
  \end{tikzcd}
\]

Following ideas from~\cite{InternalSconing}, it is possible to internalize this diagram internally to $\CPsh(\Th_{\Ehat})$ as follows:
\[
  \begin{tikzcd}
    \Th
    \ar[r]
    \ar[d]
    & \Th_E
    \ar[d]
    \\
    \CTele_{\MT_{\Ehat}}
    \ar[r]
    \ar[d]
    & \CI_E
    \ar[d]
    \\
    \CI_{\Ehat}^{\infty}
    \ar[r]
    &
    \CI_E^1\rlap{\ .}
  \end{tikzcd}
\]
Here $\CTele_{\MT_{\Ehat}}$ is an internal $(\reppre)$-CwF called the \emph{telescopic contextualization} of the family $\MT_{\Ehat}$; its contexts are the telescopes of the family $\MT_{\Ehat}$.
Similarly, $\CI_{\Ehat}^\infty$, $\CI_E$ and $\CI_E^1$ are internalization of $\Th_{\Ehat}$, $\Th_E$ and $\Th_E^1$.
Using the restriction operation from~\cite{InternalSconing}, they would be defined as $\CI_{\Ehat}^\infty \triangleq K^\ast(\CTele_{\MT_\Ehat^\infty})$,  $\CI_{E} \triangleq F^\ast(\CTele_{\MT_E})$ and  $\CI_{\Ehat}^\infty \triangleq K^\ast(G^\ast(\CTele_{\MT_E^1}))$.
The internal versions of $\Th$ and $\Th_E$ should be constant over , \ie defined by restriction of their external counterparts over the unique functor $\Th_{\Ehat} \to 1_\CCat$.

It should then be possible to define a ``higher-order congruence'' over $\MT_{\Ehat}$.
The advantage of this approach is that higher-order congruences only consist of equivalence relations on types and terms, not on contexts and substitutions;
we would avoid the parts of the current proof that have to deal with contexts and substitutions.

This higher-order congruence should induce a fibrant ``first-order'' congruence over the $(\reppre)$-CwF $\CTele_{\MT_\Ehat}$.
We can then consider the internal quotient of this fibrant congruence; it fits in the following diagram.
\[
  \begin{tikzcd}
    \Th
    \ar[rr]
    \ar[dd]
    && \Th_E
    \ar[ld]
    \ar[dd]
    \\
    & Q
    \ar[rd]
    &
    \\
    \CTele_{\MT_{\Ehat}}
    \ar[ru]
    \ar[rr]
    && \CI_E\rlap{\ .}
  \end{tikzcd}
\]

By externalizing the above diagram, \ie restricting it along the functor $1_{\Th_{\Ehat}} : 1_\CCat \to \Th_{\Ehat}$, we obtain the following external diagram:
\[
  \begin{tikzcd}
    \Th
    \ar[rr]
    \ar[dd]
    && \Th_E
    \ar[ld]
    \ar[dd, equal]
    \\
    & 1_{\Th_\Ehat}^\ast(Q)
    \ar[rd]
    &
    \\
    \Th_{\Ehat}
    \ar[ru]
    \ar[rr]
    && \Th_E\rlap{\ .}
  \end{tikzcd}
\]

We would then conclude as before, by observing that $\Th_{\Ehat} \to \Th_E$ becomes a retract of the quotient inclusion $\Th_{\Ehat} \to 1_{\Th_\Ehat}^\ast(Q)$.

This proof strategy requires some additional tools for axiomatizing models, morphisms of models and their properties in the internal language of presheaf toposes.

\subsection{Conjectures}

This paper includes several conjectures (\cref{clm:ff_sign}, \cref{clm:ext_univ_model_structure}, \cref{clm:characterization_morita_eqv}, \cref{conj:contractible_extension}, \cref{conj:embedding_from_ext_univ}).

The first three conjectures (\cref{clm:ff_sign}, \cref{clm:ext_univ_model_structure} and \cref{clm:characterization_morita_eqv}) should be mostly unproblematic, at least in a classical setting; The author has already drafts of their proofs.
It may however be rather difficult to obtain constructive proofs.

Results analogous to~\cref{clm:ff_sign} are well-known for algebraic theories and essentially algebraic theories, and it is possible to extend these results to first-order GATs.
For SOGATs, it seems that the simplest approach is a reduction to first-order GATs.
Indeed, the underlying $\lexpre$-CwF of a SOGAT $\Th$ is a first-order GAT that classifies the contextual models of $\Th$ (see~\cite[Theorem~7.30]{GeneralFrameworkTT}).

Proving~\cref{clm:ext_univ_model_structure} is mainly a matter of translating between the weakly stable identity types of $\Th$ and properties of the weak factorization systems of $\CMod_\Th^\cxl$ with regard to the fully faithful functor $\Init_\Th[-] : \Th^\op \to \CMod_\Th^\cxl$.
The theory $\Th$ being finitary explains why considering the finitely generated models of $\Th$ suffices.

The proof of~\cref{clm:characterization_morita_eqv} has already been sketched in~\cref{prf:characterization_morita_eqv}.

\Cref{conj:contractible_extension} should follow from strictification theorems for weakly stable identity types.
An approach for solving these coherence problems is discussed in~\cref{ssec:coherence_problems}.

\Cref{conj:embedding_from_ext_univ} should follow from gluing arguments that are very similar to the gluing arguments used in the proofs of~\cref{prop:the_the1_embedding}.
The main issue is that one likely has to use the internal language of some $\infty$-topos, which is not easily available in our setting.
A possible solution is also discussed in~\cref{ssec:coherence_problems}.

\subsection{Coherence problems and the Yoneda embedding}\label{ssec:coherence_problems}

Coherence problems arise from the use of $(\lexinfty)$- or $(\repinfty)$- CwFs to morally represent structured $\infty$-categories.
In particular, the notion of external univalence is expressed using weakly stable identity types.
Thus, in order to rely on external univalence, the two choices are either to only use weakly stable identity types, or to prove coherence theorem that allow for the strictification of the weakly stable identity types.

The Yoneda embedding is one of the most important constructions of category theory.
For applications, it seems very important to access the $\infty$-categorical Yoneda embedding at the level of $(\repinfty)$-CwFs.
If $\CC$ is a cofibrant $(\repinfty)$-CwF, it should be possible to find a model $\CD$ of HoTT (corresponding to $\infty$-presheaves over $\CC$) and a $(\repinfty)$-CwF morphism $\CC \to \CD$ that is essentially surjective on terms (corresponding to the fully faithful Yoneda embedding).

We hope that both of these issues will be solved in the future using cubical methods, thanks to the fact that the cubical model of HoTT is constructive (using a suitable cube category).
If $\CC$ is any $1$-category with a terminal object, then the category $[\CC^\op, \mathbf{cSet}]$ of cubical presheaves is equipped with a model of HoTT.
Indeed, we can construct the cubical set model internally to $\CPsh(\CC)$.
This internal construction corresponds externally to a functor $\CC \to \CMod_{\mathsf{HoTT}}^\op$.
By composing this functor with $1_\CC : 1_\CCat^\op \to \CC^\op$, we obtain an external model of HoTT; it can be shown that its underlying category is the category $[\CC^\op, \mathbf{cSet}]$ of cubical presheaves over $\CC$.
When $\CC$ is a $(\lexinfty)$-CwF or a $(\reppre)$-CwF, we then want to compare $\CC$ with this model $[\CC^\op, \mathbf{cSet}]$.
It should then suffice to compare, internally to $\CPsh(\CC)$, the CwF $\SSet_{\mathsf{rf}}$ of sets and Reedy-fibrant types with the cubical set model $\mathsf{cSet}$.
This problem seems tractable (using, for example, ideas from two-level type theory (2LTT) and the development of Reedy-fibrant diagrams in 2LTT~\cite{2LTT}).

\subsection{Proving $0$-truncatedness from normalization}

In order to apply our methods, we would need to be able to prove the (relative) $0$-truncatedness of $\Th_\Ehat^\infty$ in concrete examples.
We believe that such results can be obtained by lifting normalization proofs from $\Th_E$ to $\Th_{\Ehat}^\infty$.
These normalization results should even provide direct proofs of the fact that $\Th_{\Ehat} \to \Th_E$ is a \emph{split} trivial fibration, without going through any quotient.

Strict normalization for $\Th_E$ states that every term of $\Th_E$ admits a unique normal form.
Normalization is usually proven for the syntax of type theory, that is for the initial model $\Init_{\Th_E}$.
Proving normalization for $\Th_E$ instead is akin to proving normalization for all freely generated models of $\Th_E$ at once.
For most theories, proving normalization for $\Th_E$ is not more complicated than proving normalization for $\Init_{\Th_E}$.

Homotopy normalization for $\Th_{\Ehat}^\infty$ should then state that every term of $\Th_{\Ehat}^\infty$ admits a contractible space of normal forms (over contexts in the image of $\Th \to \Th_\Ehat^\infty$).
Stated appropriately, this is sufficient to obtain the $0$-truncatedness of $\Th_\Ehat^\infty$.
Proving homotopy normalization for $\Th_{\Ehat}^\infty$ is however much more complicated than proving strict normalization.
Indeed, where strict normalization proofs for type theory make use of set-valued logical relations, homotopy normalization proofs should use space-valued logical relations.

For this we essentially need to use $\infty$-presheaves over some $\infty$-categories, and the interpretation of HoTT into $\infty$-toposes.
But because we have used $(\lexinfty)$-CwFs or $(\repinfty)$-CwFs instead of other models of $\infty$-categories, we cannot directly use the known interpretation of HoTT into $\infty$-toposes.
Hopefully, the cubical methods mentioned in~\cref{ssec:coherence_problems} can be used instead.

We note that Uemura has recently written a proof of normalization and $0$-truncatedness for the initial models of $\infty$-type theories~\cite{UemuraNormalization}.
Only a small $\infty$-type theory with only $\Pi$-types has been considered so far, but the methods should also work for other $\infty$-type theories.
It does not seem realistic to directly use these results for our purposes (although it would be possible in principle, given a precise comparison between $(\repinfty)$-CwFs and $\infty$-type theories (representable map $\infty$-categories)).
But the fact that $0$-truncatedness results can be obtained for $\infty$-type theories indicates that similar methods should also yield the $0$-truncatedness results we need.

%%% Local Variables:
%%% mode: latex
%%% TeX-master: "main"
%%% End:

\section*{Acknowledgements}
The author would like to thank Thorsten Altenkirch, Martin Bidlingmaier, Paolo Capriotti, Thierry Coquand, Simon Huber, Ambrus Kaposi, András Kovács, Nicolai Kraus, Chaitanya Leena Subramaniam, Christian Sattler and Bas Spitters for expressing interest in this work and helpful discussions.

\bibliographystyle{alpha}
\bibliography{main}

\end{document}